\documentclass[12pt]{article}

\usepackage[makeindex,split,idxcommands]{splitidx}
\usepackage{bbm}
\usepackage{amssymb}
\usepackage{amsthm}
\usepackage{amsfonts}
\usepackage{graphicx}
\usepackage{verbatim}
\usepackage{enumerate}
\usepackage[intlimits]{amsmath}
\usepackage{fullpage}
\usepackage[numbers]{natbib}

\makeindex

\newindex[Notation Index]{symbols}
\newindex[Subject Index]{notions}
\newif\ifhyper\IfFileExists{hyperref.sty}{\hypertrue}{\hyperfalse}
\ifhyper\usepackage{hyperref}\fi

\newif\ifdraft
\drafttrue
\numberwithin{equation}{section}
\numberwithin{figure}{section}

\newtheorem{theorem}{Theorem}
\numberwithin{theorem}{section}
\newtheorem{corollary}[theorem]{Corollary}
\newtheorem{lemma}[theorem]{Lemma}

\newtheorem{definition}[theorem]{Definition}
\theoremstyle{definition}
\newtheorem{example}[theorem]{Example}
\newtheorem{remark}[theorem]{Remark}
\newtheorem{claim}{Claim}

\usepackage[english]{cleveref}

\newcommand{\E}{\mathbb{E}}

\newcommand{\G}{\mathbb{G}}

\newcommand{\R}{\mathbb{R}}

\newcommand{\Z}{\mathbb{Z}}
\newcommand{\N}{\mathbb{N}}

\newcommand{\EEE}{\mathcal{E}}

\newcommand{\KK}{\mathcal{K}}
\newcommand{\I}{\mathbb{I}}
\newcommand{\PPP} {{\mathbb P}}

\newcommand{\FF}{{\bf F}}

\newcommand{\hh}{\mathbb{H}}
\def\FF{\mathcal F}
\def\KK{\mathcal K}
\def\CC{\mathcal C}
\def\LL{\mathcal L}

\def\eval{\mathfrak{eval}}
\def\hhh{\mathfrak{h}}

\def\tw{\mathrm{tv}}

\def \P {{\mathcal P}}

\def \_reg {\rightarrow_{\bf reg}}

\def\maxdeg/{\Delta}

\def\ee{\mathfrak{e}}
\def\eE{\mathfrak{E}}

\allowdisplaybreaks

\makeatletter
\newcommand{\subjclass}[2][1991]{%
	\let\@oldtitle\@title%
	\gdef\@title{\@oldtitle\footnotetext{#1 \emph{Mathematics subject classification.} #2}}%
}
\newcommand{\keywords}[1]{%
	\let\@@oldtitle\@title%
	\gdef\@title{\@@oldtitle\footnotetext{\emph{Key words and phrases.} #1.}}%
}
\makeatother

\begin{document}
\newcommand{\na}{\mathbf{a}}
\newcommand{\nb}{\mathbf{b}}
\newcommand{\nc}{\mathbf{c}}
\newcommand{\inj}{\mathrm{inj}}
\newcommand{\ind}{\mathrm{ind}}
\newcommand{\Sym}{\mathrm{Sym}}
\newcommand{\Pd}{\mathrm{Pd}}
\newcommand{\du}{\mathrm{d}}

\title{\textbf{Limits of CSP Problems and Efficient Parameter Testing}}
\author{Marek Karpinski\thanks{Dept. of Computer Science and the Hausdroff Center for Mathematics, University of Bonn. Supported in part by DFG grants, the Hausdorff grant EXC59-1. Research partly supported by Microsoft Research New England. E-mail: \textrm{marek@cs.uni-bonn.de}}
\and 
Roland Mark\'o\thanks{Hausdorff Center for Mathematics, University of Bonn. Supported in part by a Hausdorff scholarship. E-mail: \textrm{roland.marko@hcm.uni-bonn.de}}}

\keywords{Approximation, graphs, hypergraphs, graph limits, constraint satisfaction problems, hypergraph parameter testing, exchangeability}

\date{}
\maketitle
\begin{abstract}
{\normalsize We present a unified framework on the limits of \emph{constraint satisfaction problems} (CSPs) and efficient parameter testing which depends only on array exchangeability and the method of cut decomposition without recourse to the weakly regular partitions. In particular, we formulate and prove a representation theorem for compact colored $r$-uniform directed hypergraph ($r$-graph) limits, and apply this to $r$CSP limits. We investigate the sample complexity of testable $r$-graph parameters, we discuss the generalized ground state energies and demonstrate that they are efficiently testable.  }
\end{abstract}

\section{Introduction} \label{sec:intro}

We study the limits and efficient parameter testing properties of \emph{Maximum Constraint Satisfaction Problems} of arity~$r$ (MAX-$r$CSP or $r$CSP for short), c.f. e.g., \cite{AVKK2}. These two topics, limiting behavior and parameter estimation, are treated in the paper to a degree separately, as they require a different set of ideas and could be analyzed on their own right. The establishment of the underlying connection between convergence and testability is one of the main applications of the limit theory of dense discrete structures, see \cite{BCL}, \cite{BCL2}.

In the first part of the paper we develop a general framework 
for the above CSP problems which depends only on 
the principles of the array exchangeability without a recourse to the weakly regular partitions used hitherto in the general graph and hypergraph settings. Those fundamental techniques and results were worked out in a series of papers by  Borgs, Chayes, Lov\'asz, S\'os, Vesztergombi and Szegedy \cite{BCL},\cite{BCL2},\cite{LSzlim}, and \cite{LSztest} for graphs including connections to statistical physics and complexity theory, and were subsequently extended to hypergraphs by Elek and Szegedy \cite{ESz} via the ultralimit method. The central concept of $r$-graph convergence is defined through convergence of sub-$r$-graph densities, or equivalently through weak convergence of probability measures on the induced sub-$r$-graph yielded by uniform node sampling. Our line of work particularly relies on ideas presented in \cite{DJ} by Diaconis and Janson, where the authors shed some light on the correspondence between combinatorial aspects (that is, graph limits via weak regularity) and the probabilistic viewpoint of sampling: Graph limits provide an infinite random graph model that has the property of exchangeability. The precise definitions, references and results will be given in Section \ref{sec:lim}, here we only formulate our main contribution informally: We prove a representation theorem for compact colored $r$-uniform directed hypergraph limits. This says that every limit object in this setup can be transformed into a measurable function on the $(2^r-2)$-dimensional unit cube that takes values from the probability distributions on the compact color palette, see Theorem \ref{ch2:reprmain} below. This extends the result of Diaconis and Janson  \cite{DJ}, and of Lov\'asz and Szegedy \cite{LSzcom}. As an application, the description of the limit space of $r$CSPs is presented subsequent to the aforementioned theorem.
  
The second part of the paper, Sections \ref{ch3:sec.lim} to \ref{sec:appl}, is dedicated to the introduction of a notion of efficient parameter testability of $r$-graphs and $r$CSP
problems. We use the limit framework from the first part of the paper to formulate several results on it, which are proved with the aid of the cut decomposition method. We set our focus especially on parameters called \emph{ground state energies} and study variants of them. These are in close relationship with MAX-$r$CSP problems, our results can be regarded as the continuous generalization of the former. We rely on the notion of parameter testing and sample complexity, that was introduced by Goldreich, Goldwasser, and Ron \cite{GGR} and was employed in the graph limit theory in \cite{BCL2}. A graph parameter is testable in the sense of \cite{BCL2}, when its value is estimable through a uniform sampling process, where the sample size only depends on the desired error gap, see Definition \ref{ch3:deftest} below for the precise formulation. The characterization of the real functions on the graph space was carried out in \cite{BCL2}, the original motivation of the current paper was to provide an analogous characterization for efficiently testable parameters. These latter are parameters, whose required sample size for the estimation is at most polynomial in the multiplicative inverse of the error. 

The investigation of such parameters has been an active area of research for the finite setting in complexity theory. The method of exhaustive sampling in order to approximately solve NP-hard problems was proposed by Arora, Karger and Karpinski  \cite{AKK2}, their upper bound on the required sample size was still logarithmically increasing in the size of the problem. The approach in \cite{AKK2} enabled the employment of linear programming techniques. Subsequently, the testability of MAX-CUT was shown in \cite{GGR}, explicit upper bounds for the sample complexity in the general boolean MAX-$r$CSP were given by Alon, F. de la Vega, Kannan and Karpinski \cite{AVKK2} using cut decomposition of $r$-arrays and sampling, that was inspired by the introduction of weak regularity by Frieze and Kannan \cite{FK}. In \cite{AVKK2} and \cite{FK}, the design of polynomial time approximation schemes (PTAS) in order to find not only an approximate value for MAX-$r$CSP, but also an assignment to the base variables that certify this value was an important subject, we did not pursue the generalization regarding this aspect in the current work. The achievements of these two aforementioned contributions turned out to be highly influential, and took also a key role in the first elementary treatment of graph limits and in the definition of the $\delta_\square$-metric in \cite{BCL} that defines an equivalent topology on the limit space to the subgraph density convergence.

The best currently known upper bound on the sample complexity of MAX-$r$CSP is $\mathcal{O}(\varepsilon^{-4})$, and has been shown by Mathieu and Schudy \cite{MS}, see also Alon, F. de la Vega, Kannan and Karpinski \cite{AVKK2}. Unfortunately, the approach of \cite{MS} does not seem to have a natural counterpart in the continuous setting, although one can use their result on the \emph{sample} to achieve an improved upper bound on the sample complexity.  We mention that for the original problem we do not aim to produce an assignment for MAX-$r$CSP, or a partition for the ground state energy whose evaluation is nearly optimal as opposed to the above works, although we believe this could be done without serious difficulties.  

Our contribution in the second part of the paper is the following. By employing a refined version of the proof of the main result of  \cite{AVKK2} adapted to the continuous setting we are able to prove the analogous efficient testability result  for a general finite state space for ground state energies, see Theorem \ref{ch4:main} in Section \ref{ch4:sec:gse} for a precise formulation. Among the applications of this development we analyze the testability of the microcanonical version of ground state energies providing the first explicit upper bounds on efficiency. For the finite version a similar question was investigated by F. de la Vega, Kannan and Karpinski \cite{VKK} by imposing additional global constraints (meaning a finite number of them with unbounded arity).  Furthermore, the continuous version of the quadratic assignment problem is treated the first time in a sample complexity context, this subject is related to the recent contributions to the topic of approximate graph isomorphism and homomorphism, see \cite{LRS} and \cite{AKKV}.    

\subsection{Outline of the paper}
The organization of this paper is as follows. In Section \ref{sec:lim} we develop the limit theory for $\mathcal K$-decorated $r$-uniform directed hypergraphs with reference to previously known special (and in some way generic) cases, and use the representation of the limit to describe the limit space of $r$CSP problems. In Section \ref{ch3:sec.lim} the basic notion of efficiency in context of parameter testing is given with some additional examples. The subsequent Section \ref{ch4:sec:gse} contains the proof of Theorem \ref{ch4:main} regarding ground state energies of $r$-graphons, and in the following Section \ref{sec:appl} some variants are examined, in particular microcanonical energies and the quadratic assignment problem. We summarize possible directions of further research in Section \ref{sec:fr}.

\section{Limit theory and related notions}\label{sec:lim}

We will consider the objects called \emph{$r$CSP formulas} that are used to define instances of the decision and optimization problems called $r$CSP and MAX-$r$CSP, respectively. In the current framework a formula consists of a variable set and a set of boolean or integer valued functions. Each of these functions is defined on a subset of the variables, and the sets of possible assignments of values to the variables are uniform.
Additionally, it will be required that each of the functions, which we will call \emph{constraints} \sindex[notions]{constraint} in what follows, depend exactly on~$r$ of the variables. 


For the treatment of an $r$CSP (of a MAX-$r$CSP) corresponding to a certain formula we are required to simultaneously evaluate all the constraints of the formula by assigning values to each of the variables in the variable set. If we deal with an $r$CSP optimization problem on some combinatorial structure, say on graphs, then the formula corresponding to a certain graph has to be constructed according to the optimization problem in question. The precise definitions will be provided next.

Let $r\geq 1$, $K$ be a finite set, and $f$ be a boolean-valued function $f\colon  K^r \to \{0,1\}$ on $r$ variables (or equivalently $f \subseteq K^r$).
We call $f$ a \emph{constraint-type on $K$ in  $r$ variables}\sindex[notions]{constraint-type}, $\mathcal C=\mathcal C(K,r)$\sindex[symbols]{c@$\mathcal C(K,r)$} denotes the set of all such objects.

\begin{definition}[$r$CSP formula]  \label{ch2:defcsp}
	Let $V=\{x_1,x_2, \dots, x_n\}$ be the set of variables, $x_e=(x_{e_1}, \dots , x_{e_r}) \in V^r$ and $f$ a constraint-type on $K$ in $r$ variables.  We call an $n$-variable function $\omega=(f;x_e)\colon K^V \to \{0,1\}$ with $\omega(l_1,\dots ,l_n)=f(l_{e_1}, \dots,l_{e_r})$ a constraint on $V$ in $r$ variables determined by an $r$-vector of constrained variables and a constraint type. 
	
	We call a collection $F$ of constraints on $V(F)=\{x_1,x_2, \dots, x_n\}$ in $r$ variables of type $\mathcal C(K,r)$ for some finite $K$  an $r$CSP formula.
\end{definition}
\sindex[notions]{r@$r$CSP formula}

Two constraints $(f_1;x_{e_1})$ and $(f_2;x_{e_2})$ are said to be equivalent\sindex[notions]{constraint!equivalence} if they constrain the same $r$ variables, and their evaluations coincide, that is, whenever there exists a $\pi \in S_r$ such that $e_1=\pi(e_2)$ (here $\pi$ permutes the entries of $e_2$) and $f_1=\hat \pi (f_2)$, where $[\hat \pi(f)](l)=f(\pi(l))$. Two formulas $F_1$ and $F_2$ are equivalent\sindex[notions]{r@$r$CSP formula!equivalence} if there is a bijection $\phi$ between their variable sets such that there is a one-to-one correspondence between the constraints of $F_1$ and $F_2$ such that the corresponding pairs $(f_1;x_{e_1}) \in F_1$ and $(f_2;x_{e_2}) \in F_2$ satisfy $(f_1; \phi(x_{e_1})) \equiv (f_2;x_{e_2})$.

In the above definition the set of states of the variables in $V(F)$ denoted by $K$ is not specified for each formula, it will be considered as fixed similar to the dimension $r$ whenever we study a family of $r$CSPs. We say that $F$ is symmetric\sindex[notions]{r@$r$CSP formula!symmetric}, if it contains only constraints with constraint-types which are invariant under the permutations of the constrained variables. When we relax the notion of the types to be real or $\KK$-valued functions on $K^r$ with $\KK$ being a compact space, then we speak of weighted $r$CSP formulas\sindex[notions]{r@$r$CSP formula!weighted}.

The motivation for the name CSP formula is immediately clear from the notation used in \Cref{ch2:defcsp} if we consider constraints to be satisfied at some point in $K^n$, whenever they evaluate to $1$ there.
Most problems defined on these objects ask for parameters that are, in the language of real analysis, global or conditioned extreme values of the objective function given by an optimization problem and a formula. A common assumption is that equivalent formulas should get the same parameter value.

\begin{definition}[MAX-$r$CSP]  \label{ch2:defcsp2}
	
	Let $F$ be an $r$CSP formula over a finite domain $K$. Then the MAX-$r$CSP value of $F$ is given by 
	
	\begin{align}
	\mathrm{MAX-}r\mathrm{CSP}(F)=\max_{l \in K^{V(F)}} \sum_{\omega=(f;x_e) \in F}\omega(l),
	\end{align}
	and $F$ is satisfiable, if $\mathrm{MAX-}r\mathrm{CSP}(F)=|\{ \,\omega \mid \omega=(f;x_e) \in F\, \}|$.
	
\end{definition}
\sindex[notions]{optimization problem!MAX-$r$CSP}\sindex[notions]{r@$r$CSP formula!satisfiable}
Such problems are for example
MAX-CUT, fragile MAX-$r$CSP, MAX-$3$-SAT, and Not-All-Equal-$3$-SAT, where only certain constraint types are allowed for instances, or MAX-BISECTION, where additionally only specific value assignments are permitted in the above maximization.
In general, formulas can also be viewed as directed $r$-graphs, whose edges are colored with constraint types (perhaps with multiple types), and we will exploit this representation in our analysis.

Typically, we will not store and recourse to an $r$CSP formula $F$ as it is given by its definition above, but we will only consider the $r$-array tuple $(F^z)_{z \in K^r}$, where 
\begin{align}\label{ch2:evaldef}
F^z(e)=\sum_{\phi \in S_r}\sum_{(f;x_{\phi(e)}) \in F} f(z_{\phi(1)}, \dots, z_{\phi(r)})
\end{align}
for each $e \in [n]^r$. The data set $(F^z)_{z \in K^r}$ is called the \emph{evaluation representation of $F$}\sindex[notions]{r@$r$CSP formula!evaluation representation}, or short $\eval(F)$\sindex[symbols]{e@$\eval(F)$}, we regard $\eval(F)$ as a parallel colored (with colors from $[q]^r$) multi-$r$-graph, see below. We impose a boundedness criteria on CSPs that will apply throughout the paper, that means we fix $d \geq 1$ for good, and require that $\|F^z\|_\infty \leq d$ for every $z \in K^r$ and CSP formula $F$ with $\eval(F)=(F^z)_{z \in K^r}$ in consideration. We note that for each $z \in K^r$, $e \in [n]^r$ and $\phi \in S_r$ we have the symmetry $F^z(e)=F^{z_{\phi(1)}, \dots, z_{\phi(r)}}(e_{\phi(1)}, \dots, e_{\phi(r)})$, also, on the diagonal $F^z$ is $0$.

The main motivation for what follows in the current section originates from the aim to
understand the long-range behavior of a randomly evolving $r$CSP formula together with the value of the corresponding MAX-$r$CSP by making sense of a limiting distribution. This task is equivalent to presenting a structural description of $r$CSP limits analogous to the graph limits of \cite{LSzlim}.

The convergence notion should agree with parameter estimation via sampling. In this setting we pick a set of variables of fixed size at random from the constrained set $V(F)$ of an $r$CSP formula $F$ defined on a large number of variables, and ask for all the constraints in which the sampled variables are involved 
and no other, this is referred to as the induced subformula on the sample. Then we attempt to produce some quantitative statement about the parameter value of the original formula by relying only on
the estimation of the corresponding value of the parameter on a subformula, see \Cref{ch3:deftest}. 

Having formally introduced the notion of $r$CSP formulas and MAX-$r$CSP, we proceed to the outline of the necessary notation and to the analysis of the limit behavior regarding the colored hypergraph models that are used to encode these formulas.

\subsection{Limits of \texorpdfstring{$\mathcal K$}{K}-colored \texorpdfstring{$r$}{r}-uniform directed hypergraphs}
\label{ch2:sec.limdirect}

Let $\KK$ be a compact Polish space\sindex[notions]{Polish space} and $r\geq 1$ an integer. Recall that a space $\KK$ is called Polish if it is a separable completely metrizable topological space. In what follows we will consider  the limit space of $\KK$-colored $r$-uniform directed hypergraphs, or with different words $r$-arrays with non-diagonal entries from $\KK$, and the diagonal entries are occupied by a special element which also can be in $\KK$, but in general this does not have to be the case.

The basic content of the current subsection starts with the general setting given above, CSPs will be considered as a special case in this topic whose limit characterization will be derived at the end. Some of the basic cases are already settled regarding the representation of the limits, we refer to Lov\'asz and Szegedy  \cite{LSzlim}, \cite{LSzcom}, \cite{L} for the $r=2$, general $\KK$, undirected case, to Elek and Szegedy \cite{ESz} for the general $r$, $\KK=\{0,1\}$, undirected case; and Diaconis and Janson \cite{DJ} for $r=2$, $\KK=\{0,1\}$, directed and undirected case. These three approaches are fundamentally different in their proof methodology (they rely on weak regularity, ultralimits, and exchangeability principles respectively) and were further generalized or applied by Zhao \cite{Z} to general $r$; respectively by Aroskar \cite{AA} to the directed case; respectively by Austin \cite{Au} general $r$ and by Janson \cite{J} to the directed case where the graph induces a partial order on the vertex set.

\paragraph{Definition of convergence}

Let $\mathcal C$ denote space $C(\KK)$ of continuous functionals on $\mathcal K$, and let $\mathcal F \subset \mathcal C$ be a countable generating set with $\|f\|_\infty \leq 1$ for each  $f \in \FF$ , that is, the linear subspace generated by $\mathcal F$ is dense in $\mathcal C$ in the $L^\infty$-norm. 

Denote by $\Pi(S)=\Pi^r(S)$ the set of all unlabeled $S$-decorated directed $r$-uniform hypergraphs for some arbitrary set $S$, where we will suppress $r$ in the notation, when it is clear which $r$ is meant (alternatively, $\Pi(S)$ denotes the isomorphism classes of the node labeled respective objects). The set $\Pi_k(S)$ denotes the elements of $\Pi(S)$ of vertex cardinality $k$.
Let $\G(k,F)$ denote the random induced subformula of $F$ on the set $S \subset V(F)$ that is chosen uniformly among the subsets of $V(F)$ of cardinality $k$. We define the homomorphism densities\sindex[notions]{homomorphism density} next.

\begin{definition} \label{ch2:defdens}
	Let $\KK$ be an arbitrary set or space, and $C(\KK)$ be the set of continuous functionals on $\KK$. If for some $r\geq 1$ $F \in \Pi^r(C(\KK))$ is a uniform directed graph with $V(F)=[k]$ and $G \in \Pi^r(\KK)$, then the homomorphism density of $F$ in $G$ is defined as
	\begin{equation}\label{ch2:defdens4}
	t(F,G)=\frac{1}{|V(G)|^k}\sum_{\phi\colon [k] \to V(G)} \prod_{i_1, \dots, i_r=1}^{k} F(i_1,\dots,i_r) (G(\phi(i_1), \dots, \phi(i_r))).
	\end{equation}
	\sindex[symbols]{t@$t(F,G)$}\sindex[notions]{homomorphism density!graph}
	
	The injective homomorphism density\sindex[notions]{homomorphism density!injective} $t_\inj(F,G)$\sindex[symbols]{t@$t_\inj(F,G)$} is defined similarly, with the difference that the average of the products is taken over all injective $\phi$ maps (normalization changes accordingly).

\end{definition}

In the special case when $\KK$ is finite we can associate to the elements of $\Pi(\KK)$ functions in $\Pi(C(\KK))$ through replacing the edge colors in $\KK$ by the corresponding indicator functions. 
%
Note that this way if $F,G \in \Pi^{r}$, then $t_\inj(F,G)=\PPP(\G(k,G)=F)$. 

Let the map $\tau$ be defined as $\tau(G)=(t(F,G))_{F \in \Pi(\mathcal F)} \in [0,1]^{\Pi(\mathcal F)}$ for each $G \in \Pi(\mathcal K)$. 
We set $\Pi(\KK)^*=\tau(\Pi(\KK))\subset [0,1]^{\Pi(\FF)}$, and $\overline{\Pi(\KK)^*}$ to the closure of $\Pi(\KK)^*$. Also, let  $\Pi(\KK)^+=\{ \,(\tau(G),1/|V(G)|) \mid G \in \Pi(\KK)\, \} \subset [0,1]^{\Pi(\FF)} \times [0,1]$\sindex[symbols]{p@$\Pi(\KK)^*$,$\overline{\Pi(\KK)^*}$,$\Pi(\KK)^+$,$\overline{\Pi(\KK)^+}$}, and let $\overline{\Pi(\KK)^+}$ be the closure of $\Pi(\KK)^+$.
The function $\tau^+(G)=(\tau(G),1/|V(G)|)$\sindex[symbols]{t@$\tau(G)$,$\tau^+(G)$} will be useful for our purposes, because,  opposed to $\tau$, it is injective, which can be verified easily. For any $F \in \Pi(\FF)$ the function $t(F,.)$ on $\Pi(\KK)$ can be uniquely continuously extended to a function $t(F,.)$ on $\overline{\Pi(\KK)^+}$, this is due to the compactness of $[0,1]^{\Pi(\FF)} \times [0,1]$. For an element $\Gamma \in \overline{\Pi(\KK)^+} \setminus \Pi(\KK)^+$, let $t(F,\Gamma)$\sindex[symbols]{t@$t(F,\Gamma)$} for $F \in \Pi(\FF)$ denote the real number in $[0,1]$ that is the coordinate of $\Gamma$ corresponding to $F$.

The functions $\tau_\inj(G)$\sindex[symbols]{t@$\tau_\inj(G)$,$\tau^+_\inj(G)$} and $\tau^+_\inj(G)$, and the sets $\Pi_{\inj}(\KK)=\tau_{\inj}(\Pi(\KK))$ and $\Pi_{\inj}(\KK)^+$\sindex[symbols]{p@$\Pi_{\inj}(\KK)$,$\Pi_{\inj}(\KK)^+$} are defined analogously.
It was shown in \cite{LSzlim} that 
\begin{equation} \label{ch2:eqinj}
|t_{_\inj}(F,G)-t(F,G)|\leq \frac{|V(F)|^2\|F\|_\infty}{2|V(G)|}
\end{equation}
for any pair $F \in \Pi(\CC)$ and $G\in\Pi(\KK)$.


The precise definition of convergence will be given right after the next theorem which is analogous to a result of \cite{LSzcom}. 

\begin{theorem} \label{ch2:thm1} Let $(G_n)_{n=1}^\infty$ be a random sequence in $\Pi(\KK)$ with $|V(G_n)|$ tending to infinity in probability. Then the following are equivalent.
	\begin{itemize}
		\item[(1)] The sequence $(\tau^+(G_n))_{n=1}^\infty$ converges in distribution in $\Pi(\KK)^+$.
		\item[(2)] For every $F \in \Pi(\FF)$, the sequence $(t(F,G_n))_{n=1}^\infty$ converges in distribution.
		\item[(3)] For every $F \in \Pi(\CC)$, the sequence $(t(F,G_n))_{n=1}^\infty$ converges in distribution.
		\item[(4)] For every $k \geq 1$, the sequence $(\G(k,G_n))_{n=1}^\infty$ of random elements of $\Pi(\KK)$ converges in distribution.
	\end{itemize}
	
	If any of the above apply, then the respective limits in $(2)$ and $(3)$ are $t(F,\Gamma)$ with $\Gamma$ being a random element of $\overline{\Pi(\KK)^+}$ given by
	$(1)$, and also $\Gamma \in \overline{\Pi(\KK)^+} \setminus \Pi(\KK)^+$, almost surely.
	
	If $t(F,G_n)$ in $(2)$ and $(3)$ is replaced by $t_\inj(F,G_n)$, then the equivalence of the four statements still persists and the limits in $(2)$ and $(3)$ are $t(F,\Gamma)$.
	
	If every $G_n$ is concentrated on some single element of $\Pi(\KK)$ (non-random case), then the equivalence holds with the sequences in $(1)$, $(2)$, and $(3)$ being numerical instead of distributional, while $(4)$ remains unchanged.
	
\end{theorem}
\begin{proof}
	The equivalence of $(1)$ and $(2)$ is immediate. The implication from $(3)$ to $(2)$ is also clear by definition. 
	
	For showing that $(2)$ implies $(3)$, we consider first an arbitrary $F \in \Pi(\langle \FF \rangle)$, where $\langle \FF \rangle$ is the linear space generated by $\FF$. Then there exist $F^1, \dots, F^l \in \Pi(\FF)$ on the same vertex set as $F$, say $[k]$, and $\lambda_1, \dots, \lambda_l \in \R$ such that for any non-random $G \in \Pi(\KK)$ and $\phi\colon [k] \to V(G)$ it holds that 
	\begin{align*}
	\prod_{i_1, \dots, i_r=1}^{k} F(i_1,\dots,i_r)& (G(\phi(i_1), \dots, \phi(i_r))) \\ &=\sum_{j=1}^l \lambda_j\prod_{i_1, \dots, i_r=1}^{k} F^j(i_1,\dots,i_r) (G(\phi(i_1), \dots, \phi(i_r))).
	\end{align*}
	 So therefore we can express $t(F,G)= \sum_{j=1}^l \lambda_j t(F^j,G)$. We return to the case when $G_n$ is random. The weak convergence of $t(F,G_n)$ is equivalent to the convergence of each of its moments, its $t$th moment can be written by the linearity of the expectation as a linear combination of a finite number of mixed moments of the densities corresponding to $F^1, \dots, F^l \in \Pi(\FF)$. For an arbitrary vector of non-negative integers $\alpha=(\alpha_1, \dots, \alpha_l)$, let $F^\alpha$ be the element of $\Pi(\FF)$ that is the disjoint union $\alpha_1$ copies of $F^1$, $\alpha_2$  copies of $F^2$, and so on. It holds that $t(F^1,G_n)^{\alpha_1}\dots t(F^l,G_n)^{\alpha_l}=t(F^\alpha,G_n)$, and in particular the two random variables on the two sides are equal in expectation. Condition $(2)$ implies that $\E[t(F^\alpha,G_n)]$ converges for each $\alpha$, therefore the mixed moments of the $t(F^i,G_n)$ densities and the moments of $t(F,G_n)$ also do. This implies that  $t(F,G_n)$ also converges in distribution for any $F \in \Pi(\langle F \rangle)$. Now let $F' \in \Pi(\CC)$ and $\varepsilon>0$ be arbitrary, and $F \in \Pi(\langle F \rangle)$ on the same vertex set $[k]$ as $F'$ be such that its entries are at most $\varepsilon$-far in $L^\infty$ from the corresponding entries of $F'$. Then \begin{align*}
	&&&|t(F',G)-t(F,G)|  \\ &&&= \Bigg|\frac{1}{|V(G)|^k}\sum_{\phi\colon [k] \to V(G)} \prod_{i_1, \dots, i_r=1}^{k} F(i_1,\dots,i_r) (G(\phi(i_1), \dots, \phi(i_r))) \\ &&& \qquad \qquad \qquad \qquad \qquad \qquad- \prod_{i_1, \dots, i_r=1}^{k} F'(i_1,\dots,i_r) (G(\phi(i_1), \dots, \phi(i_r)))\Bigg| \\ 
	&&&= \frac{1}{|V(G)|^k}\sum_{\phi\colon [k] \to V(G)} \sum_{i_1, \dots, i_r=1}^{k} \left|\prod_{(j_1, \dots, j_r)<(i_1, \dots, i_r)}F(j_1,\dots,j_r) (G(\phi(j_1), \dots, \phi(j_r)))\right|\\&&& \qquad \left|\prod_{(j_1, \dots, j_r)>(i_1, \dots, i_r)}F'(j_1,\dots,j_r) (G(\phi(j_1), \dots, \phi(j_r)))\right| \\&&& \qquad \left|F(i_1,\dots,i_r) (G(\phi(i_1), \dots, \phi(i_r))) -  F'(i_1,\dots,i_r) (G(\phi(i_1), \dots, \phi(i_r))) \right| \\ &&&\leq k^r \varepsilon \max\{(\|F'\|_\infty+\varepsilon)^{k^r-1},1\}\end{align*} for any $G \in \Pi(\KK)$ (random or non-random), which implies $(3)$, as $\varepsilon>0$ was chosen arbitrarily.
	
	We turn to show the equivalence of $(3)$ and $(4)$. Let $\Pi_k(\KK) \subset \Pi(\KK)$ the set of elements of $\Pi(\KK)$ with vertex cardinality $k$. 
	The sequence $(\G(k,G_n))_{n=1}^\infty$ converges in distribution exactly when for each continuous function $f \in C(\Pi_k(\KK))$ on $\Pi_k(\KK)$ the expectation $\E[f(\G(k,G_n))]$ converges as $n \to \infty$. For each $F \in \Pi(\CC)$ and $\alpha \geq 1$, the function $t_\inj^\alpha(F,G)$ is continuous on $\Pi^{|V(F)|}(\KK)$ and $t_\inj(F,G)=t_\inj(F,\G(|V(F)|,G))$, so $(3)$ follows from $(4)$.

	For showing the other direction, that $(3)$ implies $(4)$, let us fix $k \geq 1$.
	We claim that the linear function space $M=\langle t(F, .) | F \in \Pi(\CC) \rangle \subset  C(\Pi_k(\KK))$ is an algebra containing the constant function, and that it separates any two elements of $\Pi_k(\KK)$. It follows that $\langle t(F, .) | F \in \Pi(\CC) \rangle$ is $L^\infty$-dense in $ C(\Pi_k(\KK))$ by the Stone-Weierstrass theorem, which implies by our assumptions that $\E[f(\G(k,G_n))] $ converges for any $f \in C(\Pi_k(\KK))$, since we know that $\E[t_\inj(F,\G(k,G_n))]= \E[t_\inj(F,G_n)]$ whenever $|V(F)|\leq k$. We will see in a moment that $t_{\inj}(F,. )\in M$, convergence of $\E[t_\inj(F,G_n)]$ follows from (\ref{ch2:eqinj}) and the requirement that $|V(G_n)|$ tends to infinity in probability.  
	
	Now we turn to show that our claim is indeed true. For two graphs $F_1, F_2 \in \Pi(\CC)$ we have $t(F_1,G)t(F_2,G)=t(F_1F_2,G)$ for any $G \in \Pi_k(\KK)$, where the product $F_1F_2$ denotes the disjoint union of the two $\CC$-colored graphs. Also, $t(F,G)=1$ for the graph $F$ on one node with a loop colored with the constant $1$ function. Furthermore we have that $\hom(F, G)= k^{|V(F)|} t(F, G) \in M$ for $|V(G)|=k$, so therefore $$
	\inj(F,G)=\sum_{\P \textrm{ partition of } V(F)} (-1)^{|V(F)|-|\P|} \prod_{S \in \P} (|S|-1)! \, \hom(F/\P,G) \quad  \in M,
	$$ where  $\inj(F,G)= t_\inj(F,G) k (k-1) \dots (k-|V(F)|+1)$ and $F/\P \in \Pi^{|\P|}(\CC)$ whose edges are colored by the product of the colors of $F$ on the edges between the respective classes of $\P$. This equality is the consequence of the Mobius inversion formula, and that $\inj(F,G)=\sum_{\P \textrm{ partition of } V(F)} \hom(F/\P, G)$.  For $G$ and $F$ defined on the node set $[k]$ recall that
	\begin{equation}
	\label{ch2:eqinj2}
	\inj(F,G)=\sum_{\phi \in S_k} \prod_{i_1, \dots, i_r=1}^{k} F(i_1,\dots,i_r) (G(\phi(i_1), \dots, \phi(i_r))).
	\end{equation}
	
	Now fix $G_1, G_2 \in \Pi_k(\KK)$ and let $F \in \Pi_k(\CC)$ such that  $\{F(i_1,\dots,i_r) (G_j(l_1, \dots, l_r))\}$ are algebraically independent elements of $\R$ (such an $F$ exists, we require a finite number of algebraically independent reals, and can construct each entry of $F$ by polynomial interpolation). If $G_1$ and $G_2$ are not isomorphic, than for any possible node-relabeling for $G_2$ there is at least one term in the difference $\inj(F,G_1)-\inj(F,G_2)$ written out in the form of (\ref{ch2:eqinj2}) that does not get canceled out, so therefore $\inj(F,G_1) \neq \inj(F,G_2)$.    
	
	We examine the remaining statements of the theorem. Clearly, $\Gamma \notin \Pi(\KK)^+$, because $|V(G_n)| \to \infty$ in probability. The results for the case where the map in $(1)$ and the densities in $(2)$ and $(3)$ are replaced by the injective version are yielded by (\ref{ch2:eqinj}), the proof of the non-random case carries through in a completely identical fashion.

\end{proof}

We are now ready to formulate the definition of convergence in $\Pi(\KK)$\sindex[notions]{graph!convergence definition}\sindex[notions]{hypergraph!convergence definition}.

\begin{definition}\label{ch2:defconv}
	If  $(G_n)_{n=1}^\infty$ is a sequence in $\Pi(\KK)$ with $|V(G_n)| \to \infty$ and any of the conditions above of \Cref{ch2:thm1} hold, then we say that $(G_n)_{n=1}^\infty$ converges. 
\end{definition}

We would like to add that, in the light of \Cref{ch2:thm1}, the convergence notion is independent from the choice of the family $\FF$.

The next lemma gives information about the limit behavior of the sequences where the vertex set cardinality is constant.

\begin{lemma}
	\label{ch2:lem2} Let $(G_n)_{n=1}^\infty$ be a random sequence in $\Pi_k(\KK)$, and additionally be such that for every $F \in \Pi(\FF)$ the sequences $(t_{\inj}(F,G_n))_{n=1}^\infty$ converge in distribution. Then there exists a random $H \in \Pi_k(\KK)$, such that for every $F \in \Pi(\FF)$ we have $t(F,G_n)\to t(F,H)$ and $t_\inj(F,G_n) \to t_\inj(F,H)$ in distribution.
\end{lemma} 
\begin{proof}
	We only sketch the proof. The distributional convergence of $(G_n)_{n=1}^\infty$ follows the same way as in the proof of \Cref{ch2:thm1}, the part about condition $(2)$ implying $(3)$ together with  the part stating that $(3)$ implies $(4)$. The existence of a random $H$ satisfying the statement of the lemma is obtained by invoking the Riesz representation theorem for positive functionals.
\end{proof}

\paragraph{Exchangeable arrays}

The correspondence analogous to the approach of Diaconis and Janson in \cite{DJ} will be established next between the elements of the limit space $\overline{\Pi(\KK)^+}$ that is compact, and the extreme points of the space of random exchangeable infinite $r$-arrays with entries in $\KK$. These are arrays, whose distribution is invariant under finite permutations of the underlying index set. 

\begin{definition}[Exchangeable $r$-array] \label{ch2:defexc}
	Let $(H(e_1,\dots,e_r))_{1 \leq e_1, \dots, e_r < \infty}$ be an infinite $r$-array of random entries from a Polish space $\KK$. We call the random array separately exchangeable\sindex[notions]{exchangeable array!separately} if 
	$$ 
	(H(e_1,\dots,e_r))_{1 \leq e_1, \dots, e_r < \infty} $$
	has the same probability distribution as
	$$
	(H(\rho_1(e_1),\dots,\rho_r(e_r)))_{1 \leq e_1, \dots, e_r < \infty}
	$$ 
	for any $\rho_1, \dots, \rho_r \in S_\N$ collection of finite permutations, and jointly exchangeable\sindex[notions]{exchangeable array!jointly} (or simply exchangeable), if the former holds only for all  $\rho_1= \dots= \rho_r \in S_\N$.
\end{definition}
\sindex[notions]{exchangeable array}

For  a finite set $S$, let $\hhh_0(S)$\sindex[symbols]{h@$\hhh_0(S)$,$\hhh(S)$,$\hhh(S,m)$,$\hhh_0(S,m)$} and $\hhh(S)$ denote the power set and the set of nonempty subsets of $S$, respectively, 
and $\hhh(S,m)$ the set of nonempty subsets of $S$ of cardinality at most $m$, also $\hhh_0(S,m)=\hhh(S,m) \cup \{\emptyset\}$. A $2^r-1$-dimensional real vector $x_{\hhh(S)}$\sindex[symbols]{x@$x_{\hhh(S)}$} denotes $(x_{T_1}, \dots, x_{{T_{2^r-1}}})$, where $T_1, \dots, T_{2^r-1}$ is a fixed ordering of the nonempty subsets of $S$ with $T_{2^r-1}=S$, for a permutation $\pi$ of the elements of $S$ the vector  $x_{\pi(\hhh(S))}$ means $(x_{\pi^*(T_1)}, \dots, x_{\pi^*({T_{2^r-1}})})$, where $\pi^*$ is the action of $\pi$ permuting the subsets of $S$. Similar conventions apply when $x$ is indexed by other set families.

It is clear that if we consider a measurable function $f\colon  [0,1]^{\hhh_0([r])} \to \KK$, and independent random variables uniformly distributed on $[0,1]$ that are associated 
with each of the subsets of $\N$ of cardinality at most $r$, then by plugging in these random variables into $f$ for every $e \in \N^r$ in the right way suggested by a fixed natural bijection $l_e\colon e \to [r]$, the result will be an exchangeable random $r$-array. The shorthand $\mathrm{Samp}(f)$\sindex[symbols]{s@$\mathrm{Samp}(f)$} denotes this law of the infinite directed $r$-hypergraph model generated by $f$.

The next theorem, states that all exchangeable arrays with values in $\KK$ arise from some $f$ in the former way.
\begin{theorem}\cite{K}\label{ch2:exch}
	Let $\KK$ be a Polish space. Every $\KK$-valued exchangeable $r$-array $(H(e))_{e \in {\N^r}}$ has law equal to $\mathrm{Samp}(f)$ for some measurable $f\colon  [0,1]^{\hhh_0([r])} \to \KK$, that is, there exists a function $f$, so that if $(U_s)_{s \in \hhh_0(\N,r)}$ are independent uniform $[0,1]$ random variables, then \begin{equation} \label{ch2:repr}
	H(e)=f(U_\emptyset, U_{\{e_1\}}, U_{\{e_2\}},  \dots, U_{\overline{e} \setminus \{e_r\}},U_{\overline{e}})
	\end{equation}
	for every $e=(e_1, \dots , e_r)\in \N^r$, where $H(e)$ are the entries of the infinite $r$-array.
\end{theorem}\sindex[notions]{exchangeable array!representation}

If $H$ in the above theorem is invariant under permuting its coordinates, then the corresponding function $f$ is invariant under the coordinate permutations that are induced by the set permuting $S_r$-actions.

\Cref{ch2:exch} was first proved by de Finetti \cite{dF}  (in the case $\KK=\{0,1\}$) and by Hewitt and Savage \cite{HS} (in the case of general $\KK$) for $r=1$, independently by Aldous \cite{Al} and Hoover \cite{H} for $r=2$,  and by Kallenberg \cite{K} for arbitrary $r\geq 3$. For equivalent formulations, proofs and further connections to related areas see the recent survey of Austin \cite{Au}.

In general, there are no symmetry assumptions on $f$, in the directed case $H(e)$ might differ from $H(e')$, even if $e$ and $e'$ share a common base set. In this case these two entries do not have the property of conditional independence over a $\sigma$-algebra given by some lower dimensional structures, that means for instance  the independence over $\{ \,U_{\alpha} \mid \alpha \subsetneq e\, \}$ for an exchangeable $r$-array with law $\mathrm{Samp}(f)$ given by a function $f$ as above. 

With the aid of \Cref{ch2:exch} we will provide a form of representation of the limit space $\overline{\Pi(\KK)^+}$ through the points of the space of random infinite exchangeable $r$-arrays.
The correspondence will be established through a sequence of theorems analogous to the ones stated and proved in \cite[Section 2 to 5]{DJ}, combined with the compactification argument regarding the limit space from \cite{LSzcom}, see also \cite[Chapter 17.1]{L} for a more accurate picture. The proofs in our case are mostly ported in a straightforward way, if not noted otherwise we direct the reader for the details to \cite{DJ}.

Let $\LL_\infty=\LL_\infty(\KK)$\sindex[symbols]{l@$\LL_\infty(\KK)$,$\LL_n(\KK)$} denote the set of all node labeled countably infinite $\KK$-colored $r$-uniform
directed hypergraphs. Set the common vertex set of the elements of $\LL_\infty$ to $\N$, and define the set of $[n]$-labeled $\KK$-colored $r$-uniform directed hypergraphs as $\LL_n=\LL_n(\KK)$. Every $G \in \LL_n$ can be viewed as an element of $\LL_\infty$ simply by adding isolated
vertices to $G$ carrying the labels $\N \setminus [n]$ in the uncolored case, and the arbitrary but fixed color $c \in \KK$ to edges incident to these vertices in the colored case, therefore we think about $\LL_n$ as a subset of $\LL_\infty$ (and also of $\LL_m$ for every $m\geq n$). Conversely, if $G$ is a (random) element of $\LL_\infty$, then by restricting $G$ to the vertices labeled by $[n]$, we get $G|_{[n]} \in \LL_n$.\sindex[symbols]{g@$G\vert_{S}$} 
If $G$ is a labeled or unlabeled $\KK$-colored $r$-uniform directed hypergraph (random or not) with
vertex set of cardinality $n$, then let $\hat G$ stand for the random element of $\LL_n$ (and also $\LL_\infty$) which we obtain by first throwing away the labels of $G$ (if there where any), and then apply a random labeling chosen uniformly from all possible ones with the label set $[n]$. 

A random element of $\LL_\infty$ is \emph{exchangeable} analogously to \Cref{ch2:defexc} if its distribution is invariant under any permutation of the vertex set $\N$ that only moves finitely many vertices, for example infinite hypergraphs whose edge-colors are independently identically distributed are exchangeable. An element of $\LL_\infty$ can also be regarded as an infinite $r$-array whose diagonal elements are colored with a special element $\iota$ that is not contained in $\KK$, therefore the corresponding $r$-arrays will be $\KK \cup \{\iota\}$-colored.

%

The next theorem relates the elements of $\Pi(\KK)^+$ to exchangeable random elements of $\LL_\infty$.
\begin{theorem}\label{ch2:thm3}
	Let $(G_n)_{n=1}^\infty$ be a random sequence in $\Pi(\KK)$ with $|V(G_n)|$ tending to infinity
	in probability. Then the following are equivalent.
	\begin{itemize}
		\item[(1)] $\tau^+(G_n) \to \Gamma$ in distribution for a random $\Gamma \in \overline{\Pi(\KK)^+} \setminus \Pi(\KK)^+$.
		\item[(2)] $\hat G_n \to H$ in distribution in $\LL_\infty(\KK)$, where $H$ is a random element of $\LL_\infty(\KK)$.
	\end{itemize}
	If any of these hold true, then $\E t(F,\Gamma) = \E t_\inj(F,H|_{[k]})$ for every $F \in \Pi_k(\CC)$, and also, $H$ is exchangeable.  
\end{theorem}
\begin{proof}
	If $G \in \Pi(\KK)$ is deterministic and $F \in \Pi_k(\FF)$ with $|V(G)| \geq k$ then $E t_\inj(F, \hat G|_{[k]}) = t_\inj(F,G)$, where the expectation $E$ is taken with respect to the
	random (re-)labeling $\hat G$ of $G$. For completeness we mention that for a labeled, finite $G$ the quantity $t(F,G)$ is understood as $t(F,G')$ with $G'$ being the unlabeled version of $G$, also, $F$ in $t(F,G)$ is always regarded a priori as labeled, however the densities of isomorphic labeled graphs in any graph coincide. 
	If we consider $G$ to be random, then by the fact that
	$0 \leq t(F,G) \leq 1$ (as $\|F\|_\infty \leq 1$) we have that $|\E E t_\inj(F, \hat G|_{[k]}) -\E t_\inj(F,G)| \leq \PPP(|V(G)| < k )$ for $F \in \Pi_k(\FF)$. 
	
	Assume $(1)$, then the above implies, together with
	$\PPP(|V(G_n)| < k ) \to 0$ and $(1)$, that $\E E t_\inj(F, \hat G_n|_{[k]}) \to \E t(F, \Gamma)$ (see \Cref{ch2:thm1}). This implies that $\hat G_n|_{[k]} \to H_k$ in distribution for some random $H_k \in \LL_k$ with $\E t_\inj(F,H_k) = \E t(F, \Gamma)$, see \Cref{ch2:lem2}, furthermore, with appealing to the consistency of the $H_k$ graphs in $k$, there exists a random $H \in \LL_\infty$ such that $H|_{[k]} = H_k$ for each $k \geq 1$, so $(1)$ yields $(2)$. 
	
	Another consequence is that $H$ is exchangeable: the exchangeability property is equivalent to the vertex permutation invariance of the distributions of $H|_{[k]}$ for each $k$. This is ensured by the fact that $H|_{[k]} = H_k$, and $H_k$ is the weak limit of a vertex permutation invariant random sequence, for each $k$. 
	
	For the converse direction we perform the above steps in the reversed order using $$|\E E t(F, \hat G_n|_{[k]}) -\E t(F,G_n) \leq \PPP(|V(G_n)| < k )$$ 
	again in order to establish the convergence of $(\E t(F,G_n))_{n =1}^\infty$. \Cref{ch2:thm1} certifies now the existence of the suitable random $\Gamma \in \overline{\Pi(\KK)^+} \setminus \Pi(\KK)^+$, this shows that
	$(2)$ implies $(1)$.
	
\end{proof}

We built up the framework in the preceding statements \Cref{ch2:thm1} and \Cref{ch2:thm3} in order to formulate the following theorem, which is the crucial ingredient to the desired representation of limits. 
\begin{theorem} \label{ch2:thm4}
	There is a one-to-one correspondence between random elements of $\overline{\Pi(\KK)^+} \setminus \Pi(\KK)^+$ and random exchangeable elements of $\LL_\infty$. Furthermore, there is a one-to-one correspondence between elements of $\overline{\Pi(\KK)^+} \setminus \Pi(\KK)^+$ and extreme points of the set of random exchangeable elements of $\LL_\infty$.
	The relation is established via the equalities $\E t(F,\Gamma) = \E  t_\inj(F,H|_{[k]})$ for every $F \in \Pi_k(\CC)$ for every $k\geq 1$.
\end{theorem}
\begin{proof}
	Let $\Gamma$  a random element of $\overline{\Pi(\KK)^+} \setminus \Pi(\KK)^+$. Then by definition of $\Pi(\KK)^+$  there is a sequence $(G_n)_{n=1}^\infty$ in $\Pi(\KK)$ with $|V(G_n)| \to \infty$ in probability such that $\tau^+(G_n) \to \Gamma$ in distribution in $\Pi(\KK)^+$. By virtue of  \Cref{ch2:thm3} there exists a random $H \in \LL_\infty$ so that $\hat G_n \to H$ in distribution in $\LL_\infty$, and $H$ is exchangeable. The distribution of $H|_{[k]}$ is determined by the numbers $\E t_\inj(F,H|_{[k]})$, see \Cref{ch2:thm1}, \Cref{ch2:lem2}, and the arguments therein, and these numbers are provided by the correspondence.
	
	For the converse direction, let  $H$ be random exchangeable element of $\LL_\infty$. Then let $G_n=H|_{[n]}$, we have $G_n \to H$ in distribution, and also $\hat G_n \to H$ in distribution by the vertex permutation invariance of $G_n$ as a node labeled object. Again, we appeal to \Cref{ch2:thm3}, so $\tau^+(G_n) \to \Gamma$ for a $\Gamma$ random element of $\overline{\Pi(\KK)^+} \setminus \Pi(\KK)^+$, which is determined completely by the numbers $\E t(F,\Gamma)$ that are provided by the correspondence, see \Cref{ch2:thm3}.
	
	The second version of the relation between non-random $\Gamma$'s and extreme points of exchangeable elements is proven similarly, the connection is given via  $t(F,\Gamma) = \E t_\inj(F,H|_{[k]})$ between the equivalent objects.
	
\end{proof}

The characterization of the aforementioned extreme points in \Cref{ch2:thm4} was given \cite{DJ} in the uncolored graph case, we state it  next for our general setting, but refrain from giving the proof here, as it is completely identical to  \cite[Theorem 5.5.]{DJ}.

\begin{theorem}\cite{DJ}
	The distribution of $H$ that is an exchangeable random element of $\LL_\infty$ is exactly in that case an extreme point of the set of exchangeable measures if the random objects $H|_{[k]}$ and $H|_{\{k+1, \dots\}}$ are probabilistically independent for any $k\geq 1$. In this case the representing function $f$ from \Cref{ch2:exch} does not depend on the variable corresponding to the empty set. 
\end{theorem}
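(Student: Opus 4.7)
The plan is to run the classical Hewitt--Savage / Aldous--Hoover route: identify extremality with the triviality of a suitable invariant $\sigma$-algebra of $H$, and then translate this into the stated finite/cofinite independence. First I would record the Choquet-type fact that the exchangeable probability measures on $\LL_\infty$ form a simplex whose extreme points are exactly those for which the exchangeable $\sigma$-algebra $\mathcal E$ -- the events invariant under every finitely supported permutation of $\N$ -- is $\PP$-trivial. I then need the Hewitt--Savage identity $\mathcal E = \mathcal T$ modulo $\PP$-null sets, where $\mathcal T = \bigcap_k \sigma(H|_{\{k+1,k+2,\dots\}})$ is the tail $\sigma$-algebra. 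The inclusion $\mathcal T \subseteq \mathcal E$ is immediate since a tail event is unaffected by any fixed finite permutation; the reverse inclusion is the actual work.

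Granting $\mathcal E = \mathcal T$, the direction \emph{extremality implies independence}, together with the assertion about the representing function, both follow from the Aldous--Hoover representation of Theorem \ref{exch}. Writing $H_e = f(U_\emptyset, U_{\{i_1\}}, \dots, U_e)$ and conditioning on $U_\emptyset = u$, one obtains an exchangeable law $\PP_{H \mid U_\emptyset = u}$ for almost every $u$, and $\PP_H = \int \PP_{H \mid U_\emptyset = u}\, du$ is a convex integral decomposition into exchangeable measures. If $\PP_H$ is extreme, this decomposition must be trivial, so $\PP_{H \mid U_\emptyset=u} = \PP_H$ for almost every $u \in [0,1]$; replacing $f$ by a representing function that ignores its first argument (which is legitimate since the representing $f$ in Theorem \ref{exch} is only determined by the induced law), we obtain the second assertion of the theorem. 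Once $f$ is chosen independent of $U_\emptyset$, the restriction $H|_{[k]}$ is a measurable function of the family $\{U_s : \emptyset \neq s \subseteq [k]\}$ while $H|_{\{k+1,\dots\}}$ is a measurable function of $\{U_s : \emptyset \neq s \subseteq \{k+1,k+2,\dots\}\}$; these two collections of i.i.d.\ uniform variables are disjoint, so the two restrictions are independent for every $k$.

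For the converse, \emph{independence implies extremality}, I would argue purely with $\sigma$-algebras. Assume $H|_{[k]}$ and $H|_{\{k+1,\dots\}}$ are independent for every $k \geq 1$. Any $A \in \mathcal T$ lies in $\sigma(H|_{\{k+1,\dots\}})$ for each $k$ and is therefore independent of $\sigma(H|_{[k]})$ for each $k$. Since $\sigma(H) = \bigvee_k \sigma(H|_{[k]})$, the event $A$ is independent of all of $\sigma(H)$, hence of itself, forcing $\PP(A) \in \{0,1\}$. So $\mathcal T$, and therefore $\mathcal E$, is $\PP$-trivial, which is the extremality criterion.

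The main obstacle is the Hewitt--Savage identity $\mathcal E \subseteq \mathcal T$ in the $\KK$-decorated $r$-array setting. I would handle it by the classical $L^1$-approximation scheme: for $A \in \mathcal E$ choose $A_n \in \sigma(H|_{[n]})$ with $\PP(A \bigtriangleup A_n) \to 0$, apply the permutation of $\N$ that swaps $[n]$ with $\{n+1,\dots,2n\}$ to obtain a translated event $A_n' \in \sigma(H|_{\{n+1,\dots\}})$, and combine invariance of $A$ under this permutation with the exchangeability of $H$ to conclude $\PP(A \bigtriangleup A_n') \to 0$ as well; in the limit this places $A$ in $\mathcal T$ up to null sets. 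The argument transplants without essential change from the sequence case because the symmetric group $S_\N$ acts on $\LL_\infty$ diagonally on all $r$-tuples of indices, so the relevant permutation invariance of the joint law of $H$ and of the swapped events is preserved coordinate by coordinate.
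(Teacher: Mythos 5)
Your proof is correct and follows essentially the route that Diaconis and Janson take for their Theorem~5.5, to which the paper defers: ergodic decomposition equates extremality with triviality of the exchangeable $\sigma$-algebra, the Hewitt--Savage identity ties that to the tail $\sigma$-algebra $\bigcap_k\sigma(H|_{\{k+1,\dots\}})$, and the Aldous--Hoover representation of Theorem~\ref{exch} gives both the possibility of choosing $f$ independent of $U_\emptyset$ under extremality and the consequent independence of $H|_{[k]}$ and $H|_{\{k+1,\dots\}}$. The one step you leave implicit is passing from $\PP(A\bigtriangleup A_n')\to 0$ with $A_n'\in\sigma(H|_{\{n+1,\dots\}})$ to $A$ lying in the tail $\sigma$-algebra modulo null sets; this needs the reverse-martingale convergence theorem along the decreasing filtration $\sigma(H|_{\{k+1,\dots\}})$ to identify the intersection of the $\PP$-completions with the $\PP$-completion of the intersection, which is standard but worth stating.
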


\paragraph{Graphons as limit objects}\label{ch2:sec:graphonintro}



Let the \emph{$r$-kernel}\sindex[notions]{r@$r$-kernel} space $\hat \Xi_0^r$ denote the space of the bounded measurable functions of the form $W\colon [0,1]^{\hhh([r],r-1)} \to \R$, and the subspace $\Xi^r_0$\sindex[symbols]{x@$\hat \Xi_0^r$,$\Xi^r_0$} of $\hat \Xi_0^r$ the symmetric $r$-kernels\sindex[notions]{r@$r$-kernel!symmetric} that are invariant under coordinate permutations $\pi^*$ induced by some $\pi \in S_r$, that is $W(x_{\hhh([r],r-1)})=W(x_{\pi^*(\hhh([r],r-1))})$ for each $\pi \in S_r$. We will refer to this invariance in the paper both for $r$-kernels and for measurable subsets of $[0,1]^{\hhh([r])}$ as \emph{$r$-symmetry}\sindex[notions]{r@$r$-symmetric}.
The kernels $W\in \Xi^r_{I}$ take their values in some interval $I$, for $I=[0,1]$ we call these special symmetric $r$-kernels  \emph{$r$-graphons}\sindex[notions]{r@$r$-graphon}, and their set $\Xi^r$\sindex[symbols]{x@$\Xi^r_{I}$,$\Xi^r$,$\Xi^{r,q}$}. In what follows, $\lambda$\sindex[symbols]{l@$\lambda$} as a measure always denotes the usual Lebesgue measure in $\R^d$, where the dimension $d$ is everywhere clear from the context.

If $W \in \Xi^r$ and $F \in \Pi^r$, then the $F$-density of $W$ is defined as

\begin{align}\label{ch2:defdens3}
t(F,W)= \int_{ [0,1]^{\hhh([k], r-1)} } \prod_{e \in E(F)} W(x_{\hhh(e,r-1)})  \prod_{e \notin E(F)} (1-W(x_{\hhh(e,r-1)})) \du \lambda(x_{\hhh([k], r-1)}).
\end{align}

Let $\KK$ be a compact Polish space, and $W\colon  [0,1]^{\hhh([r])} \to \KK$ be a measurable function, we will refer to such an object as a \emph{$(\KK,r)$-digraphon}\sindex[notions]{K@$(\KK,r)$-digraphon}, their set is denoted by $\tilde \Xi^{r}(\KK)$. Note that there are no symmetry assumptions in this general case, if additionally $W$ is $r$-symmetric, then we speak about  \emph{$(\KK,r)$-graphons}\sindex[notions]{K@$(\KK,r)$-graphon}, their space is $\Xi^{r}(\KK)$\sindex[symbols]{x@$\tilde \Xi^{r}(\KK)$,$\Xi^{r}(\KK)$}. For $\KK=\{0,1\}$ the set $\P(\KK)$ can be identified with the $[0,1]$ interval encoding the success probabilities of Bernoulli trials to get the common $r$-graphon form as a function $W\colon [0,1]^{2^r-2} \to [0,1]$ employed in \cite{ESz}.

The density of a $\KK$-colored graph $F\in \tilde \Pi_k(C(\KK))$ in the $(\KK,r)$-digraphon $W$ is defined analogously to  (\ref{ch2:defdens4}) and (\ref{ch2:defdens3}) as
\sindex[notions]{homomorphism density!colored graphon}
\begin{align}\label{ch2:defdens5}
t(F,W)= \int_{ [0,1]^{\hhh([k], r)} } \prod_{e \in [k]^r} F(e)( W(x_{\hhh(e,r)})) \du \lambda(x_{\hhh([k], r)}).
\end{align}

For $k \geq 1$ and an undirected $W \in \Xi^{r}(\KK)$ the random $(\KK,r)$-graph $\G(k, W)$ is defined on the vertex set $[k]$ by selecting a  uniform random point $(X_S)_{S \in \hhh([k], r)} \in [0,1]^{\hhh([k], r)}$ that enables the assignment of the color $W(X_{\hhh(e)})$ to each edge $e \in {[k] \choose r}$. For a directed $W$ the sample point is as above, the color of the directed edge $e \in [k]^r$ is $W(X_{\hhh(e)})$, but in this case the ordering of the power set of the base set $\overline e$\sindex[symbols]{e@$\overline e$} of $e$ matters in contrast to the undirected situation and is given by $e$, as $W$ is not necessarily $r$-symmetric. 

Additionally we define the averaged sampled $r$-graph\sindex[notions]{r@$r$-graph!averaged sampled} for $\KK \subset \R$ denoted by $\hh(k,W)$\sindex[symbols]{h@$\hh(k,W)$}, it has vertex set $[k]$, and the weight of the edge $e \in {[k] \choose r}$ is the conditional expectation $\E[W(X_{\hhh(e)}) \mid  X_{\hhh(e,1)} ]$, and therefore the random $r$-graph is measurable with respect to $X_{\hhh([k],1)}$. We will use the compact notation $X_i$ for $X_{\{ i\}}$ for the elements of the sample indexed by singleton sets.


We define the random exchangeable $r$-array $H_W$ in $\LL_\infty$ as the element that has law $\mathrm{Samp}(W)$ for the $(\KK,r)$-digraphon $W$, as in \Cref{ch2:exch}. 
Furthermore, we define \sindex[symbols]{g@$\Gamma_W$}$\Gamma_W \in \overline{\Pi(\KK)^+} \setminus \Pi(\KK)^+$ to be the element associated to $H_W$ through \Cref{ch2:thm3}.

Now we are able to formulate the representation theorem for $\mathcal K$-colored $r$-uniform directed hypergraph limits using the representation of exchangeable arrays, see (\Cref{ch2:exch}). It is an immediate consequence of \Cref{ch2:thm3} and \Cref{ch2:thm4} above. 

\begin{theorem}\label{ch2:reprmain}
	Let $(G_n)_{n=1}^\infty$ be a sequence in $\Pi(\KK)$ with $|V(G_n)| \to \infty$  such that for  every $F \in \Pi(\FF)$ the sequence $t(F,G_n)$ converges. Then there exists a function  $W\colon  [0,1]^{\hhh([r])} \to \KK$ (that is $W \in \Xi^{r}(\KK)$) such that $t(F,G_n) \to t(F,\Gamma_W)$ for every $F \in \Pi(\FF)$. In the directed case when the sequence is in $\tilde \Pi(\KK)$, then the corresponding limit object $W$ is in  $\tilde \Xi^{r}(\KK)$.
\end{theorem}

We mention that  $t(F,\Gamma_W)=t(F,W)$ for every  $F\in \Pi(C(\KK))$ and $W \in \Xi^r (\KK)$. Alternatively we can also use the form $W\colon  [0,1]^{\hhh([r],r-1)} \to \P(\KK)$ for $(\KK,r)$-graphons and digraphons in $\Xi^r(\KK)$ whose values are probability measures, this representation was applied in \cite{LSzcom}. 

In previous works, for example in \cite{DJ}, the limit object of a sequence of simple directed graphs without loops was represented by a $4$-tuple of $2$-graphons $(W^{(0,0)},W^{(1,0)},W^{(0,1)},W^{(1,1)})$ that satisfies $\sum_{i,j} W^{(i,j)}(x,y)=1$ and $W^{(1,0)}(x,y)=W^{(0,1)}(y,x)$ for each $(x,y) \in [0,1]^2$. 
A generalization of this representation can be given in our case of the $\Pi(\KK)$ limits the following way. We only present here the case when $\KK$ is a continuous space, the easier finite case can be dealt with analogously. 

We have to fix a Borel probability measure $\mu$ on $\KK$, we set this to be the uniform distribution if $\KK \subset \R^d$ is a domain or $\KK$ is finite. The limit space consists of collections of $(\R,r)$-kernels $W=(W^{u})_{u \in \mathcal U}$, where $\mathcal U$ is the set of all functions $u \colon  S_r \to \KK$. Additionally, $W$ has to satisfy $\int_{\mathcal U} W^u(x) \du \mu^{\otimes S_r}(u)=1$ and $0 \leq W^{\pi^*(u)}(x)=W^u(x_{\pi^*(\hhh([r],r-1))})$ for each $\pi \in S_r$ and $x \in [0,1]^{\hhh([r],r-1)}$. As before, the action $\pi^*$ of $\pi$ on $[0,1]^{\hhh([r],r-1)}$ is the induced coordinate permutation by $\pi$, with the unit cubes coordinates indexed by non-trivial subsets of $[r]$. Without going into further details we state the connection between the limit form spelled out above and that in \Cref{ch2:reprmain}. It holds
\begin{equation*}
\int_{U} W^u((x_{\hhh([r],r-1)}) \du \mu^{\otimes S_r}(u) = \PPP[(W(x_{\pi^*(\hhh([r],r-1))},Y)_{\pi \in S_r}) \in U]
\end{equation*}
for every measurable $U \subset \mathcal U$ and $x \in [0,1]^{\hhh([r],r-1)}$, where $Y$ is uniform on $[0,1]$, and the $W$ on the right-hand side is a $(\KK,r)$-digraphon, whereas on the left we have the corresponding representation as a (possibly infinite) collection of $(\R,r)$-kernels.


\label{ch2:naivegraphon} In several applications it is more convenient to use a naive form for the limit representation, from which the limit element in question is not decisively retrievable. The naive limit space consists of \emph{naive $(\KK,r)$-graphons}\sindex[notions]{K@$(\KK,r)$-graphon!averaged naive} $\overline W\colon  [0,1]^{r} \to \P(\KK)$, where now the arguments of $W$ are indexed with elements of $[r]$. From a proper $r$-graphon  $W\colon  [0,1]^{\hhh([r])} \to \KK$ we get its naive counterpart by averaging, the $\KK$-valued random variable $\E[W(x_{\hhh([r],1)}, U_{\hhh([r],r-1) \setminus \hhh([r],1)} ,Y)| Y]$ has distribution $\overline W(x_1, \dots, x_r)$, where $(U_{S})_{S \in \hhh([r],r-1) \setminus \hhh([r],1)}$ and $Y$ are i.i.d. uniform on $[0,1]$. 

On a further note we introduce \emph{averaged naive $(\KK,r)$-graphons}\sindex[notions]{K@$(\KK,r)$-graphon!averaged naive} for the case, when $\KK \subset \R$, these are of the form $\tilde{W}\colon  [0,1]^{r} \to \R$ and are given by complete averaging, that is $\E[W(x_1, \dots, x_r, U_{\hhh([r]) \setminus \hhh([r],1)} ,Y)]=\tilde W(x_1, \dots, x_r),$ where $(U_{S})_{S \in \hhh([r]) \setminus \hhh([r],1)}$ are i.i.d. uniform on $[0,1]$. A \emph{naive $r$-kernel}\sindex[notions]{r@$r$-kernel!naive} is a real-valued, bounded function on $[0,1]^r$, or equivalently on $[0,1]^{\hhh([r],1)}$.

We can associate to each $G \in \Pi_n^{r}(\KK)$ an element $W_G \in \Xi^{r}(\KK\cup\{\iota\})$\sindex[symbols]{w@$W_G$} by subdividing the unit $r$-cube $[0,1]^{\hhh([r],1)}$ into $n^r$ small cubes the natural way and defining the function $W' \colon  [0,1]^{\hhh([r],1)} \to \KK$ that takes the value $G(i_1, \dots, i_r)$ on $[\frac{i_1-1}{n}, \frac{i_1}{n}] \times \dots \times [\frac{i_r-1}{n}, \frac{i_r}{n}]$ for distinct $i_1, \dots, i_r$, and the value $\iota$ on the remaining diagonal cubes, note that these functions are naive $(\KK,r)$-graphons. Then we set $W_G(x_{\hhh([r],r-1)}) = W'(p_{\hhh([r],1)}(x_{\hhh([r],r-1)}))$, where $p_{\hhh([r],1)}$ is the projection to the suitable coordinates.
The special color $\iota$ here stands for the absence of colors has to be employed in this setting as rectangles on the diagonal correspond to loop edges. The corresponding $r$-graphon $W^\iota$ is $\{0,1\}$-valued. 
The sampled random $r$-graphs $\G(k,W_G)$ and $\hh(k,W_G)$ from the naive $r$-graphons are defined analogously to the general case. If $\KK \subset \R$, then note that  $\hh(k,W_G)=\G(k,W_G)$ for every $G$, because the colors of $G$ are all point measures.

Note that $t(F,G)=t(F,W_G)$, and
\begin{align}\label{ch6:eq22}
|t_\inj(F,G)-t(F,W_G)|\leq \frac{{k \choose 2}}{n - {k \choose 2}}
\end{align}
for each $F \in \Pi_k^{r}$, hence the representation as naive graphons is compatible in the sense that $\lim_{n \to \infty} t_\inj(F,G_n)=\lim_{n \to \infty} t(F,W_{G_n})$ for any sequence $(G_n)_{n=1}^\infty$ with $|V(G_n)|$ tending to infinity. This implies that $d_{\tw}(\G(k,G_n), \G(k,W_{G_n}))\to 0$ as $n$ tends to infinity.

We remark that naive and averaged naive versions in the directed case are defined analogously.

\subsection{Representation of \texorpdfstring{$r$}{r}CSP formulas as hypergraphs, and their convergence} \label{ch2:sec.limcsp}

In this subsection we elaborate on how homomorphism and sampling is meant in the CSP context, and formulate a representation the limit space in that context. Recall \Cref{ch2:defcsp} for the way how we perceive $r$CSP formulas.

Let $F$ be an $r$CSP formula on the variable set $\{x_1, \dots, x_n\}$ over an arbitrary domain $K$, and let $F[x_{i_1}, \dots, x_{i_k}]$ be the induced subformula of $F$ on the variable set $\{x_{i_1}, \dots,$ $x_{i_k}\}$. Let $\G(k,F)$ denote the random induced subformula on $k$ uniformly chosen variables from the elements of $V(F)$. 

It is clear using the terminology of \Cref{ch2:defexc} that the relation $\omega=(f; x_e) \in F[x_{i_1}, \dots, x_{i_k}]$ is equivalent to the relation \begin{align}\label{ch2:eq5}\phi(\omega)=(f_\phi, x_{\phi(e)}) \in F[x_{i_1}, \dots, x_{i_k}]\end{align} for permutations
$\phi \in S_r$, where $f_\phi(l_1, \dots, l_r) = f(l_{\phi(1)}, \dots, l_{\phi(r)})$ and $\phi(e)=(e_{\phi(1)}, \dots, e_{\phi(r)})$. This emergence of symmetry will inherently be reflected in the limit space, we will demonstrate this shortly. 

\paragraph{Special limits}

Let $K=[q]$. As we mentioned above, in general it is likely not to be fruitful to consider formula sequences as sequences of $C(K,r)$-colored $r$-graphs obeying certain symmetries due to constraint splitting. However, in the special case when each $r$-set of variables carries exactly one constraint we can derive a meaningful representation, MAX-CUT is an example. A direct consequence of \Cref{ch2:reprmain} is the following.

\begin{corollary}\label{ch2:reprcsp2}
	Let $r\geq 1$, and $K$ be a finite set, further, let $\KK \subset C(K,r)$, so that $\KK$ is permutation invariant. Let $(F_n)_{n=1}^\infty$ be a sequence of $r$CSP formulas with $|V(F_n)|$ tending to infinity, and each $r$-set of variables in each of the formulas carries exactly one constraint of type $\KK$. If for every formula $H$ obeying the same conditions the sequences $(t(H,F_n))_{n=1}^\infty$ converge as $(\KK,r)$-graphs, then there exists a $(\KK,r)$-digraphon $W \colon [0,1]^{\hhh([r])} \to \KK$ such that  $t(H,F_n) \to t(H,W)$ as $n$ tends to infinity for every $H$ as above. 
	
	Additionally, $W$ satisfies for each $x \in [0,1]^{\hhh([r])}$ and $\pi \in S_r$ that $W(x_{\pi(\hhh([r]))})=\hat \pi (W(x_{\hhh([r])}))$, where $\hat \pi$ is the action of $\pi$ on constraint types in $C(K,r)$ that permutes the rows and columns of the evaluation table according to $\pi$, that is  $(\hat \pi (f))(l_1, \dots, l_r) = f(l_{\pi(1)}, \dots, l_{\pi(r)})$.
	
\end{corollary}

\paragraph{General limits via evaluation}
In the general case of $r$CSP formulas we regard them as their evaluation representation $\eval$.

For $|K|=q$ we  identify the set of $r$CSP formulas with the set of arrays whose entries
are the sums of the evaluation tables of the constraints on $r$-tuples, that is $F$ with $V(F)=[n]$ corresponds to
a map $\eval(F)\colon[n]\times \dots \times [n] \to \{0, 1, \dots, d\}^{([q]^r)}$ that obeys the symmetry condition given after (\ref{ch2:evaldef}). This will be the way throughout the paper we look at these objects from here on. It seems that storing the whole structure of an $r$CSP formula does not provide any further insight, in fact splitting up constraints would produce non-identical formulas in a complete structure representation, which does not seem sensible.  


We denote the set $\{0, 1, \dots, d\}^{([q]^r)}$ by $L$ for simplicity, which one could also interpret as the set of multisets whose base set is $[q]^r$ and whose elements have multiplicity at most $d$. This perspective allows us to treat $r$CSPs as directed $r$-uniform hypergraphs whose edges are colored by the aforementioned elements of $L$, and leads to a representation of $r$CSP limits that is derived from the general representation of the limit set of $\Pi(L)$. We will show in a moment that the definition of convergence in the previous subsection given by densities of functional-colored graphs is basically identical to the convergence via densities of sub-multi-hypergraphs in the current case.

The definition of convergence for a general sequence of $r$CSP formulas, or equivalently of elements of $\Pi(L)$, was given in \Cref{ch2:defconv}. We describe here the special case for parallel multicolored graphs, see also \cite{LSzcom}.

Consider the evaluation representation of the $r$CSP formulas now as $r$-graphs whose oriented edges are parallel multicolored by $[q]^r.$
The map $\psi \colon \eval(H) \to \eval(F)$ is a homomorphism between two $r$CSP formulas $H$ and $F$ if it maps edges to edges of the same color from the color set $[q]^r$ and
is consistent when restricted to be a mapping between vertex sets, $\psi' \colon V(H) \to V(F)$, for simple graphs instead of CSP formulas this is the multigraph homomorphism notion.

Let $H$ be an $r$CSP formula, and let $\tilde H$ be the corresponding element in $C(L)$ on the same vertex set such that if the color on the fixed edge $e$ of $H$ is the $q$-sized $r$-array $(H^z(e))_{z \in [q]^r}$ with the entries being non-negative integers, then the color of $\tilde H$ at $e$ is $\prod\limits_{z \in [q]^r} x_{z}^{H^z(e)}$. More precisely, for an element $A \in L$ the value is given by \begin{align*}[\tilde H(e)](A)=\prod\limits_{z \in [q]^r} A(z)^{H^z(e)}.\end{align*} The linear space generated by the set \begin{align*}
\tilde L=\left\{ \,\prod\limits_{z \in [q]^r} x_{z}^{d_{z}} \mid  0 \leq d_{z_1,\dots, z_r}\leq d \,\right\}
\end{align*} forms an $L^\infty$-dense subset in $C(L)$, therefore \Cref{ch2:thm1} applies, and for a sequence $(G_n)_{n=1}^\infty$ requiring the convergence of $t(F,\eval(G_n))$ for all $F=\tilde H$ with $H \in \Pi(L)$ provides one of  the equivalent formulations of the convergence of $r$CSP formulas in the subformula density sense with respect to the evaluations.


The limit object will be given by \Cref{ch2:reprmain} as the space of measurable functions $W\colon[0,1]^{\hhh([r])} \to L$, where, as in the general case, the coordinates of the domain of $W$ are indexed by the non-empty subsets of $[r]$. In our case, not every possible $W$ having this form will serve as a limit of some sequence, the above mentioned symmetry in (\ref{ch2:eq5}) of the finite objects is inherited in the limit.

We state now the general evaluation $r$CSP version of \Cref{ch2:reprmain}.

\begin{corollary}\label{ch2:reprcsp}
	Let $(F_n)_{n=1}^\infty$ be a sequence of $r$CSP formulas that evaluate to at most $d$ on all $r$-tuples  with $|V(F_n)| \to \infty$  such that for  every finite $r$CSP formula $H$ obeying the same upper bound condition the sequence $(t(\tilde H,\eval(F_n)))_{n=1}^\infty$ converges. Then there exists an $(L,r)$-graphon  $W\colon [0,1]^{\hhh([r])} \to L$ such that $t(\tilde H,\eval(F_n)) \to t(\tilde H,W)$ for every $H$.
	Additionally,  $W$ satisfies for each $x \in [0,1]^{\hhh([r])}$ and $\pi \in S_r$ that $W(x_{\pi(\hhh([r]))})=\hat \pi (W(x_{\hhh([r])}))$, where $\hat \pi$ is as in \Cref{ch2:reprcsp2} when elements of $L$ are considered as maps from $[q]^r$ to non-negative integers.
\end{corollary}


\paragraph{Exchangeable partition-indexed processes}

We conclude the subsection with a remark that is motivated by the array representation of $r$CSPs. The next form presented seems to be the least redundant in some aspect, since no additional symmetry conditions have to be fulfilled by the limit objects.

The most natural exchangeable infinite random object fitting the one-to-one correspondence of \Cref{ch2:thm3}  with $r$CSP limits is the following process, that preserves every piece of information contained in the evaluation representation.
\begin{definition}
	Let  $N_q^r=\{ \,\P=(P_1, \dots, P_q) \mid \textrm{the sets } P_i\subset \N \textrm{ are pairwise disjoint and}$ $\sum_{i=1}^q |P_i|=r\, \}$ be the set of directed $q$-partitions of $r$-subsets of $\N$. We call the random process $(X_\P)_{\P \in N_q^r }$ that takes values in some compact Polish space $\KK$ a partition indexed process. The process $(X_\P)_{\P \in N_q^r }$ has the exchangeability property if its distribution is invariant under the action induced by finite permutations of $\N$, i.e., $(X_\P)_{\P \in N_q^r } \,{\buildrel d \over =}\, (X_{\rho^*(\P)})_{\P \in N_q^r }$ for any $\rho \in \Sym_0(\N)$.
\end{definition}\sindex[notions]{exchangeable array!partition-indexed processes}
Unfortunately, the existence of a representation theorem for partition-indexed exchangeable processes analogous to \Cref{ch2:exch} that offers additional insight over the directed colored $r$-array version is not established, and there is little hope in this direction. The reason for this is again the fact that there is no standard way of separating the generating process of the elements $X_\P$ and $X_{\P'}$ non-trivially in the case when $\P$ and $\P'$ have the same underlying base set of cardinality $r$ but are different as partitions into two non-trivial random stages with the first being identical for the two variables and the second stage being conditionally independent over the outcome of the first stage.

\section{Graph and graphon parameter testability}\label{ch3:sec.lim}

First we will invoke the method of sampling from $\KK$-colored $r$-graphs and $r$-graphons, as well as inspect the metrics that will occur later. 

Let $(U_{S})_{S \in \hhh([k], r)}$ be an independent uniform sample from $[0,1]$. Then for  an $r$-graph $G$, respectively an $r$-graphon $W$, the random $r$-graphs $\G(k,G)$ and $\G(k,W)$ have vertex set $[k]$, and edge weights $W_G((U_{p_e(S)})_{S \in \hhh(\hat e , r)})$, respectively $W((U_{p_e(S)})_{S \in \hhh(\hat e , r)}).$ Keep in mind, that $\G(k,G)\neq\G(k,W_G)$, the first term corresponds to sampling without, the second with replacement, but it is true that $\PPP(\G(k,G)\neq\G(k,W_G))\leq \frac{r^2}{|V(G)|}$. 


\paragraph{Norms and distances} We also mention the definitions of the norms and distances that will play a important role in what follows. In the next definition each object is real-valued. 

\begin{definition}
	
	The cut norm of an $n \times \dots \times n$ $r$-array $A$ is 
	$$
	\|A\|_\square=\frac{1}{n^r} \max_{S_1, \dots, S_r \subset [n]} \left| A(S_{1}, \dots, S_{r}) \right|,
	$$
	and the $1$-norm is
	$$
	\|A\|_1=\frac{1}{n^r} \sum_{i_1, \dots, i_r=1}^n |A(i_1, \dots, i_r)|.
	$$
	The cut distance of two labeled $r$-graphs or $r$-arrays $F$ and $G$ on the same vertex set $[n]$ is 
	$$
	d_\square(F,G)= \|F-G\|_\square,
	$$
	where $F(S_1, \dots, S_r)= \sum_{i_j \in S_j} F(i_1, \dots, i_r)$. The edit distance of the same pair is
	$$
	d_1(F,G)=  \|F-G\|_1.
	$$
	
\end{definition}

The continuous counterparts are described as follows.

\begin{definition}
	
	The cut norm of a naive $r$-graphon $W$ is 
	$$
	\|W\|_\square= \max_{S_1, \dots , S_r \subset [0,1]} \left|\int_{S_1 \times \dots \times S_r} W(x) \du x \right|,
	$$
	the cut distance of two naive $r$-graphons $W$ and $U$ is
	$$
	\delta_\square(W, U) = \inf_{\phi, \psi} \|W^\phi-U^\psi\|_\square,
	$$
	where the infimum runs over all measure-preserving permutations of $[0,1]$, and the graphon $W^\phi$ is defined as $W^\phi(x_1, \dots, x_r)= W(\phi(x_1),\dots,\phi(x_r))$. The cut distance for arbitrary unlabeled $r$-graphs or $r$-arrays $F$ and $G$ is
	$$
	\delta_\square(F,G)=\delta_\square(W_F,W_G).
	$$
\end{definition}

We remark that the above definition of the cut norm and distance is not satisfactory from one important aspect for $r \geq 3$: Not all sub-$r$-graph densities are continuous functions in the topology induced by this norm even in the most simple case, when $\KK=\{0,1\}$. Examples of subgraphs whose densities behave well with respect to the above norms are linear hypegraphs, that have the property that any two distinct edges intersect at most in one node. 


Originally, in \cite{BCL}, testability of $(\KK,r)$-graph parameters (which are real functions invariant under $r$-graph-isomorphisms) was defined as follows.
\begin{definition} \label{ch3:deftest}
	A $(\KK,r)$-graph parameter $f$ is testable, if for every $\varepsilon>0$ there exists a $k(\varepsilon) \in \mathbb{N}$ such that for every $k\geq k(\varepsilon)$ and simple $(\KK,r)$-graph $G$ on at least $k$ vertices 
	\[ \PPP(|f(G)-f(\mathbb{G}(k,G))|>\varepsilon)<\varepsilon.\]
\end{definition}

A $(\KK,r)$-graphon parameter $f$ is a functional on the space of $r$-graphons that is invariant under the action induced by measure preserving maps from $[0,1]$ to $[0,1]$, that is, $f(W)=f(W^\phi)$. Their testability is defined analogously to Definition \ref{ch3:deftest}.

\paragraph{Testing parameters}
A characterization of the testability of a graph parameter in terms of graph limits was developed in \cite{BCL} for $\KK=\{0,1\}$ in the undirected case, we will focus in the next paragraphs on this most simple setting and give an overview on previous work. Recall \Cref{ch3:deftest}.

\begin{theorem} \label{ch3:test}
	\cite{BCL} 
	Let $f$ be a simple graph parameter, then the following statements are equivalent.
	\begin{enumerate}[(i)]
		\item The parameter $f$ is testable.
		\item For every $\varepsilon>0$ there exists a $k(\varepsilon) \in \mathbb{N}$ such that for every $k\geq k(\varepsilon)$ and simple graph $G$ on at least $k$ vertices 
		\[ |f(G)-\E f(\mathbb{G}(k,G))| <\varepsilon.\]
		\item For every convergent sequence $(G_n)_{n=1}^\infty$  of simple graphs with $|V(G_n)| \to \infty$ the numerical sequence $(f(G_n))_{n=1}^\infty$ also converges. 
		\item For every $\varepsilon > 0$ there exist a $\varepsilon' > 0$ and a $n_0 \in \N$ such that for every pair $G_1$ and $G_2$ of simple graphs $|V(G_1)|,|V(G_2)| \geq n_0$ and $\delta_\square(G_1,G_2) < \varepsilon'$ together imply $|f(G_1)-f(G_2)| < \varepsilon$.
		\item There exists a $\delta_\square$-continuous functional $f'$ on the space of graphons, so that $f(G_n) \to f'(W)$ whenever $G_n \to W$.
		
	\end{enumerate}
\end{theorem}


A closely related  notion to parameter testing is property testing. 
A simple graph property $\P$ is characterized by the subset of the set of simple graphs containing the graphs which have the property, in what follows $\P$ will be identified with this subset.
\begin{definition}\cite{LSztest}
$\P$ is testable, if there exists another graph property $\P'$, such that
\begin{enumerate}[(a)]
\item $\PPP(\G(k,G) \in \P') \geq \frac{2}{3}$ for every $k \geq 1$ and $G \in \P$, and
\item for every $\varepsilon > 0$ there is a  $k(\varepsilon)$ such that for every $k \geq k(\varepsilon)$ and $G$ with $d_1(G,\P) \geq \varepsilon$ we have  that $\PPP(\G(k,G) \in \P') \leq \frac{1}{3}$.
\end{enumerate}
\end{definition}
Note that $\frac{1}{3}$ and $\frac{2}{3}$ in the definition can be replaced by arbitrary constants $0<a<b<1$, this change may alter the corresponding certificate $\P'$, but not the characteristic of testability.
The link below between the two notions is a simple consequence of the definitions. These concepts may be extended to the infinitary space of graphons, where a similar notion of sampling is available.
\begin{lemma} \cite{LSztest}
$\P$ is a testable graph property if and only if $d_1(.,\P)$ is a testable graph parameter.
\end{lemma}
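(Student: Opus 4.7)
The plan is to prove both directions separately, with the easy direction serving mainly to fix constants and the hard direction relying on the characterization of testable parameters via convergence in Theorem~\ref{test}.

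For the easy direction ($\Leftarrow$), I would convert the parameter tester for $d_1(\cdot,\P)$ into a property tester for $\P$ by thresholding. Given target error $\varepsilon$, take sample size $k=k(\varepsilon/3)$ from parameter testability (after standard probability amplification) and define the certificate property $\P' = \P'_\varepsilon = \{H : d_1(H,\P) \leq \varepsilon/2\}$. If $G \in \P$ then $d_1(G,\P)=0$, so with probability $\geq 2/3$ the sample satisfies $d_1(\G(k,G),\P) \leq \varepsilon/3 < \varepsilon/2$ and lands in $\P'$. If $d_1(G,\P) \geq \varepsilon$ then with probability $\geq 2/3$ the sample satisfies $d_1(\G(k,G),\P) \geq 2\varepsilon/3 > \varepsilon/2$, so $\PP(\G(k,G)\in\P') \leq 1/3$.

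For the harder direction ($\Rightarrow$), I would verify criterion (ii) of Theorem~\ref{test}: every convergent sequence $G_n \to \Gamma$ with $|V(G_n)|\to\infty$ makes $d_1(G_n,\P)$ convergent. Boundedness in $[0,1]$ reduces this to ruling out two distinct accumulation points $a<b$; set $\delta=(b-a)/4$. The testability of $\P$ provides a certificate $\P'_\delta$ with sample size $k=k(\delta)$, and since $\G(k,G_n)\to\G(k,\Gamma)$ in distribution by Theorem~\ref{thm1}(4), the numerical sequence $\PP(\G(k,G_n)\in\P'_\delta)$ converges. Along the subsequence where $d_1(G_{m_i},\P)\to b$, the inequality $d_1(G_{m_i},\P) > a+3\delta > \delta$ (assuming $a\geq 0$; otherwise shift by $\delta$) forces the limit $\leq 1/3$. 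Along the subsequence where $d_1(G_{n_i},\P)\to a$, choose $H_{n_i}\in\P$ with $d_1(G_{n_i},H_{n_i})<a+2\delta$; the tester applied to $H_{n_i}$ yields $\PP(\G(k,H_{n_i})\in\P'_\delta)\geq 2/3$, and coupling the samples of $G_{n_i}$ and $H_{n_i}$ on a common vertex set gives $\E\,d_1(\G(k,G_{n_i}),\G(k,H_{n_i})) \leq a+2\delta + O(k^2/|V(G_{n_i})|)$.

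The main obstacle is transferring the lower bound $\PP(\G(k,H_{n_i})\in\P'_\delta)\geq 2/3$ to a matching lower bound for $\PP(\G(k,G_{n_i})\in\P'_\delta)$, which is not automatic because an abstract certificate $\P'_\delta$ need not respect small $d_1$-perturbations on samples. I would resolve this by noting that the certificate is only required to separate graphs in $\P$ from those with $d_1(\cdot,\P)\geq\delta$, so replacing $\P'_\delta$ by its $d_1$-thickening at scale $(a+3\delta)$ (and correspondingly shrinking the decision margin) preserves both sides of the tester's guarantee: the thickened certificate still contains $\G(k,H_{n_i})$ with probability $\geq 2/3$, and it now absorbs small $d_1$-discrepancies between $\G(k,G_{n_i})$ and $\G(k,H_{n_i})$ forced by the coupling. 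This produces the contradiction that the limiting probability must simultaneously exceed $2/3$ and fall below $1/3$, ruling out distinct accumulation points and establishing convergence of $d_1(G_n,\P)$, hence testability via Theorem~\ref{test}.
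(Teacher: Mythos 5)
The paper does not actually give a proof of this lemma: it cites \cite{LSzP} and remarks only that ``the link between the two notions can be proven by definition without serious difficulties.'' So there is no paper proof to compare against; I can only assess your argument on its own merits.

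Your $(\Leftarrow)$ direction has a definitional mismatch. The paper's definition of a testable property requires a \emph{single} certificate property $\P'$ (the same one for all error gaps), with only the sample size $k(\varepsilon)$ allowed to depend on $\varepsilon$. You construct a family $\P'_\varepsilon = \{H : d_1(H,\P)\leq\varepsilon/2\}$ that varies with $\varepsilon$. This can be repaired (e.g.\ by making the threshold depend on $|V(H)|$ and diagonalizing), but as written it does not verify the stated definition.

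Your $(\Rightarrow)$ direction has a genuine gap, and you have correctly identified where it is but not closed it. The coupling transfers $\PP(\G(k,H_{n_i})\in\P'_\delta)\geq 2/3$ to $G_{n_i}$ only when $d_1(G_{n_i},H_{n_i})\to 0$, i.e.\ when the lower accumulation point satisfies $a=0$; in that regime the two samples \emph{coincide} with probability $\to 1$ and no robustness of $\P'$ is needed. For $a>0$, however, the samples remain at $d_1$-distance $\Theta(a)$ in expectation, so they are genuinely different graphs with nonvanishing probability, and membership in an abstract certificate $\P'$ is not stable under such perturbations. The proposed fix of replacing $\P'$ by its $d_1$-thickening at scale $\approx a+3\delta$ is not sound: the only guarantee you hold on the $b$-subsequence is $\PP(\G(k,G_{m_i})\in\P')\leq 1/3$, and nothing in the definition of a certificate bounds $\PP(\G(k,G_{m_i})\in\text{(thickened }\P'))$. ``Shrinking the decision margin'' is asserted to compensate, but that phrase does not correspond to any inequality the tester axioms give you. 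In short, the contradiction argument closes only the case $a=0$ vs.\ $b>0$; distinct positive accumulation points $0<a<b$ are not ruled out. To handle the general case one must work quantitatively, e.g.\ by proving the $\delta_\square$-continuity criterion of Theorem~\ref{test}(iii) directly for $f=d_1(\cdot,\P)$, or by first establishing that the certificate can be taken to be a $d_1$-ball around $\P$, rather than arguing through Theorem~\ref{test}(ii) and an abstract, possibly $d_1$-fragile certificate.
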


We provide some remarks yielded by Theorem \ref{ch3:test}.

\begin{remark}
In the case $r=2$, the testability of a graphon parameter is equivalent to continuity in the $\delta_\square$ distance.
\end{remark}

\begin{remark}
	The intuitive reason for the absence of an analogous, easily applicable characterization of testability for higher rank uniform hypergraphs as in \Cref{ch3:test} is that no natural notion of a suitable distance is available at the moment. The construction of such a metric would require to establish a standard method to compare a large hypergraph $H_n$ to its random induced subgraph on a uniform sample. 
	
	The $\delta_\square$ metric for graphs is convenient because of its concise formulation and it induces a compact limit space, the main characteristic that is exploited that the total variation distance of probability measures of induced subgraphs of fixed size is continuous in this distance, any other $\delta_{\mathrm{var}}$ with this property would fit into the above framework. 
\end{remark}

\subsection{Examples of testable properties and parameters}
We introduce now a notion of efficient parameter testability\sindex[notions]{graph parameter!efficiently testable}\sindex[notions]{hypergraph parameter!efficiently testable}.
\Cref{ch3:deftest} of testability  does not ask for a specific upper bound on $k(\varepsilon)$ in terms of $\varepsilon$, but in applications the order of magnitude of this function may be an important issue once its existence has been verified.
Therefore we introduce 
a more restrictive class of graph parameters, we  refer to them as being efficiently testable. 

\begin{definition}\label{ch3:defeff}
	An $r$-graph parameter $f$ is called $\beta$-testable\sindex[notions]{graph parameter!$\beta$-testable} for a family of measurable functions $\beta=\{ \,\beta_i \mid \beta_i\colon\R^+ \to \R^+ ,i \in I\, \} $, if there exists an $i \in I$ such that for every $\varepsilon > 0$ and $r$-graph $G$ we have
	\[ \PPP(|f(G)-f(\mathbb{G}(\beta_i(\varepsilon),G))|>\varepsilon)<\varepsilon.\]
	
\end{definition}
With slight abuse of notation we will also use the notion of $\beta$-testability for a family containing only a single function $\beta$.
The term \emph{efficient testability} will serve as shorthand for $\beta$-testability for some (family)
of functions $\beta(\varepsilon)$ that are polynomial in $\frac{1}{\varepsilon}$.
One could rephrase this in the light of \Cref{ch3:defeff} by saying that a testable parameter $f$ is efficiently testable if its sample complexity is polynomial in $1/\varepsilon$.

We will often deal with statistics that are required to be highly concentrated around their mean, this might be important for us even if their mean is not known to us in advance. A quite universal tool for this purpose is a Chernoff-type large deviation result, the Azuma-Hoeffding-inequality for martingales with bounded jumps.  
Mostly, we require the formulation given below, see e.g. \cite{AlSp} for a standard proof and a wide range of applications. We will also apply a more elaborate version of this concentration inequality below.
\begin{lemma}[Azuma-Hoeffding-inequality]\label{ch3:azuma}
	Let $(M_k)_{k=0}^n$ be a super-martingale with the natural filtration such that with probability $1$ for every $k\in[n]$ we have $|M_k-M_{k-1}|\leq c_k$. Then for every $\varepsilon>0$ we have
	\begin{align*}
	\PPP ( |M_n -M_0| \geq \varepsilon) \leq 2 \exp\left(-\frac{\varepsilon^2 }{ 2\sum_{k=1}^n c^2_k} \right).
	\end{align*} 
\end{lemma}\sindex[notions]{Azuma-Hoeffding-inequality}

We will list some examples of graph parameters, for which there is information available about their sample complexity implicitly or explicitly in the literature.

\begin{example}
	One of the most basic testable simple graph parameters are subgraph densities $f_F(G)=t(F,G)$, where $F$ is a simple graph. The next result was formulated as Theorem 2.5 in \cite{LSzlim}, see also for hypergraphs Theorem 11 in \cite{ESz}.\end{example}
\begin{lemma}\cite{LSzlim,ESz}\label{ch3:subgraphcount}
	Let  $\varepsilon>0$ $q,r \geq 1$ be arbitrary. For any $q$-colored $r$-graphs $F$ and $G$, and integer $k\geq|V(F)|$ we have
	\begin{equation*} 
	\PPP(|t_\inj(F,G)-t_\inj(F, \G(k,G))|> \varepsilon) < 2\exp\left(-\frac{\varepsilon^2k}{2|V(F)|^2}\right),
	\end{equation*}
	and
	\begin{equation} 
	\label{ch3:sg_conc} \PPP(|t(F,G)-t(F, \G(k,G))|> \varepsilon) < 2\exp\left(-\frac{\varepsilon^2k}{18|V(F)|^2}\right).
	\end{equation}
	For any $q$-colored $r$-graphon $W$ we have
	\begin{equation*} 
	\PPP(|t(F,W)-t_\inj(F, \G(k,W))|> \varepsilon) < 2\exp\left(-\frac{\varepsilon^2k}{2|V(F)|^2}\right),
	\end{equation*}
	and
	\begin{equation*} 
	\PPP(|t(F,W)-t(F, \G(k,W))|> \varepsilon) < 2\exp\left(-\frac{\varepsilon^2k}{8|V(F)|^2}\right).
	\end{equation*}
\end{lemma}
This implies that for any $F$ that the parameter $f_F$ is $\mathcal{O}(\log(\frac{1}{\varepsilon})\varepsilon^{-2})$-testable. In the case of $(\KK,r)$-graphs for arbitrary $r$ the same as \Cref{ch3:subgraphcount} holds, this can be shown by a straightforward application of the Azuma-Hoeffding inequality, \Cref{ch3:azuma}, as in the original proofs.

\begin{example}\label{ch3:exgse}
	For $r=2$, $q,n \in \N$, $J \in \R^{q\times q}$, $h \in \R^q$, and $G \in \Pi^2_n$ we consider the energy
	\begin{align}\label{ch3:eq112}
	\EEE_\phi(G,J,h)=  \frac{1}{n^2}\sum_{1\leq i,j \leq q} J_{ij} e_G(\phi^{-1}(i),\phi^{-1}(j)) + \frac{1}{n}\sum_{1\leq i\leq q} h_i |\phi^{-1}(i)|,
	\end{align}
	of a partition $\phi \colon V(G) \to [q]$, and
	\begin{align}
	\hat \EEE(G,J,h)= \max_{\phi\colon V(G)\to[q]} \EEE_\phi(G,J,h),
	\end{align}\sindex[symbols]{e@$\EEE_\phi(G,J,h)$,$\hat \EEE(G,J,h)$}
	that is the \emph{ground state energy}\sindex[notions]{ground state energy} of the graph $G$ (cf.~\cite{BCL2}) with respect to $J$ and $h$, where $e_G(S,T)$ denotes the number of edges going form $S$ to $T$ in $G$. These graph functions originate from statistical physics, for the rigorous mathematical treatment of the topic see e.g. Sinai's book \cite{Si}. The energy expression whose maximum is sought is also referred to as a Hamiltonian.  
	In the literature this notion is also often to be found with negative sign or different normalization, more on this below.

	This  graph parameter can be expressed 
	in the terminology applied for MAX-$2$CSP.
	Let the corresponding $2$CSP formula to the pair ($G$,$J$) be $F$ with domain $K=[q]$.
	The formula $F$ is comprised of  the constraints $(g_0;x_{(i,j)})$ for every edge $(i,j)$ of $G$, where $g_0$ is the constraint type whose evaluation table is $J$, and additionally it contains $n$ copies of $(g_1;x_i)$ for every vertex $i$ of $G$, where $g_1$ is the constraint type in one variable with evaluation vector $h$. Then the optimal value of the objective function of the MAX-$2$CSP problem of the instance $F$ is equal to $\hat \EEE(G, J, h)$. Note that this correspondence is consistent with the sampling procedure, that is, to the pair ($\G(k,G)$,$J$) corresponds the $2$CSP formula $\G(k,F)$.
	Therefore $\hat \EEE(.,J,h)$ has sample complexity $\mathcal{O}(\frac{1}{\varepsilon^4})$(see \cite{AVKK2},\cite{MS}).
	
	These energies are directly connected to the number $\hom(G,H)$\sindex[symbols]{h@$\hom(G,H)$}  of admissible vertex colorings of $G$ by the colors $V(H)$ for a certain small weighted graph $H$.
	This was pointed out in \cite{BCL2}, (2.16), namely
	\begin{equation} \label{ch3:col}
	\frac{1}{|V(G)|^2}\ln\hom(G,H)=\hat \EEE(G,J)+\mathcal{O}\left(\frac{1}{|V(G)|}\right), 
	\end{equation}
	where the edge weights of $H$ are $\beta_{ij}(H)=\exp(J_{ij})$. The former line of thought of transforming ground state energies into MAX-$2$CSPs  is also valid in the case of $r$-graphs and $r$CSPs for arbitrary $r$.
	
	The results on the sample complexity of MAX-$r$CSP for $q=2$ can be extended beyond the case of simple 
	hypergraphs, higher dimensional Hamiltonians are also expressible as $r$CSP formulas. The generalization for arbitrary $q$ and to $r$-graphons will follow in the next section. Additionally we note, that an analogous statement to (\ref{ch3:col}) on testability of coloring numbers does not follow immediately for $r \geq 3$.
	
	On the other hand, with the notion of the ground state energy available, we may rewrite the MAX-$2$CSP  in a compact form as an energy problem. We will execute this task right away for limit objects.
	First, we introduce the ground state energy of a $2$-kernel with respect to an interaction matrix $J$. The collection $\phi=(\phi_1, \dots, \phi_q)$ is a fractional $q$-partition\sindex[notions]{fractional $q$-partition} of $[0,1]$ with the components being measurable non-negative functions on $[0,1]$, if for every $x \in [0,1]$ it holds that $\sum_{i=1}^q \phi_i(x)=1$.\end{example}
\begin{definition}\label{ch3:defgse}
	Let  $q \geq 1$, $J \in \R^{q \times q}$. Then the ground state energy of the  $2$-kernel $W$ with respect to $J$ is
	\begin{equation*} 
	\EEE (W,J)=\max_\phi \sum_{z \in [q]^2} J_z\int_{[0,1]^2} \phi_{z_1}(x)\phi_{z_2}(y) W(x,y) \du x \du y, 
	\end{equation*}	
	where $\phi$ runs over all fractional $q$-partitions of $[0,1]$.
\end{definition}\sindex[symbols]{e@$\EEE (W,J)$}
Let $K=[q]$, $L=\{0,1, \dots, d\}^{[q]^2}$ and $(F_n)_{n=1}^\infty$ be a convergent sequence of $2$CSP formulas. Consider the corresponding sequence of graphs $\eval(F_n)=(\tilde F_n^z)_{z \in [q]^2}$ for each $n$, and let $W=(W^z)_{z \in [q]^2}$ be the respective limit. Let $f$ be the $(L,2)$-graph parameter so that $f(\eval(F))$ is equal to the density of the MAX-$2$CSP value for the instance $F$. Then it is not hard to see that $f$ can be extended to the limit space the following way
$$
f(W)=\max_\phi  \sum_{i,j=1}^q \int_{[0,1]^2} \phi_i(x) \phi_j(y) W^{(i,j)}(x,y) \du x \du y,
$$
where $\phi$ runs over all fractional $q$-partitions of $[0,1]$. The formula is a special case of the \emph{layered ground state energy} with the interaction matrices defined by $J^{i,j}(k,l)=\I_i(k) \I_j(l)$ that is defined below.


\begin{example}
	The efficiency of testing a graph parameter can be investigated in terms of some additional continuity condition in the $\delta_\square$ metric. Direct consequence of results from \cite{BCL} will be presented in the next lemma.\end{example}
\begin{lemma} \label{ch3:cont} Let $f$ be a simple graph parameter that is $\alpha$-H\"older-continuous\sindex[notions]{H\"older-continuity} in the $\delta_\square$ metric in the following sense: There exists a $C >0$ such that for every $\varepsilon > 0$ there exists $n_0(\varepsilon)$ so that if for the simple graphs $G_1$, $G_2$ it holds that $|V(G_1)|,|V(G_2)| \geq n_0(\varepsilon)$ and $\delta_\square(G_1,G_2)\leq \varepsilon$, then $|f(G_1)-f(G_2)| \leq C \delta^\alpha_\square(G_1,G_2)$. Then $f$ is $\max\{2^{\mathcal{O}\left(\frac{1}{\varepsilon^{2/\alpha}}\right)}, n_0(\varepsilon)\}$-testable.
\end{lemma}
\begin{proof}
	To see this, let us fix $\varepsilon >0$. Then for an arbitrary simple graph $G$ with $|V(G)| \geq  n_0(\varepsilon)$ and $k \geq n_0(\varepsilon)$ we have
	\begin{equation} \label{ch3:cuttest}
	|f(G)-f(\G(k,G))| \leq C \left[\delta_\square(G, \G(k,G))\right]^\alpha < C \left(\frac{10}{\sqrt{\log_2k}}\right)^\alpha,
	\end{equation}
	with probability at least $1-\exp(-\frac{k^2}{2 \log_2 k})$. The last probability bound  in (\ref{ch3:cuttest}) is the statement of Theorem 2.9 of \cite{BCL}. We may rewrite (\ref{ch3:cuttest}) by setting $\varepsilon=C \left(\frac{10}{\sqrt{\log_2k}}\right)^\alpha$, the substitution implies that $f$ is $2^{\mathcal{O}\left(\varepsilon^{-2/\alpha}\right)}$-testable, whenever $n_0(\varepsilon)\leq 2^{\mathcal{O}(\varepsilon^{-2/\alpha})}$. 
\end{proof}
This latter approach is hard to generalize in a meaningful way to $r$-graphs for $r\geq 3$ because of the absence of a suitable metric, see the discussion above. The converse direction, namely formulating a qualitative statement about the continuity of $f$ with respect to $\delta_\square$ obtained from the information about the sample complexity is also a worthwhile problem.

\section{Testability of the ground state energy} \label{ch4:sec:gse}

Assume that $\KK$ is a compact Polish space, and $r$ is a positive integer. First we provide the basic definition of the energy of a $(\KK,r)$-graphon $W \colon [0,1]^{\hhh([r])} \to \KK$ with respect to some $q \geq 1$, an $r$-array $J \in C(\KK)^{q \times  \dots \times q}$, and a fractional partition $\phi=(\phi_1, \dots , \phi_q)$. With slight abuse of notation, the graphons in the upcoming parts of the section assume both the $\KK$-valued and the probability measure valued form, it will be clear from the context which one of them is meant.

Recall \Cref{ch3:defenergycont} of the energies of naive $r$-kernels, the version for true $(\KK,r)$-graphons is
\begin{align}\label{ch3:eq333}
\EEE_\phi(W, J)=\sum_{z_1,\dots,z_r=1}^{q}  \int_{[0,1]^{\hhh([r])}}  J_{z_1,\dots, z_r} (W(x_{\hhh([r])})) \prod_{j=1}^{r} \phi_{z_j}(x_{\{j\}}) \du \lambda(x_{\hhh([r])}).
\end{align}

The value of the above integral can be determined by first integrating over the coordinates corresponding to subsets of $[r]$ with at least two elements, and then over the remaining ones. The interior partial integral is then not dependent on $\phi$, so it can be calculated in advance in the case when we want to optimize over all choices of fractional partitions. Therefore focusing attention on the naive kernel version does not lead to any loss of generality in terms of testing, see below.

When dealing with a so-called integer partition\sindex[notions]{integer $q$-partition} $\phi=(\I_{T_1}, \dots, \I_{T_q})$, one is able  to rewrite the former expression (\ref{ch3:eq333}) as
$$
\EEE_\phi(W, J)=\sum_{z_1,\dots, z_r=1}^{q}  \int_{p_{\hhh([r],1)}^{-1}(T_{z_1}\times \dots \times T_{z_r})}  J_{z_1,\dots, z_r} (W(x_{\hhh([r])}) ) \du \lambda(x_{\hhh([r]}),
$$
where $p_D$ stands for the projection of $[0,1]^{\hhh([r])}$ to the coordinates contained in the set $D$. 

The energy of a $(\KK,r)$-graph $G$ on $k$ vertices with respect to the $J \in C(\KK)^{q \times  \dots \times q}$ for the fractional $q$-partition\sindex[notions]{fractional $q$-partition} $x_n=(x_{n,1}, \dots, x_{n,q})$ for $n=1, \dots, k$ (i. e., $x_{n, m} \in [0,1]$ and $\sum_m x_{n,m}=1$) is defined as
\begin{equation} \label{ch4:deffinen}
\EEE_{\mathrm x}(G,J)=\frac{1}{k^r}\sum_{z_1,\dots,z_r=1}^{q}  \sum_{n_1,\dots,n_r=1}^{k} J_{z_1,\dots, z_r} (G(n_1, \dots, n_r)) \prod_{j=1}^r x_{n_j,z_j}.
\end{equation}

In the case when $\KK=\{0,1\}$ and $J_{z_1,\dots, z_r}(x)=a_{z_1,\dots, z_r} \I_{1}(x)$ is a constant multiple of the indicator function of $1$ we retrieve the original GSE notion in \Cref{ch3:exgse} and \Cref{ch3:defgse}.

\begin{remark} 
	Ground state energies and subgraph densities are Lipschitz continuous graph parameters in the sense of \Cref{ch3:cont} (\cite{BCL},\cite{BCL2}), but that result implies much weaker upper bounds on the sample complexity, than the best ones known to date. This is due to the fact, that $\delta_\square(G, \G(k,G))$ decreases with magnitude $1/\sqrt{\log k}$ in $k$, which is the result of the difficulty of finding a near optimal overlay between two graphons through a measure preserving permutation of $[0,1]$ in order to calculate their $\delta_\square$ distance. On the other hand, if the sample size $k(\varepsilon)$ is exponentially large in $1/\varepsilon$, then the distance $\delta_\square(G, \G(k,G))$ is small with high probability, therefore 
	all H\"older-continuous graph parameters at $G$ can be estimated simultaneously with high success probability by looking at the values at $\G(k,G))$.
\end{remark}

Next we introduce the layered version of the ground state energy\sindex[notions]{ground state energy!layered}. This is a generalized optimization problem where we wish to obtain the optimal value corresponding to fractional partitions of the sums of energies over a finite layer set. 

\begin{definition}\label{ch4:def.lay.gse}
	Let $\eE$ be a finite layer set, $\KK$ be a compact set, and $W=(W^\ee)_{\ee \in \eE}$ be a tuple of $(\KK, r)$-graphons. Let $q$ be a fixed positive integer and let $J=(J^\ee)_{\ee \in \eE}$ with $J^\ee \in C(\KK)^{q\times \dots \times q}$ for every  $\ee \in \eE$.  For a  $\phi=(\phi_1, \dots, \phi_q)$ fractional $q$-partition of $[0,1]$ let 
	$$
	\EEE_\phi (W,J)=\sum_{\ee \in \eE} \EEE_\phi(W^\ee, J^\ee)
	$$
	and let   
	$$
	\EEE (W,J)=\max_{\phi} \EEE_\phi (W,J),
	$$
	denote the layered ground state energy, where the maximum runs over all fractional $q$-partitions of $[0,1]$. 
	
	We define for $G=(G^\ee)_{\ee \in \eE}$ the energy $\EEE_{\mathrm x}(G,J)$ analogously as the energy sum over $\eE$, see (\ref{ch4:deffinen}) above, and $\hat \EEE(G,J)=\max_{\mathrm x} \EEE_{\mathrm x} (G,J)$ where the maximum runs over integer $q$-partitions\sindex[notions]{integer $q$-partition} ($x_{n,m} \in\{0,1\}$ ), respectively $\EEE(G,J)=\max_{\mathrm x} \EEE_{\mathrm x} (G,J)$, where the maximum is taken over all fractional $q$-partitions ${\mathrm x}$.
\end{definition}

Now we will rewrite the unweighted boolean limit MAX-$r$CSP (recall \Cref{ch2:defcsp2}) as a layered ground state energy problem. Let $\eE=\{0,1\}^r$,  $\KK=\{0,1, \dots, 2^r\}$, $W=(W^z)_{z \in \{0,1\}^r}$ with $W^z$ being $(\KK,r)$-graphons, and let  
\begin{equation}
\alpha(W)= \max\limits_{\phi}  \sum_{\substack{z \in \{0,1\}^r}}  \int_{\substack{[0,1]^{\hhh([r])}}} \prod_{j = 1}^r \phi(x_{\{j\}})^{z_j} (1-\phi(x_{\{j\}}) )^{1-z_j} W^z (x) \du \lambda(x), \nonumber
\end{equation}
where the maximum is taken over all measurable functions $\phi\colon[0,1] \to [0,1]$.
If $\eval(F)=(F^z)_{z \in \{0,1\}^r}$ is a $(\KK^{\eE},r)$-graph corresponding to a boolean $r$CSP formula $F$ with $k$ variables, then the finite integer version of $\alpha$ is given by 
\begin{align*}
\hat \alpha(\eval(F))=\max_{\mathrm{x}} \frac{1}{k^r}\sum_{\substack{z \in \{0,1\}^r}}  \sum_{n_1,\dots,n_r=1}^{k} F^z(n_1, \dots, n_r) \prod_{j=1}^r x_{n_j,z_j},
\end{align*} 
where the maximum runs over integer $2$-partitions of $[k]$. It is clear that $\hat \alpha(\eval(F))$ is equal to the density of the optimum of the MAX-$r$CSP problem of $F$.

We return to the general setting and summarize the  
involved parameters in the layered ground state energy problem. These are the dimension $r$, the layer set $\eE$, the number of states $q$, the color set $\KK$, 
the finite or limit case. Our main theorem on the paper will be a generalization of the following theorem on sample complexity of $r$CSPs with respect to these factors.

The main result of \cite{AVKK2} was the following.
\begin{theorem} \label{ch4:cspcompl}
	\cite{AVKK2} 
	Let $F$ be an unweighted boolean  $r$CSP formula. Then for any $\varepsilon >0$ and $\delta >0$ we have that for $k \in {\mathcal O}(\varepsilon^{-4} \log(\frac{1}{\varepsilon}))$ it holds that
	$$
	\PPP\left(|\hat\alpha(\eval(F))-\hat\alpha(\G(k,\eval(F)))|>\varepsilon\right)<\delta.
	$$
\end{theorem}

The upper bound on $k$ in the above result was subsequently improved by \citet{MS} to $k \in \mathcal{O}(\varepsilon^{-4})$.
We will see in what follows that also the infinitary version of the 
above statement is true.
It will be stated in terms of layered ground state energies of edge colored hypergraphs, and will settle the issue regarding the efficiency of testability of the mentioned parameters in the greatest generality with respect to the previously highlighted aspects. However, what the exact order of the magnitude of the sample complexity of the MAX-$r$CSP and the GSE problem is remains an open question.

In order to simplify the analysis we introduce the \emph{canonical form} \sindex[notions]{ground state energy!canonical form}of the problem, that denote layered ground state energies of $[q]^r$-tuples of $([-d,d],r)$-graphons 
with the special interaction $r$-arrays\sindex[notions]{interaction $r$-array} $\hat J^z$ for each $z \in [q]^r$, that have the identity function $f(x)=x$ as the $(z_1, \dots, z_r)$ entry and the constant $0$ function for the other entries. In most of what follows we will drop the dependence on $J$ in the energy function when it is clear that we mean the aforementioned canonical $\hat J$, and will employ the notation $\EEE_x(G)$, $\EEE(G)$, $\hat \EEE(G)$, $\EEE_\phi (W)$, and $\EEE(W)$\sindex[symbols]{e@$\EEE_x(G)$, $\EEE(G)$, $\hat \EEE(G)$, $\EEE_\phi (W)$, $\EEE(W)$} (dependence on $q$ is hidden in the notation), where $G$ and $W$ are $[q]^r$-tuples of $([-d, d],r)$-graphs and graphons, respectively. We are ready to state the main result of the paper.

\begin{theorem} \label{ch4:main}
	Let $r\geq 1$, $q \geq 1$, and $\varepsilon >0$. Then for any $[q]^r$-tuple of $([-\|W\|_\infty,\|W\|_\infty],r)$-graphons $W=(W^z)_{z \in [q]^r}$ and $k \geq \Theta^4  \log(\Theta) q^r$ with $\Theta=\frac{2^{r+7}q^r r}{\varepsilon}$  we have 
	\begin{align}\label{ch4:eq1}
	\PPP(|\EEE(W)-\hat \EEE(\G (k,W))|>\varepsilon \|W\|_\infty)<\varepsilon.
	\end{align}
\end{theorem}

A direct consequence of \Cref{ch4:main} is the corresponding result for layered ground state energies.

\begin{corollary} \label{ch4:maincor2}
	Let $\eE$ be a finite layer set, $\KK$ a compact Polish color set, $q \geq 1$, $r$-arrays $J=(J^\ee)_{\ee \in \eE}$ with $J^\ee \in C(\KK)^{q \times \dots \times q}$, and $\varepsilon >0$. Then we have that for any $\eE$-tuple of $(\KK,r)$-graphon $W=(W^\ee)_{\ee \in \eE}$ and $k \geq \Theta^4  \log(\Theta) q^r$ with $\Theta=\frac{2^{r+7}q^r r}{\varepsilon}$ that
	$$
	\PPP(|\EEE(W,J)-\hat \EEE(\G (k,W),J)|>\varepsilon |\eE| \, \|J\|_\infty \,\|W\|_\infty)<\varepsilon.
	$$
\end{corollary}

\begin{proof}
	We make no specific restrictions on the color set $\KK$ and on the set $\eE$ of layers except for finiteness of the second, therefore it will be convenient to rewrite the layered energies $\EEE_\phi(W,J)$ into a more universal form as a sum of proper Hamiltonians in order to suppress the role of $\KK$ and $\eE$. Let
	\begin{align}
	\EEE_\phi(W,J)&=\sum_{\ee \in \eE} \sum_{z_1, \dots , z_r \in [q]}  \int_{[0,1]^{\hhh([r])}} \prod_{j \in [q]} \phi_{z_j}(x_{\{j\}})  J^\ee_{z_1, \dots , z_r}(W^\ee(x) ) \du \lambda(x_{\hhh([r])})  \nonumber \\
	&=\sum_{z_1, \dots , z_r \in [q]} \int_{[0,1]^{\hhh([r])}} \prod_{j \in [r]} \phi_{z_j}(x_{\{j\}}) \left[ \sum_{\ee \in \eE}  J^\ee_{z_1, \dots , z_r}(W^\ee(x) )  \right] \du \lambda(x_{\hhh([r])}).\nonumber
	\end{align}
	Motivated by this reformulation we introduce for every $(W,J)$ pair a special auxiliary instance of the ground state problem that is defined for a $[q]^r$-tuple of $([-d,d], r)$-graphons, where $d= |\eE| \, \|J\|_\infty \, \|W\|_\infty$. For any $z \in [q]^r$, let  $\hat W^z(x)= \sum_{\ee \in \eE}  J^\ee_{z_1, \dots , z_r}(W^\ee(x))$ for each $x \in [0,1]^{\hhh([r])}$, and let the interaction matrices $\hat J^z$ be of the canonical form. We obtain for any fractional partition $\phi$ of  $[0,1]$ into $q$ parts that $\EEE_\phi(W,J)=\EEE_\phi(\hat W,\hat J)$, and also $\EEE_{\mathrm{x}}(\G (k,W),J)=\EEE_\mathrm{x}(\G (k,\hat W),\hat J)$ for any fractional partition $\mathrm{x}$, where the two random $r$-graphs are obtained via the same sample. Therefore, without loss of generality, we are able to reduce the statement of the corollary to the statement of \Cref{ch4:main} dealing with ground state energies of canonical form. 
\end{proof}



We start with the proof of \Cref{ch4:main} by providing the necessary background. We will proceed loosely along the lines of the proof of \Cref{ch4:cspcompl} from \cite{AVKK2} with most of the required lemmas being refinements of the respective ones in the proof of that theorem. We will formulate and verify these auxiliary lemmas one after another, afterwards we will compile them to prove the main statement. The arguments made in \cite{AVKK2} carry through adapted to our continuous setting with some modifications, and we will also draw on tools from \cite{BCL} and \cite{BCL2}. The first lemma tells us that in the real-valued case the energy of the sample and that of the averaged sample do not differ by a large amount.
\begin{lemma}\label{ch4:weightedlemma}
	Let $W$ be a $([-d,d],r)$-graphon, $q \geq 1$, $ J\in \R^{q \times \dots \times q}$. Then for every $k \geq 1$ there is a coupling of $\G(k,W)$ and $\hh(k,W)$ such that
	\begin{equation*}
	\PPP\left( |\hat \EEE(\G(k,W),J) - \hat \EEE(\hh(k,W), J) |> \varepsilon \|J\|_\infty\|W\|_\infty \right)  \leq 2\exp\left(-k\left(\frac{\varepsilon^2 k}{2} - \log q\right) \right)
	\end{equation*}
\end{lemma}
\begin{proof}
	Let us fix a integer $q$-partition $\mathrm{x}$ of $[k]$, and furthermore let the two random $r$-graphs be generated by the same sample $(U_{S})_{S \in \hhh([k], r)}$. Then
	\begin{align*}
	\hat \EEE_{\mathrm x}(\G(k,W),J) &= \frac{1}{k^r} \sum_{z_1,\dots,z_r=1}^{q}  \sum_{n_1,\dots,n_r=1}^{k} J_{z_1,\dots, z_r} W((U_{S})_{S \in \hhh(\{n_1, \dots, n_r\}, r)}) \prod_{j=1}^r x_{n_j,z_j}, 
	\end{align*}
	and
	\begin{align*}
	&\hat \EEE_{\mathrm x}(\hh(k,W),J)  \\ 
	&= \frac{1}{k^r}\sum_{z_1,\dots,z_r=1}^{q}  \sum_{n_1,\dots,n_r=1}^{k}   J_{z_1,\dots, z_r} \E[W((U_{S})_{S \in \hhh(\{n_1, \dots, n_r\}, r)})\mid  (U_{S})_{S \in \hhh(\{n_1, \dots, n_r\}, 1)}]  \prod_{j=1}^r x_{n_j,z_j}.&
	\end{align*}
	Let us enumerate the elements of ${k \choose 2}$ as $e_1, e_2, \dots, e_{k \choose 2}$, and define the martingale 
	\begin{align*}
	Y_0=\E[\hat \EEE_{\mathrm x}(\G(k,W),J) \mid \{ \,U_j\mid  j\in [k]\, \}], 
	\end{align*} and 
	\begin{align*}Y_t=\E\left[\hat \EEE_{\mathrm x}(\hh(k,W),J)\mid  \{ \,U_j\mid  j\in [k]\, \}\cup \left(\cup_{j=1}^t \{ \,U_{S}\mid  e_j \subset S\, \}\right)\right]
	\end{align*}
	for each $1\leq t\leq {k \choose 2}$, so that $Y_0=\hat \EEE_{\mathrm x}(\hh(k,W),J)$ and $Y_{k \choose 2}=\hat \EEE_{\mathrm x}(\G(k,W),J).$ For each $t \in {k \choose 2}$ we  can upper bound the difference, $|Y_{t-1}-Y_t|\leq \frac{1}{k^2} \|J\|_\infty \|W\|_\infty$. By the Azuma-Hoeffding inequality, \Cref{ch3:azuma}, it follows that
	\begin{equation}
	\PPP(|Y_t-Y_0| \geq \rho) \leq 2\exp\left(-\frac{\rho^2k^4}{2{k \choose 2}  \|J\|_\infty^2 \|W\|_\infty^2}\right)\leq2\exp\left(-\frac{\rho^2k^2}{2  \|J\|_\infty^2 \|W\|_\infty^2}\right),
	\end{equation} 
	for any $\rho>0$.
	
	There are $q^k$ distinct integer $q$-partitions of $[k]$, hence
	\begin{align}
	\PPP\left( |\hat \EEE(\G(k,W),J) - \hat \EEE(\hh(k,W), J) |> \varepsilon \|J\|_\infty\|W\|_\infty \right)  \leq 2\exp\left(-k\left(\frac{\varepsilon^2 k}{2} - \log q\right) \right).  
	\end{align}
\end{proof}

In the following lemmas every $r$-graph or graphon is meant to be as bounded real-valued and directed.

We would like to point out in the beginning that in the finite case we are able to shift from the integer optimization problem to the relaxed one with having a reasonably good upper bound on the difference of the optimal values of the two.
\begin{lemma} \label{ch4:intcont}
	Let $G$ be a real-valued $r$-graph on $[k]$ and $ J\in \R^{q \times \dots \times q}$. Then
	$$
	|\EEE(G,J) - \hat \EEE(G,J)| \leq r^2 \frac{1}{2k} \|G\|_\infty \|J\|_\infty.
	$$
\end{lemma}
\begin{proof}
	Trivially we have $\EEE(G,J) \geq \hat \EEE(G,J)$. We define $G'$ by setting all entries of $G$ to $0$ which have at least two coordinates which are the same (for $r=2$ these are the diagonal entries). Thus, we get that 
	$$
	|\EEE(G,J) - \EEE(G',J)| \leq {r \choose 2} \frac{1}{k} \|G\|_\infty \|J\|_\infty.
	$$
	Now assume that we are given a fractional partition $\overline{\mathrm{x}}$ so that  $\EEE_{\overline{\mathrm{x}}}(G',J)$ attains the maximum $\EEE(G',J)$. We fix all the entries  $\overline x_{n,1}, \dots \overline x_{n,q}$ of $\overline{\mathrm{x}}$ with $n=2, \dots, k$ and regard $\EEE_{\overline{\mathrm{x}}}(G',J)$ as a function of $x_{1,1}, \dots , x_{1,q}$. This  function will be linear in the variables $x_{1,1}, \dots , x_{1,q}$, and with the additional condition $\sum_{j=1}^r x_{1,j}=1$ we obtain a linear program. By standard arguments this program possesses an integer valued optimal solution, so we are allowed to replace $\overline x_{1,1}, \dots , \overline x_{1,q}$ by integers without letting $\EEE_{\overline{\mathrm{x}}}(G',J)$ decrease. We repeat this procedure for each $n \in [k]$, obtaining an integer optimum for  $\EEE_{\overline{\mathrm{x}}}(G',J)$, which implies that $\EEE(G',J) = \hat\EEE(G',J)$. Hence, the claim follows.
\end{proof}

Next lemma is the continuous generalization of Theorem 4 from \cite{AVKK2}, and is closely related to the Weak Regularity Lemma, \Cref{ch3:weakreglemma}, of \cite{FK}, and its continuous version \Cref{ch3:wreg}. The result is a centerpiece of the cut decomposition method.\sindex[notions]{cut decomposition method}
\begin{lemma} \label{ch4:cutapprox}
	Let $\varepsilon>0$ arbitrary. For any bounded measurable function $W\colon [0,1]^r \to \R$ there exist an $s \leq \frac{1}{\varepsilon^2}$, measurable sets $S_i^j \subset [0,1]$ with $i=1, \dots, s$, $j=1, \dots, r$, and real numbers $d_1, \dots, d_s$ so that with  $B=\sum_{i=1}^s d_i \I_{S_i^1 \times \dots \times S_i^r}$ it holds that
	\begin{enumerate}[(i)]
		\item $\|W\|_2 \geq \|W-B\|_2$, 
		\item $\|W-B\|_\square < \varepsilon \|W\|_2$, and
		\item $\sum_{i=1}^s |d_i| \leq \frac{1}{\varepsilon}\|W\|_2$.
	\end{enumerate}
\end{lemma}
\begin{proof}
	We construct stepwise the required rectangles and the respective coefficients implicitly. Let $W^0=W$, and suppose that after the $t$'th step of the construction we have already obtained every set $S_i^j \subset [0,1]$ with $i=1, \dots, t$, $j=1, \dots, r$, and the real numbers $d_1, \dots, d_t$. Set $W^t=W-\sum_{i=1}^t d_i \I_{S_i^1 \times \dots \times S_i^r}$.
	We proceed to the $(t+1)$'st step, where two possible situations can occur. The first case is when
	$$\|W^t\|_\square \geq \varepsilon \|W\|_2.$$
	This implies by definition that there exist measurable subsets $S_{t+1}^1,\dots, S_{t+1}^r$ of $[0,1]$ such that $|\int_{S_{t+1}^1 \times \dots \times S_{t+1}^r} W^t(x) \du \lambda(x)| \geq \varepsilon \|W\|_2$. We define $d_{t+1}$ to be the average of $W^t$ on the product set $S_{t+1}^1 \times \dots \times S_{t+1}^r$, and proceed to the $(t+2)$'nd step. In the case of 
	$$\|W^t\|_\square < \varepsilon \|W\|_2$$ 
	we are ready with the construction and set $s=t$.
	
	We analyze the first case to obtain an upper bound on the total number of steps required by the construction. So suppose that the first case above occurs. Then
	\begin{align}
	\|W^t\|^2_2-\|W^{t+1}\|^2_2 &= \int_{\substack{S_{t+1}^1 \times \dots \times S_{t+1}^r}} (W^t)^2(x) \du \lambda(x)  -   \int_{\substack{S_{t+1}^1 \times \dots \times S_{t+1}^r}} (W^t(x)-d_{t+1})^2 \du \lambda(x)  \nonumber \\
	&= d_{t+1}^2 \lambda(S_{t+1}^1)  \dots \lambda(S_{t+1}^r) \geq \varepsilon^2\|W\|_2^2. \label{ch4:step}
	\end{align}
	This means that the square of the $2$-norm of $W^t$ decreases in $t$ in every step when  the first case occurs in the construction by at least $\varepsilon^2\|W\|_2^2$, therefore  it can happen  only  at most $\frac{1}{\varepsilon^2}$ times, with other words $s \leq \frac{1}{\varepsilon^2}$.
	It is also clear that the $2$-norm decreases in each step, so we are left to verify the upper bound on the sum of the absolute values of the coefficients $d_i$. From (\ref{ch4:step}) we get, that
	$$
	\|W\|^2_2=\sum_{t=1}^s \|W^{t-1}\|^2_2-\|W^{t}\|^2_2\geq \sum_{t=1}^s d^2_t \lambda(S_{t}^1)  \dots \lambda(S_{t}^r).
	$$
	We also know for every $t \leq  s$ that $|d_t| \lambda(S_{t}^1)  \dots \lambda(S_{t}^r) \geq  \varepsilon\|W\|_2$. Hence,
	$$
	\sum_{t=1}^{s} |d_t| \varepsilon\|W\|_2 \leq \sum_{t=1}^s d^2_t \lambda(S_{t}^1)  \dots \lambda(S_{t}^r) \leq \|W\|_2^2,
	$$
	and therefore $\sum_{t=1}^{s} |d_t| \leq \frac{1}{\varepsilon}\|W\|_2.$
\end{proof}
Next we state that the cut approximation provided by \Cref{ch4:cutapprox} is invariant under sampling. This is a crucial point of the whole argument, and is the $r$-dimensional generalization of Lemma 4.6 from \cite{BCL}.

\begin{lemma} \label{ch4:cutpres}
	For any $\varepsilon>0$ and bounded measurable function $W\colon [0,1]^r \to \R$  we have that
	$$
	\PPP\left(\left| \|\hh(k,W)\|_\square -\|W\|_\square\right| > \varepsilon \|W\|_\infty \right) < 2\exp\left(-\frac{\varepsilon^2 k}{32r^2}\right) 
	$$
	for every $k\geq\left(\frac{16r^2}{\varepsilon}\right)^4$.
\end{lemma}
\begin{proof}
	Fix an arbitrary $0 <\varepsilon < 1$, $r\geq 2$, and further let $W$ be a real-valued naive $r$-kernel. Set the sample size to $k\geq\left(\frac{16r^2}{\varepsilon}\right)^4$. Let us consider the array representation of $\hh(k,W)$ and denote the $r$-array $A_{\hh(k,W)}$ by $G$ that has zeros on the diagonal.
	We will need the following lemma from \cite{AVKK2}.
	\begin{lemma} \label{ch4:aux}
		$G$ is a real $r$-array on some finite product set $V_1 \times \dots \times V_r$, where $V_i$ are copies of $V$ of cardinality $k$. Let $S_1 \subset V_1, \dots, S_r \subset V_r$ be fixed subsets and $Q_1$ a uniform random subset of $V_2 \times \dots \times V_r$ of cardinality $p$. Then
		$$
		G(S_1, \dots, S_r) \leq  E_{Q_1} G(P(Q_1 \cap S_2 \times \dots \times S_r),S_2, \dots,S_r) + \frac{k^r}{\sqrt{p}} \|G\|_2,
		$$
		where $P(Q_1)=P_G(Q_1)=\{ \,x_1 \in V_1 \mid \sum_{(y_2,\dots y_r)\in Q_1}G(x_1,y_2,\dots,y_r) >0\, \}$ and the $2$-norm denotes $\|G\|_2=\left( \frac{\sum_{x_i \in V_i} G^2(x_1, \dots, x_r) }{|V_1| \dots |V_r|} \right)^{1/2}$.
	\end{lemma}
	If we apply \Cref{ch4:aux} repeatedly $r$ times to the $r$-arrays $G$ and $-G$, then we arrive at an upper bound on $G(S_1,\dots,S_r)$ ($(-G)(S_1, \dots, S_r)$ respectively) for any collection of the $S_1, \dots, S_r$ which does not depend on the particular choice of these sets any more, so we get that
	\begin{align}
	k^r \|G\|_\square &\leq E_{Q_1, \dots, Q_r} \max_{Q'_i \subset Q_i} \max\{G(P_G(Q'_1), \dots, P_G(Q'_r));  (-G)(P_{-G}(Q'_1), \dots, P_{-G}(Q'_r))\} \nonumber \\ & \quad + \frac{rk^r}{\sqrt{p}} \|G\|_\infty,\label{ch4:unifbd}
	\end{align}
	since $\|G\|_2 \leq \|G\|_\infty$.
	
	Let us recall that $G$ stands for the random $\hh(k,W)$. We are interested in the expectation $\E$ of the left hand side of (\ref{ch4:unifbd}) over the sample that defines $G$. Now we proceed via the method of conditional expectation. We establish an  upper bound on the expectation of right hand side of (\ref{ch4:unifbd}) over the sample $U_1, \dots, U_k$ for each choice of the tuple of sets $Q_1,\dots, Q_r$. This bound does not depend on the actual choice of the $Q_i$'s, so if we take the average (over the $Q_i$'s), that upper bound still remains valid.
	
	In order to do this, let us fix $Q_1,\dots, Q_r$, set $Q$ to be the set of elements of $V(G)$ which are contained in at least one of the $Q_i$'s, and fix also the sample points of $U_Q=\{ \, U_i \mid i \in Q \,\}$. Take the expectation $\E_{U_{Q^c}}$ only over the remaining $U_i$ sample points. 
	
	To this end, by Fubini we have the estimate
	\begin{align}
	k^r \E_{U_{[k]}} \|G\|_\square &\leq E_{Q_1, \dots Q_r} \E_{U_Q}  [ \E_{U_{Q^c}} \max_{Q'_i \subset Q_i} \max\{G(P_G(Q'_1)\cap Q^c, \dots, P_G(Q'_r)\cap Q^c); \nonumber \\ & \quad (-G)(P_{-G}(Q'_1)\cap Q^c, \dots, P_{-G}(Q'_r)\cap Q^c)\} ] + \frac{rk^r}{\sqrt{p}} \|G\|_\infty + p r^3 k^{r-1} \|G\|_\infty, \label{ch4:unifbd2}
	\end{align}
	where $U_S=\{ \,U_i \mid i \in S\, \}$.

	Our goal is to uniformly upper bound the expression in the brackets in (\ref{ch4:unifbd2}) so that in the dependence on the particular $Q_1, \dots Q_r$ and the sample points from $U_Q$ vanishes. To achieve this, we consider additionally a tuple of subsets $Q'_i \subset Q_i$, and introduce the random variable $Y(Q'_1, \dots ,Q'_r)= G(P_G(Q'_1)\cap Q^c, \dots, P_G(Q'_r) \cap Q^c)$, where the randomness comes from $U_{Q^c}$ exclusively. Let $$T_i=\{ \,x_i \in [0,1] \mid \sum_{(y_1, \dots, y_{i-1},y_{i+1},\dots y_r) \in Q_i'} W(U_{y_1}, \dots,U_{y_{i-1}},x_i,U_{y_{i+1}},\dots U_{y_r}) >0 \, \}$$ for $i \in [r]$. Note that $t_i \in P_G(Q_i')$ is equivalent to $U_{t_i} \in T_i$. Then 
	\begin{align}
	\E_{U_{Q^c}} Y(Q'_1, \dots ,Q'_r) &\leq  
	\sum_{\substack{t_1, \dots, t_r \in Q^c\\ t_i \neq t_j}} \E_{U_{Q^c}} G(t_1,\dots, t_r)\I_{P_G(Q'_1)}(t_1)\dots\I_{P_G(Q'_r)}(t_r) +r^2 k^{r-1} \|W\|_\infty   \nonumber \\
	& \leq  k^r \int_{T_1 \times \dots \times T_r} W(x) \du \lambda(x) + r^2 k^{r-1} \|W\|_\infty \leq   k^r \|W\|_\square + r^2 k^{r-1} \|W\|_\infty. \nonumber
	\end{align}
	By the Azuma-Hoeffding inequality we also have high concentration of the random variable  $Y(Q'_1, \dots ,Q'_r)$ around its mean, that is
	\begin{equation} \label{ch4:unifbd3}
	\PPP(Y(Q'_1, \dots ,Q'_r) \geq \E_{U_{Q^c}} Y(Q'_1, \dots ,Q'_r) + \rho k^r\|W\|_\infty) < \exp\left(-\frac{\rho^2k}{8r^2 }\right),
	\end{equation}
	since modification of one sampled element changes the value of $Y(Q'_1, \dots ,Q'_r)$ by at most $2rk^{r-1}\|W\|_\infty.$
	Analogous upper bounds on the expectation and the tail probability hold for each of the expressions $(-G)(P_{-G}(Q'_1), \dots, P_{_G}(Q'_r))$.
	
	With regard to the maximum expression in (\ref{ch4:unifbd2}) over the $Q'_i$ sets we have to this end either that the concentration event from (\ref{ch4:unifbd3}) holds for each possible choice of the $Q'_i$ subsets for both expressions in the brackets in (\ref{ch4:unifbd2}),  this has probability at least $1- 2^{pr+1}\exp(-\frac{\rho^2k}{8r^2})$, or it fails for some choice. In the first case we can employ the upper bound  $k^r \|W\|_\square + (r^2 k^{r-1}+ \rho k^r) \|W\|_\infty $, and in the event of failure we still have the trivial upper bound of $k^r\|W\|_\infty$. Eventually we presented an upper bound on the expectation that does not depend on the choice of $Q_1, \dots, Q_r$, and the sample points from $U_Q$. Hence by taking expectation and assembling the terms, we have
	$$
	\E_{U_{[k]}} \|G\|_\square \leq \|W\|_\square + \|W\|_\infty \left(\frac{r}{\sqrt{p}} + \frac{pr^3}{k} +  \rho + \frac{r^2}{k} +2^{pr+1}\exp\left(-\frac{\rho^2k}{8r^2 }\right) \right).
	$$
	Let $p=\sqrt{k}$ and $\rho=\frac{4r^2}{\sqrt[4]{k}}$. Then 
	\begin{align*}
	\E_{U_{[k]}} \|G\|_\square &\leq \|W\|_\square + \|W\|_\infty \left(\frac{r}{\sqrt[4]{k}} + \frac{r^3}{\sqrt{k}} + \frac{4r^2}{\sqrt[4]{k}}  + \frac{r^2}{k} + \exp\left(2\sqrt{k}r-2r^2\sqrt{k}\right) \right)\\
	&\leq  \|W\|_\square + \|W\|_\infty \left(\frac{\varepsilon}{16r} + \frac{\varepsilon^2}{2^8r} + \frac{\varepsilon}{4}  + \frac{\varepsilon^4}{2^{16}r^6} + \frac{\varepsilon^2}{2^8r^6}\right)\leq \|W\|_\square + \varepsilon/2\|W\|_\infty.
	\end{align*}

	The direction concerning the lower bound, $\E \|G\|_\square \geq \|W\|_\square - \varepsilon/2$ follows from a standard sampling argument, the idea is that we can project each set $S \subset [0,1]$ to a set $\hat S \subset [k]$ through the sample, which will fulfill the desired conditions, we leave the details to the reader. Concentration follows by the Azuma-Hoeffding inequality. We conclude that
	\begin{align}
	\PPP\left(\left| \|G\|_\square - \|W\|_\square \right| > \varepsilon\|W\|_\infty \right) &\leq 
	\PPP\left(\left| \E \|G\|_\square -\frac{1}{k^r}\|G\|_\square\right|> \varepsilon/2\|W\|_\infty\right) \nonumber \\ &\leq 2 \exp\left(-\frac{\varepsilon^2k}{32 r^2 }\right). \nonumber
	\end{align}
\end{proof}

%
%
%

Next we state a result on the relationship of a continuous linear program\sindex[notions]{continuous linear program} (LP) and its randomly sampled finite subprogram. We will rely on the next concentration result that is a generalization of the Azuma-Hoeffding inequality\sindex[notions]{Azuma-Hoeffding inequality!inhomogeneous}, \Cref{ch3:azuma}, and suits well the situation when the martingale jump sizes have inhomogeneous distribution. It can be found together with a proof in the survey \cite{Sas} as Corollary 3.

\begin{lemma}[Generalized Azuma-Hoeffding inequality]\label{ch4:azumagen}
	Let $k \geq 1$ and $(X_n)_{n=0}^k$ be a martingale sequence with respect to the natural filtration $(\mathcal F_n)_{n=1}^k$. If $|X_{n}-X_{n+1}|\leq d$ almost surely and $\E[(X_n-X_{n+1})^2\mid \mathcal F_n] \leq \sigma^2$ for each $n \in [k]$, then for every $n \leq k$ and $\delta>0$ it holds that 
	\begin{align}
	\PPP(X_n-X_0 >\delta n) \leq  \exp\left( -n \frac{\sigma^2}{d^2}\left((1+\frac{\delta d}{\sigma^2})\ln(1+\frac{\delta d}{\sigma^2})-\frac{\delta d}{\sigma^2} \right) \right).
	\end{align}
\end{lemma}

Measurability for all of the following functions is assumed.

\begin{lemma} \label{ch4:lpsample3}
	Let $c_m\colon[0,1] \to \R$, $U_{i, m}\colon[0,1] \to \R$ for $i=1, \dots , s$, $m=1, \dots, q$, $u \in \R^{s \times q} $, $\alpha \in \R$. Let $d$ and $\sigma$ be positive reals such that $\|c\|_\infty\leq d$ and $\|c\|_2\leq \sigma$ and set $\gamma=\frac{\sigma^2}{d^2}.$ If the optimum of the linear program
	
	\begin{align}
	&\textnormal{maximize} && \int_0^1 \sum_{m=1}^q f_m(t) c_m(t) \du t\nonumber \\
	&\textnormal{subject to } && \int_0^1 f_m(t)U_{i, m}(t) \du t \leq u_{i, m} && \textrm{for  $i \in [s]$ and $m \in [q]$ } \nonumber \\
	&&& 0 \leq f_m(t) \leq 1 \quad &&\textrm{for $t \in [0,1]$ and $m \in[q]$} \nonumber \\
	&&&\sum_{m=1}^q f_m(t)  =1 \quad &&\textrm{ for $t \in [0,1]$} \nonumber
	\end{align}
	is less than $\alpha$, then for any $\varepsilon,\delta>0$ and $k \in \N$  and a uniform random sample $\{X_1, \dots, X_k\}$ of $[0,1]^k$ the optimum of the sampled linear program
	\begin{align}
	&\textnormal{maximize}&& \sum_{1\leq n \leq k} \sum_{m=1}^q\frac{1}{k} x_{n,m} c_m(X_n) \nonumber \\
	&\textnormal{subject to}&& \sum_{1\leq n \leq k} \frac{1}{k} x_{n,m} U_{i, m}(X_n) \leq u_{i, m} - \delta \|U\|_\infty \quad && \textrm{for  $i \in [s]$ and $m \in [q]$ } \nonumber \\
	&&& 0 \leq x_{n,m} \leq 1 \quad &&\textrm{ for $n \in [k]$ and $m \in [q]$} \nonumber \\
	&&& \sum_{m=1}^q x_{n,m} = 1 && \textrm{ for $n \in [k]$} \nonumber
	\end{align}
	
	is less than $\alpha + \varepsilon$ with probability at least $$1-\left[\exp\left( -\frac{\delta^2k}{2}\right)+\exp \left(-k\gamma\left((1+\frac{\varepsilon}{\gamma d})\ln(1+\frac{\varepsilon}{\gamma d})-\frac{\varepsilon}{\gamma d} \right)\right) \right].$$

\end{lemma}
\begin{proof}
	We require a continuous version of Farkas' Lemma\sindex[notions]{Farkas' Lemma}.
	\begin{claim} \label{ch4:farkas}
		Let $(Af)_{i,m}=\int_0^1 A_{i,m}(t)f_m(t) \du t$ for the bounded measurable functions $A_{i,m}$ on $[0,1]$ for $i \in [s]$ and $m \in [q]$ , and let $v \in \R^{sq}$. There is no fractional $q$-partition solution $f=(f_1, \dots, f_q)$ to $Af \leq v$ if and only if, there exists a non-zero $0 \leq y \in \R^{sq}$ with $\|y\|_1=1$ such that there is no fractional $q$-partition solution $f$ to $y^T(Af) \leq y^Tv$.
	\end{claim}
	For clarity we remark that in the current claim and the following one $Af$ and $v$ are indexed by a pair of parameters, but are regarded as $1$-dimensional vectors in the multiplication operation. 
	\begin{proof}
		One direction is trivial: if there is a solution $f$ to $Af \leq v$, then it is also a solution to $y^T(Af) \leq y^Tv$ for any $y \geq 0$.
		
		We turn to show the opposite direction. Let \begin{align*}C=\{ \,Af\mid  f \textrm{ is a fractional $q$-partition of $[0,1]$} \, \}. \end{align*} The set $C$ is a nonempty convex closed subset of $\R^{sq}$ containing $0$. Let $B=\{ \,x\mid  x_{i,m}\leq v_{i,m}\, \}\subset \R^{sq}$, this set is also a nonempty convex closed set. The absence of a solution to $Af \leq v$ is equivalent to saying that $C \cap B$ is empty. It follows from the Separation Theorem for convex closed sets that there is a $0 \neq y' \in \R^{sq}$ such that $y'^Tc < y'^Tb$ for every $c\in C$ and $b \in B$. Additionally every coordinate $y'_{i,m}$ has to be non-positive. To see this suppose that $y'_{i_0,m_0} >0$, we pick a $c\in C$ and $b\in B$, and send $b_{i_0,m_0}$  to minus infinity leaving every other coordinate of the two points fixed ($b$ will still be an element of $B$), for $b_{i_0}$ small enough the inequality $y'^Tc < y'^Tb$ will be harmed eventually. We conclude that for any $f$ we have $y'^T(Af) < y'^Tv$, hence for $y=\frac{-y'}{\|y'\|_1}$  the inequality $y^T(Af) \leq y^Tv$ has no solution.
	\end{proof}
	From this lemma the finitary version follows without any difficulties.
	\begin{claim} \label{ch4:cons}
		Let  $B$ be a real $sq \times k$ matrix, and let $v \in \R^{sq}$ . There is no fractional $q$-partition $x \in  \R^{kq}$ so that  $Bx \leq v$ if and only if, there is a non-zero $0 \leq y \in \R^{sq}$ with $\|y\|_1=1$ such that there is no fractional $q$-partition $x \in  \R^{kq}$ so that $y^T Bx \leq y^Tv$.
	\end{claim}
	\begin{proof}
		Let $A_{i,m}(t)=\sum_{n=1}^k \frac{B_{(i,m),n}}{k} \I_{[\frac{n-1}{k},\frac{n}{k})}(t)$ for $i=1, \dots, s$. The nonexistence of a fractional $q$-partition $x \in  \R^{kq}$ so that  $Bx \leq v$ is equivalent to nonexistence of a fractional $q$-partition $f$ so that  $Af \leq v$.  For any nonzero $0 \leq y$, the nonexistence of a fractional $q$-partition $x \in  \R^{kq}$ so that $y^T Bx \leq y^Tv$ is equivalent to the nonexistence of a fractional $q$-partition $f$ so that $y^T(Af) \leq y^Tv$. Applying Claim \ref{ch4:farkas} verifies the current claim.
	\end{proof}
	
	The assumption of the lemma is by Claim \ref{ch4:farkas} equivalent to the statement that there exists a nonzero $0 \leq y \in \R^{sq}$ and $0 \leq \beta$ with $\sum_{i=1}^s \sum_{m=1}^q y_{i, m}+\beta=1$  such that
	\begin{align}
	\int_0^1 \sum_{i=1}^s \sum_{m=1}^q y_{i, m} U_{i, m}(t)f_m(t) \du t 
	- \int_0^1 \beta \sum_{m=1}^q c_m(t)f_m(t)  
	\leq \sum_{i=1}^s \sum_{m=1}^q y_{i, m}  u_{i, m} - \beta \alpha \nonumber
	\end{align}
	has no solution $f$ among fractional $q$-partitions.
	This is equivalent to the condition 
	$$
	\int_0^1 h(t) \du t > A,
	$$
	where $h(t) = \min\limits_{m}\left[\sum_{i=1}^s y_{i, m} U_{i, m}(t)-\beta c_m(t)\right]$, and 
	$A=\sum_{i=1}^s \sum_{m=1}^q y_{i, m}  u_{i, m} - \beta \alpha$.
	Let $T_m=\{ \,t \mid h(t) = \sum_{i=1}^s y_{i, m} U_{i, m}(t)-\beta c_m(t)\, \}$ for $m \in [q]$ and define the functions $h_1(t)=\sum_{m=1}^q\I_{T_m}(t) \left[\sum_{i=1}^s y_{i, m} U_{i, m}(t)\right]$ and $h_2(t)=\sum_{m=1}^q\I_{T_m}(t) \beta c_m(t)$. Clearly, $h(t)=h_1(t)-h_2(t)$. Set also $A_1=\sum_{i=1}^s \sum_{m=1}^q y_{i, m}  u_{i, m}$ and $A_2=\beta\alpha.$ Fix an arbitrary $\delta >0$ and $k \geq 1.$ By the Azuma-Hoeffding inequality it follows that with probability at least $1-\exp(-\frac{k\delta^2}{2})$ we have that
	$$
	\frac{1}{k} \sum_{n=1}^k h_1(X_n) > A_1-\delta \|h_1\|_\infty.
	$$
	Note that $\|h_1\|_\infty=\|\sum_{i=1}^s \sum_{m=1}^q\I_{T_m} U_{i, m}y_{i, m}\|_\infty \leq \|U\|_\infty \sum_{i=1}^s \sum_{m=1}^q|y_{i, m}| \leq \|U\|_\infty.$ 
	Moreover, by \Cref{ch4:azumagen} the event
	\begin{align}
	\frac{1}{k} \sum_{n=1}^k h_2(X_n) < A_2 + \varepsilon
	\end{align}
	has probability at least $1-\exp \left(-k\gamma\left((1+\frac{\varepsilon}{\gamma d})\ln(1+\frac{\varepsilon}{\gamma d})-\frac{\varepsilon}{\gamma d} \right)\right).$
	Thus, 
	$$
	\frac{1}{k} \sum_{n=1}^k h(X_n) >  \sum_{i=1}^s \sum_{m=1}^q y_{i, m}  (u_{i, m}-\delta \|U\|_\infty) - \beta (\alpha+\varepsilon)
	$$
	with probability at least $$1-\left[\exp\left( -\frac{\delta^2k}{2}\right)+\exp \left(-k\gamma\left((1+\frac{\varepsilon}{\gamma d})\ln(1+\frac{\varepsilon}{\gamma d})-\frac{\varepsilon}{\gamma d} \right)\right) \right].$$

	We conclude the proof by noting that the last event is equivalent to the event in the statement of our lemma by Claim \ref{ch4:cons}.
\end{proof}

We start the principal part of the proof of the main theorem in this paper.

\begin{proof}[Proof of \Cref{ch4:main}]


	
	It is enough to prove \Cref{ch4:main} for tuples of naive $([-d,d],r)$-digraphons. We first employ \Cref{ch4:weightedlemma} to replace the energy $\hat \EEE(\G(k,W))$ by the energy of the averaged sample $\hat \EEE(\hh(k,W))$ without altering the ground state energy of the sample substantially with high probability. 
	Subsequently, we apply \Cref{ch4:intcont} to change from  the integer version of the energy $\hat \EEE(\hh(k,W))$ to the relaxed one  $\EEE(\hh(k,W))$. That is
	\begin{align*}
	|\hat \EEE(\G(k,W))-\EEE(\hh(k,W))|\leq \varepsilon^2 \|W\|_\infty
	\end{align*}  
	with probability at least $1-\varepsilon^2$.

	We begin with the main argument by showing that the ground state energy of the sample can not be substantially smaller than that of the original, formally \begin{align}\label{ch4:eq001}
	\EEE(\hh(k,W))\geq \EEE(W)-\frac{r^2}{k^{1/4}}\|W\|_\infty
	\end{align} with high probability. In what follows $\E$ denotes the expectation with respect to the uniform independent random sample $(U_S)_{S \in \hhh([k],r)}$ from $[0,1]$. To see the correctness of the inequality, we consider a fixed fractional partition $\phi$ of $[0,1]$, and define the random fractional partition of $[k]$ as $y_{n,m}=\phi_m(U_n)$ for every $n \in [k]$ and $m \in [q]$. Then we have that 
	
	\begin{align}
	\E \EEE(\hh(k,W)) &\geq \E  \EEE_y(\hh(k,W)) \nonumber \\
	&=\E \frac{1}{k^r}\sum_{z \in [q]^r} \sum_{n_1,\dots,n_r=1}^{k} W^z(U_{\hhh(\{n_1, \dots, n_r\},r)}) \prod_{\substack{j=1}}^r y_{n_j,z_j} \nonumber \\
	&\geq \frac{k!}{k^r(k-r)!}\sum_{z \in [q]^r}  \int_{[0,1]^{\hhh([r])}} W^z(t_{\hhh([r])})  \prod_{j=1}^r \phi_{z_j}(t_j) \du \lambda(t) - \frac{r^2}{k}\|W\|_\infty \nonumber \\
	&\geq \EEE_\phi (W) - \frac{r^2}{k}\|W\|_\infty . \nonumber
	\end{align}
	
	This argument proves the claim in expectation, concentration will be provided by standard martingale arguments. For convenience, we define a martingale by $Y_0=\E \EEE(\hh(k,W))$ and $Y_j=E\left[\EEE(\hh(k,W)) \mid  \{ \, U_S \mid S \in \hhh([j],r-1) \, \} \right]$ for $1\leq j\leq k$. The difference $|Y_j-Y_{j+1}| \leq \frac{2r}{k} \|W\|_\infty$ is bounded from above for any $j$, thus by the inequality of Azuma and Hoeffding, \Cref{ch3:azuma}, it follows that 
	\begin{align}
	\PPP & \left( \EEE(\hh(k,W)) < \EEE (W) - \frac{2r^2}{k^{1/4}}\|W\|_\infty \right)  \nonumber \\ 
	& \quad\leq \PPP\left( \EEE(\hh(k,W)) < \E \EEE(\hh(k,W)) - \frac{r^2}{k^{1/4}}\|W\|_\infty \right)  \nonumber \\
	& \quad =\PPP \left( Y_k < Y_0-  \frac{r^2}{k^{1/4}}\|W\|_\infty \right) \leq \exp \left(-\frac{r^2\sqrt{k}}{8} \right). \label{ch4:azuma} 
	\end{align}
	So the lower bound (\ref{ch4:eq001}) on $\EEE(\hh(k,W))$ is established. Note that by the condition regarding $k$ we can establish (rather crudely) the upper bound $\exp(-\frac{r^2\sqrt{k}}{8}) \leq \varepsilon 2^{-7}.$
	
	Now we turn to prove that $\EEE(\hh(k,W))<\EEE(W)+\varepsilon$ holds also with high probability for $k \geq \left( \frac{2^{r+7}q^r r}{\varepsilon}\right)^{4} \log(\frac{2^{r+7}q^r r}{\varepsilon})q^r$. Our two main tools will be \Cref{ch4:cutapprox}, that is a variant the Cut Decomposition Lemma from \cite{AVKK2} (closely related to the Weak Regularity Lemma by Frieze and Kannan \cite{FK}), and linear programming duality, in the form of \Cref{ch4:lpsample3}. Recall the definition of the cut norm, for $W\colon [0,1]^r \to \R$, it is given as
	$$
	\|W\|_\square=\max\limits_{S^1, \dots ,S^r \subset [0,1]} \left|\int_{S^1 \times \dots \times S^r} W(x) \du \lambda(x)\right|, 
	$$
	and for an $r$-array $G$ by the expression
	$$
	\|G\|_\square=\frac{1}{k^r}\max\limits_{S^1, \dots ,S^r \subset V(G)} \left| G(S^1, \dots ,S^r) \right|.
	$$

	Before starting the second part of the technical proof, we present an informal outline. Our task is to certify that there is no assignment of the variables on the sampled energy problem, which produces an overly large value relative to the ground state energy of the continuous problem. For this reason we build up a cover of subsets over the set of fractional partitions of the variables of the finite problem, also build a cover of subsets over the fractional partitions  of the original continuous energy problem, and establish an association scheme between the elements of the two in such a way, that with high probability we can state that the optimum on one particular set of the cover of the sampled energy problem does not exceed the optimal value of the original problem on the associated set of the other cover. To be able to do this, first we have to define these two covers, this is done with the aid of the cut decomposition, see \Cref{ch4:cutapprox}. We will replace the original continuous problem by an auxiliary one, where the number of variables will be bounded uniformly  in terms of our error margin $\varepsilon$. \Cref{ch4:cutpres} makes it possible for us to replace the sampled energy problem by an auxiliary problem with the same complexity as for the continuous problem. This second replacement will have a straightforward relationship to the approximation of the original problem. We will produce the cover sets of the two problems by localizing the auxiliary problems, association happens through the aforementioned straightforward connection. Finally, we will linearize the local problems, and use the linear programming duality principle from \Cref{ch4:lpsample3} to verify that the local optimal value on the sample does not exceed the local optimal value on the original problem by an infeasible amount, with high probability.

	Recall that for a $\phi=(\phi_1, \dots, \phi_q)$ a fractional $q$-partition of $[0,1]$ the energy is given by the formula
	\begin{align} \label{ch4:energydef1}
	\EEE_\phi(W)=\sum_{z \in [q]^r} \int_{[0,1]^r} \prod_{j \in [r]} \phi_{z_j}(t_j)W^z(t) \du \lambda(t),
	\end{align}
	and for an $\mathrm{x}=(x_{1,1},x_{1,2},\dots,x_{1,q}, x_{2,1},\dots,x_{k,q})$ a fractional $q$-partition of $[k]$ by
	\begin{align} \label{ch4:energydef2}
	\EEE_\mathrm{x}(\hh(k,W))=\sum_{z \in [q]^r} \frac{1}{k^r} \sum_{n_1,\dots, n_r=1}^{k} \prod_{j \in [r]} x_{t_j, z_j} W^z(U_{n_1}, \dots, U_{n_r}).
	\end{align}
	
	We are going to establish a term-wise connection with respect to the parameter $z$ in the previous formulas. Therefore we consider the function 
	\begin{equation}\label{ch4:energyterm}
	\EEE^z_\phi(W^z)=\int_{[0,1]^r} \prod_{j \in [r]} \phi_{z_j}(t_j)W^z(t) \du \lambda(t),\end{equation}
	it follows that $\EEE_\phi(W)=\sum_{z \in [q]^r}\EEE^z_\phi(W^z)$. Analogously we consider \begin{align*}\EEE^z_\mathrm{x}(\hh(k,W^z))=\frac{1}{k^r} \sum_{n_1,\dots, n_r=1}^{k} \prod_{j \in [r]} x_{t_j, z_j} W^z(U_{n_1}, \dots, U_{n_r}), \end{align*} so $\EEE_\mathrm{x}(\hh(k,W))=\sum_{z \in [q]^r}\EEE^z_\mathrm{x}(\hh(k,W^z))$ with the sampled graphs on the right generated by the same sample points. Note that the formulas (\ref{ch4:energydef1})-(\ref{ch4:energyterm}) make prefect sense even when the parameters $\phi$ and $\mathrm{x}$ are only vectors of bounded functions and reals respectively without forming partition.
	
	\Cref{ch4:cutapprox} delivers for any $z \in [q]^r$ an integer $s_z \leq \frac{2^6 q^{2r}}{\varepsilon^2}$, measurable sets $S_{z,i,j} \subset [0,1]$ with $i=1, \dots, s_z$, $j=1, \dots, r$, and the real numbers $d_{z,1}, \dots, d_{z,s_z}$ such that the conditions of the lemma are satisfied, namely \begin{align*}\|W^z-\sum_{i=1}^{s_z} d_{z,i}  \I_{S_{z,i,1} \times \dots \times S_{z,i,r}}\|_\square \leq \frac{\varepsilon}{8 q^r} \|W^z\|_2,\end{align*}
	and $\sum_{i=1}^{s_z} |d_{z,i}| \leq \frac{8 q^r}{\varepsilon} \|W^z\|_2$. The cut function allows a sufficiently good approximation for $\EEE_\phi(W^z)$, for any $\phi$. Let $D^z=\sum_{i=1}^{s_z} d_{z,i} \I_{S_{z,i,1} \times \dots \times S_{z,i,r}}$. Then
	
	\begin{align}
	|\EEE^z_\phi(W^z)-\EEE^z_\phi(D^z)|  &=\left|\int_{[0,1]^r} \prod_{j \in [r]} \phi_{z_j}(t_j)\left[W^z(t)-D^z(t)\right] \du \lambda(t) \right|   \nonumber \\ 
	&\leq \|W^z-D^z\|_\square \leq \frac{\varepsilon}{8q^r} \|W^z\|_\infty. \nonumber
	\end{align}
	
	We apply the cut approximation to $W^z$ for every $z \in [q]^r$ to obtain the $[q]^r$-tuple of naive $r$-kernels $D=(D^z)_{z \in [q]^r}$.  We define the "push-forward" of this approximation for the sample $\hh(k,W)$. To do this we need to define the subsets $[k] \supset \hat S_{z,i,j} =\{ \,m \mid U_m \in S_{z,i,j}\, \}$.  Let $\hat D^z=\sum_{i=1}^{s_z} d_{z,i} \I_{\hat S_{z,i,1} \times \dots \times \hat S_{z,i,r}}$. First we condition on the event from \Cref{ch4:cutpres}, call this event $E_1$, that is \begin{align*}E_1=\bigcap\limits_{z \in [q]^r} \left\{\left| \|\hh(k,W^z)-\hat D^z\|_\square - \|W^z-D^z\|_\square \right| <  
	\frac{\varepsilon}{8q^r} \|W\|_\infty \right\}.\end{align*} On $E_1$ it follows that for any $\mathrm{x}$ that is a fractional $q$-partition
	\begin{align}
	|\EEE^z_\mathrm{x}(\hh(k,W^z))-\EEE^z_\mathrm{x}(\hat D^z)| &\leq \|\hh(k,W^z)-\hat D^z\|_\square \nonumber \\ 
	&\leq \|W^z- D^z\|_\square + \frac{\varepsilon}{8q^r} \|W\|_\infty.\nonumber 
	\end{align}
	This implies that
	$$
	|\EEE_\phi(W)-\EEE_\phi(D)| \leq \frac{\varepsilon}{8} \|W\|_\infty \quad \textrm{and} \quad |\EEE_x(\hh(k,W))-\EEE_x(\hh(k,D))| \leq \frac{\varepsilon}{4} \|W\|_\infty. 
	$$
	The probability that $E_1$ fails is at most $2 q^r \exp\left( -\frac{\varepsilon^2k}{2^{11} r^2 q^{2r}}  \right)$ whenever $k \geq \left( \frac{2^{7}q^r r^2}{\varepsilon}\right)^{4}$ due to \Cref{ch4:cutpres}, in the current theorem we have the condition $k\geq \left( \frac{2^{r+7}q^r r}{\varepsilon}\right)^{4} \log(\frac{2^{r+7}q^r r}{\varepsilon})q^r$, which implies the aforementioned one. The failure probability of $E_1$ is then strictly less than $\frac{\varepsilon}{2^7}.$

	Let $\mathcal S=\{ \,S_{z,i,j} \mid z \in [q]^r, 1\leq i\leq s_z, 1\leq j \leq r\, \}$ denote their set, and let $\mathcal S'$ stand for the corresponding set on the sample. Note that $s'=|\mathcal S| \leq 2^6 r q^{3r} \frac{1}{\varepsilon^2}$ in general, but in some cases the $W^z$ functions are constant multiples of each other, so the cut approximation can be chosen in a way that $S_{z,i,j}$ does not depend on $z \in [q]^r$, and in this case we have the slightly refined upper bound $2^6 r q^{2r} \frac{1}{\varepsilon^2}$ for $s'$, consequences of this in the special case are discussed in the remark after the proof. Let $\eta > 0$ be arbitrary, and define the sets

	$$
	I(b, \eta)=\left\{ \,\phi \mid \forall z \in [q]^r, 1\leq i\leq s_z, 1\leq j \leq r \colon \left|\int_{S_{z,i,j}} \phi_{z_j}(t) \du t -b_{z,i,j}\right| \leq 2 \eta\, \right\},
	$$
	and
	$$
	I'(b, \eta)=\left\{ \,\mathrm{x} \mid \forall z \in [q]^r, 1\leq i\leq s_z, 1\leq j \leq r\colon  \left|\frac{1}{k}\sum_{U_n \in S_{z,i,j}} x_{n,z_j} -b_{z,i,j}\right| \leq  \eta\, \right\}
	$$
	For a collection of non-negative reals $\{b_{z,i,j}\}$.
	At this point in the definitions of the above sets we do not require $\phi$ and $\mathrm{x}$ to be fractional $q$-partitions, but to be vectors of bounded functions and vectors respectively.
	We will use the grid points $\mathcal A=\{ \,(b_{z,i,j})_{z,i,j} \mid \forall z,i,j\colon b_{z,i,j} \in [0,1] \cap \eta\Z \, \}$.

	On every nonempty set $I(b,\eta)$ we can produce a linear approximation of $\EEE_\phi(D)$ (linearity is meant in the functions $\phi_m$)  which carries through to a linear approximation of $\EEE_\mathrm{x}(\hh(k,D))$ via sampling. The precise description of this is given in the next auxiliary result.
	
	\begin{lemma}[Local linearization]  \label{ch4:linearize}
		If $\eta \leq \frac{\varepsilon}{16  q^r 2^{r}}$, then for every $b \in \mathcal A$ there exist $l_0 \in \R$ and functions $l_1,l_2, \dots, l_q\colon[0,1] \to \R$ such that for every $\phi \in I(b,\eta)$ it holds that $$\left|\EEE_\phi(D)-l_0-\int_0^1 \sum_{m=1}^q l_m(t)\phi_m(t) \du t\right| < \frac{\varepsilon}{2^{r+3}}\|W\|_\infty,$$ and for every  $\mathrm{x} \in I'(b,\eta)$ we have $$\left|\EEE_\mathrm{x}(\hh(k,D))-l_0-\sum_{n=1}^{k}\sum_{m=1}^q\frac{1}{k} x_{n,m} l_m(U_i)\right| < \frac{\varepsilon}{2^{r+5}}\|W\|_\infty.$$ Additionally we have that $l_1, l_2, \dots, l_q$  are bounded from above by $\frac{8q^{2r}}{\varepsilon}\|W\|_\infty$ and $\int_0^1 \sum_{m=1}^q l^2_m(t)\du t \leq  2^{2r+9} r^2 q^{3r} \|W\|^2_\infty.$
	\end{lemma}
	
	\begin{proof}
		Recall the decomposition of the energies as sums over $z \in [q]^r$ into terms
		\begin{align}\EEE^z_\phi(D^z)&=\sum_{i=1}^{s_z} d_{z,i} \int_{[0,1]^r} \prod_{j=1}^r \phi_{z_j}(t_j) \I_{S_{z,i,1} \times \dots \times S_{z,i,r}}(t) \du t \nonumber \\  &=\sum_{i=1}^{s_z} d_{z,i} \int_{[0,1]^r} \prod_{m=1}^q \prod_{\substack{j=1 \\ z_j=m}}^r \phi_m(t_j) \I_{S_{z,i,1} \times \dots \times S_{z,i,r}}(t) \du t,\nonumber
		\end{align}
		and
		$$
		\EEE^z_\mathrm{x}(\hat D^z)=\sum_{i=1}^{s_z} d_{z,i} \frac{1}{k^r} \prod_{m=1}^q \prod_{\substack{j=1 \\ z_j=m}}^r \sum_{n\colon U_n \in S_{z,i,j}} x_{n,m}.
		$$
		We linearize and compare the functions $\EEE^z_\phi(D^z)$ and $\EEE^z_{\mathrm{x}}(\hat D^z)$ term-wise. In the end we will sum up the errors and deviations occurred at each term.
		Let $b\in \mathcal A$ and $\eta>0$ as in the statement of the lemma with $I(b,\eta)$ being nonempty. Let us fix an arbitrary $\phi \in  I(b,\eta)$, $z \in [q]^r$, and $1\leq i \leq s_z$. Then
		\begin{align*}
		\prod_{\substack{j=1}}^r \left[\int_0^1\phi_{z_j}(t_j) \I_{S_{z,i,j}}(t_j)\du t_j\right] 
		&=B^i(z) + \sum_{j=1}^r \left[\int_0^1\phi_{z_j}(t_j) \I_{S_{z,i,j}}(t_j)\du t_j-b_{z,i,j}\right]B^{i,j}(z)+  \Delta  \\
		&= (1-r)B^i(z)  +  \sum_{m=1}^q \int_0^1\phi_m(t) \left[\sum_{j=1, z_j=m}^r\I_{S_{z,i,j}}(t)B^{i,j}(z)\right] \du t  + \Delta ,
		\end{align*}
		where $B^i(z)$ stands for $\prod_{j=1}^{r} b_{z,i,j}$, $B^{i,j}(z)=\prod_{l\neq j} b_{z,i,l}$,  and $|\Delta| \leq 4\eta^2 2^r$.
		Analogously for an arbitrary fixed element $\mathrm{x} \in  I'(b,\eta)$ and a term of $\EEE^z_\mathrm{x}(\hat D^z)$ we have 
		\begin{align}
		& \prod_{\substack{j=1}}^r \left[\frac{1}{k}\sum_{n\colon U_n \in S_{z,i,j}} x_{n,z_j}-b_{z,i,j}+b_{z,i,j}\right] \nonumber \\
		& \qquad =(1-r)B^i(z)  + \sum_{m=1}^{q}\sum_{n=1}^k\frac{1}{k} x_{n,m} \left[ \sum_{j=1, z_j=m}^r \I_{S_{z,i,j}}(U_n) B^{i,j}(z) \right]  + \Delta', \nonumber
		\end{align}
		where $|\Delta'| \leq \eta^2 2^r$.

		If we multiply these former expressions by the respective coefficient $d_{z,i}$ and sum up over $i$ and $z$, then we obtain the final linear approximation consisting of the constant $l_0$ and the functions $l_1, \dots, l_q.$ We would like to add that these objects do not depend on $\eta$ if $I(b,\eta)$ is nonempty, only the accuracy of the approximation does. As overall error in approximating the energies we get in the first case of $\EEE_\phi(D)$ at most $32 \eta^2 2^r \frac{q^{2r}}{\varepsilon} \|W\|_\infty \leq \frac{\varepsilon}{2^{r+3}} \|W\|_\infty$, and in the second case of $\EEE_{\mathrm{x}}(\hh(k,D))$ at most $\frac{\varepsilon}{2^{r+5}} \|W\|_\infty$.
		
		Now we turn to prove the upper bound on $|l_m(t)|$. Looking at the above formulas we could write out $l_m(t)$ explicitly, for our upper bound it is enough to note that $$\sum_{j=1, z_j=m}^r\I_{S_{z,i,j}}(t)B^{i,j}(z)$$ is at most $r$. So it follows that for any $t \in [0,1]$ it holds that
		$$|l_m(t)|\leq \frac{8q^{2r}}{\varepsilon} r\|W\|_\infty.$$
		
		It remains to verify the assertion regarding $\int_{0}^1 \sum_{m=1}^q l^2_m(t) \du t.$ 
		Note that $I(b,\eta) \subset I(b, 2\eta)$, so we can apply the same linear approximation to elements $\psi$ of $I(b, 2\eta)$ as above with a deviation of at most $\frac{\varepsilon}{2^{r+1}} \|W\|_\infty$ from $\EEE_\psi(D)$. Let $\phi$ be an arbitrary element of $I(b,\eta)$, and let $T \subset [0,1]$ denote the set of measure $\eta$ corresponding to the largest $\sum_{m=1}^q |l_m(t)|$ values. Define 
		\[
		\hat \phi_m(t) =
		\begin{cases} 
		\hfill \phi_m(t)+ \mathrm{sgn}(l_m(t))    \hfill & \text{ if $t \in T$} \\
		\hfill \phi_m(t) \hfill & \text{otherwise.} \\
		\end{cases}
		\]
		Then $\hat \phi \in I(b,2\eta)$, since $\|\phi_m-\hat \phi_m\|_1\leq \eta$ for each $m \in [q]$, but $\hat \phi$ is not necessarily a fractional partition. Therefore we have 
		\begin{align*}
		\int_T \sum_{m=1}^q |l_m(t)| \du t &= \int_0^1 \sum_{m=1}^q (\hat \phi_m(t)-\phi_m(t)) l_m(t) \du t \\ &\leq \left|\int_0^1 \sum_{m=1}^q \hat \phi_m(t)l_m(t) \du t - \EEE_{\hat \phi} (D) \right|+ |\EEE_{\hat \phi} (D)-\EEE_{\phi} (D)| \\ & \qquad \qquad + \left|\int_0^1 \sum_{m=1}^q  \phi_m(t)l_m(t) \du t - \EEE_{\phi} (D)\right| \\
		& \leq \frac{5}{2^{r+3}} \varepsilon \|W\|_\infty  + |\EEE_{\hat \phi} (D)-\EEE_{\phi} (D)|.
		\end{align*} 
		We have to estimate the last term of the above expression.
		\begin{align*}
		|\EEE_{\hat \phi} (D)-\EEE_{\phi} (D)| &\leq \sum_{z \in [q]^r} \left| \int_{[0,1]^r} \left( \prod_{j=1}^r \phi_{z_j}(t_j) - \prod_{j=1}^r \hat\phi_{z_j}(t_j)\right) D^z(t) \du t \right| \\
		&\leq 2\|W\|_\infty \sum_{z \in [q]^r}  \int_{[0,1]^r} \sum_{j=1}^r \left|\prod_{i <j}^r\phi_{z_i}(t_i) \prod_{i >j}^r\hat\phi_{z_i}(t_i) (\hat \phi_{z_j}(t_j)-\phi_{z_j}(t_j))\right| \du t \\
		& \leq  2\|W\|_\infty 2^r q^{r-1} r \sum_{m=1}^q \|\phi_m-\hat \phi_m\|_1 \leq  2\|W\|_\infty 2^r q^{r} r \eta. 
		\end{align*}
		We conclude that 
		\begin{align*}
		\int_T \sum_{m=1}^q |l_m(t)| \du t \leq \left(\frac{5}{2^{r+3}} + \frac{r}{2^{3}} \right)  \varepsilon \|W\|_\infty. 
		\end{align*} 
		This further implies that for each $t \notin T$ we have $\sum_{m=1}^q |l_m(t)| \leq \left(\frac{5}{2^{r+3}} + \frac{r}{2^{3}} \right)  \frac{\varepsilon}{\eta} \|W\|_\infty \leq \left(10 + 2^{r+1}r \right)q^r  \|W\|_\infty.$ These former bounds indicate
		\begin{align*}
		\int_0^1 \sum_{m=1}^q l^2_m(t) \du t &=  \int_{[0,1] \setminus T} \sum_{m=1}^q l^2_m(t) \du t + \int_T \sum_{m=1}^q l^2_m(t) \du t \\ &\leq  2^{2r+8} r^2 q^{2r} \|W\|^2_\infty  + \|l\|_\infty \int_T \sum_{m=1}^q |l_m(t)| \du t  \\ &\leq  2^{2r+8} r^2 q^{2r} \|W\|^2_\infty  + (2^{r+4} r q^{r})  (8 q^{2r} r) \|W\|^2_\infty\\
		&\leq 2^{2r+9} r^2 q^{3r} \|W\|^2_\infty. 
		\end{align*} 

	\end{proof}

	We return to the proof of the main theorem, and set $\eta=\frac{\varepsilon}{16  q^r 2^{r}}.$ For each $b \in \mathcal A$ we apply \Cref{ch4:linearize}, so that we have for any $\phi \in I(b,\eta)$ and $\mathrm{x} \in I'(b,\eta)$ that
	\begin{align}
	\left|\EEE_\phi(W)-l_0-\sum_{m=1}^q \int_0^1 \phi_m(t) l_m(t) \du t \right| &= \frac{\varepsilon}{2^{r+3}}\|W\|_\infty, \nonumber \\
	\left|\EEE_\mathrm{x}(\hh(k,W))-l_0-\sum_{n=1}^{k}\frac{1}{k} x_{n,m} l_m(U_n) \right| &= \frac{\varepsilon}{2^{r+5}}\|W\|_\infty, \nonumber
	\end{align}
	since $\eta$ is small enough.
	Note that $l_0,l_1, \dots,$ and $l_q$ inherently depend on $b$.
	We introduce the event $E_2(b)$, which stands for the occurrence of the following implication:
	
	\vskip 1em
	If the linear program
	\begin{align}
	&\textnormal{maximize} && l_0+\sum_{n=1}^{k}\sum_{m=1}^q\frac{1}{k} x_{n,m} l_m(U_n)  \nonumber \\
	&\textnormal{subject to } && \mathrm{x} \in  I'(b, \eta) \nonumber \\
	&&& 0\leq x_{n,m} \leq 1 \quad &&\textrm{for $n=1, \dots,k$ and $m=1, \dots, q$} \nonumber \\
	&&&\sum_{m=1}^q x_{n,m}=1 &&\textrm{ for $m=1, \dots,q$} \nonumber
	\end{align}
	has optimal value $\alpha$, then the continuous linear program
	\begin{align}
	&\textnormal{maximize}&& l_0+\int_0^1 \sum_{m=1}^q l_m(t)\phi_m(t) \du t \nonumber \\
	&\textnormal{subject to}&& \phi \in I(b,\eta) \nonumber \\
	&&& 0\leq \phi_m(t) \leq 1 \quad &&\textrm{ for $t \in [0,1]$ and $m=1, \dots, q$} \nonumber \\
	&&&\sum_{m=1}^q \phi_m(t)=1 \quad && \textrm{ for $t \in [0,1]$} \nonumber
	\end{align}
	has optimal value at least $\alpha- (\varepsilon/2)\|W\|_\infty$. 
	\vskip 1em
	%
	
	We apply \Cref{ch4:lpsample3} with $\delta=\eta$, $\sigma^2= 2^{2r+9} r^2 q^{3r} \|W\|^2_\infty$, $d=\frac{8q^{2r}}{\varepsilon} r\|W\|_\infty$, and $\gamma=\frac{\sigma^2}{d^2}$, and attain that the probability that $E_2(b)$ fails is at most 
	\begin{align*}
	\exp &\left(-\frac{k\eta^2}{2} \right) + \exp \left(-k\gamma\left((1+\frac{\varepsilon\|W\|_\infty}{\gamma d})\ln(1+\frac{\varepsilon\|W\|_\infty}{\gamma d})-\frac{\varepsilon\|W\|_\infty}{\gamma d} \right)\right)\\
	&\leq \exp \left(-\frac{k\varepsilon^2}{2^8 q^{2r} 2^{2r}}\right)+  \exp \left(-k\varepsilon^2 2^{2r+3}q^{-r}\left( \frac{1}{2^{4r+15}q^{r}r^2}  \right)\right)\\
	&=\exp \left(-\frac{k\varepsilon^2}{2^{2r+8} q^{2r} }\right)+  \exp \left(-\frac{k\varepsilon^2}{2^{2r+12}q^{2r}r^2}  \right) \leq 2\exp \left(-\frac{k\varepsilon^2}{2^{2r+12}q^{2r}r^2}  \right) ,
	\end{align*}
	where we used that $(1+x)\ln(1+x) -x \geq (1+x)(x-x^2/2)-x= x^2/2-x^3/2 \geq x^2/4$ for $0 \leq x\leq \frac{1}{4}.$
	Denote by $E_2$ the event that for each $b \in \mathcal A$ the event $E_2(b)$ occurs. Then we have \begin{align*}\PPP(E_2)&\geq 1- 2 \left(\frac{2^{r+3} q^{r}}{\varepsilon}\right)^{2^6 r q^{3r} \frac{1}{\varepsilon^2}}\exp \left(-\frac{k\varepsilon^2}{2^{2r+12}q^{2r}r^2}  \right)\\
	&\geq 1- 2\exp\left( \log\left(\frac{2^{r+3} q^{r} }{\varepsilon}\right) 2^{6} r q^{3r} \varepsilon^{-2} -  \log(\frac{2^{r+7}q^r r}{\varepsilon}) 2^{2r+16} r^{2} q^{3r}  \varepsilon^{-2}  \right) \\
	&\geq 1- 2\exp\left( -  \log(\frac{2^{r+7}q^r r}{\varepsilon}) 2^{2r+15} r^{2} q^{3r}  \varepsilon^{-2}  \right) \\
	&\geq 1-\varepsilon/4.
	\end{align*}
	Therefore for $k\geq \left( \frac{2^{r+7}q^r r}{\varepsilon}\right)^{4} \log(\frac{2^{r+7}q^r r}{\varepsilon})q^r$ we have that $\PPP(E_1 \cap E_2)\geq 1- \varepsilon/2$. We only need to check that conditioned on $E_1$ and $E_2$ our requirements are fulfilled.
	For this, consider an arbitrary fractional $q$-partition of $[k]$ denoted by $\mathrm{x}$. For some $b \in \mathcal A$ we have that $\mathrm{x} \in I'(b,\eta)$. If we sum up the error gaps that were allowed for the Cut Decomposition and at the local linearization stage, then the argument we presented above yields that there exists a $\phi \in I(b, \eta)$ such that conditioned on the event $E_1 \cap E_2$ it holds 
	$$
	\EEE_\phi(W) \geq \EEE_\mathrm{x}(\hh(k,W))-\varepsilon\|W\|_\infty.
	$$
	This is what we wanted to show.

\end{proof}

We can improve on the tail probability bound in \Cref{ch4:main} significantly by a constant factor strengthening of the lower threshold condition imposed on the sample size.   
\begin{corollary}\label{ch4:maincor1}
	Let $r\geq 1$, $q \geq 1$, and $\varepsilon >0$. Then for any $[q]^r$-tuple of  $([-d,d],r)$-graphons $W=(W^z)_{z \in [q]^r}$ and $k \geq \Theta^4  \log(\Theta) q^r$ with $\Theta=\frac{2^{r+10}q^r r}{\varepsilon}$  we have that
	\begin{align}\label{ch4:eq2}
	\PPP(|\EEE(W)-\hat \EEE(\G (k,W))|>\varepsilon \|W\|_\infty)< 2 \exp\left( -\frac{\varepsilon^2k}{8r^2}\right).
	\end{align}
\end{corollary}
\begin{proof}
	For $k \geq \Theta^4  \log(\Theta) q^r$ we appeal to \Cref{ch4:main}, hence
	\begin{align*}
	|\EEE(W)-\E \hat \EEE(\G (k,W))| &\leq \PPP(|\EEE(W)-\hat \EEE(\G (k,W))|>\varepsilon/8 \|W\|_\infty) 2 \|W\|_\infty+ \varepsilon/8 \|W\|_\infty \\ &< \varepsilon/2 \|W\|_\infty.
	\end{align*}
	Using a similar martingale construction to the one in the first part of the proof of \Cref{ch4:main} the Azuma-Hoeffding inequality can be applied, thus
	\begin{align*}
	\PPP(|\EEE(W)-\hat \EEE(\G (k,W))|>\varepsilon \|W\|_\infty) &\leq \PPP(|\E \hat \EEE(\G (k,W))-\hat \EEE(\G (k,W))|>\varepsilon/2 \|W\|_\infty) \\ 
	&\leq 2 \exp\left( -\frac{\varepsilon^2k}{8r^2}\right).
	\end{align*}
	
\end{proof}

\begin{remark} \label{ch4:mainrem}
	A simple investigation of the above proof also exposes that in the case when the $W^z$'s are constant multiples of each other then we can employ the same cut decomposition to all of them with the right scaling, which implies that the upper bound on $|\mathcal S|$ can be strengthened to $2^6 r q^{2r} \frac{1}{\varepsilon^2}$, gaining a factor of $q^r$. Therefore in this case the statement of  \Cref{ch4:maincor1} is valid with the improved lower bound condition $\left( \frac{2^{r+10}q^r r}{\varepsilon}\right)^{4} \log(\frac{2^{r+10}q^r r}{\varepsilon})$ on $k$.
\end{remark}

\begin{remark} \label{ch4:rem2}
	Suppose that $f$ is the following simple graph parameter. Let $q \geq 1$, $m_0 \geq 1$, and $g$ be a polynomial of $l$ variables and degree $d$ with values between $0$ and $1$ on the unit cube, where $l$ is the number of unlabeled node-$q$-colored graphs on $m_0$ vertices, whose set we denote by $\mathcal M_{q,m_0}$. Note that $ l \leq 2^{m_0^2/2}q^{m_0}$. Let then
	\begin{align}\label{ch4:polymax}
	f(G)=\max_{\mathcal T} g( (t(F,(G,\mathcal T)))_{F \in \mathcal M_{q,m_0}} ), 
	\end{align} 
	where the maximum goes over all node-$q$-colorings of $G$, and $(G,\mathcal T)$ denotes the node-$q$-colored graph by imposing $\mathcal T$ on the node set of $G$. Using the identity $t(F_1,G)t(F_2,G)=t(F_1 \cup F_2,G)$, where $F_1 \cup F_2$ is the disjoint union of the (perhaps colored) graphs $F_1$ and $F_2$, we can replace in (\ref{ch4:polymax}) $g$ by $g'$ that is linear, and its variables are indexed by $\mathcal M_{q,d m_0}.$ Then it becomes clear that $f$ can be regarded as a ground state energy of $d m_0$-dimensional arrays by associating to every $G$ an tuple $(A^z)_{z \in [q]^r}$  with $r=dm_0$, where the entries $A^z(i_1, \dots, i_r)$ are the coefficients of $g'$ corresponding to the element of  $\mathcal M_{q,d m_0}$ given by the pair $z$ and $G|_{(i_1, \dots, i_r)}$. We conclude that $f$ is efficiently testable by \Cref{ch4:main}.
\end{remark}

\section{Testability of variants of the ground state energy}\label{sec:appl}

In the current section we derive further testability results using the techniques employed in the proofs of the previous section, and apply \Cref{ch4:main} to some specific quadratic programming problems.

\subsection{Microcanonical version}
Next we will state and prove the microcanonical version of \Cref{ch4:main}, that is the continuous generalization of the main result of \cite{VKK} for an arbitrary number $q$ of the states. To be able to do this, we require the microcanonical analog of \Cref{ch4:intcont}, that will be a generalization of Theorem 5.5 from \cite{BCL2} for arbitrary $r$-graphs (except for the fact that we are not dealing with node weights), and its proof will also follow the lines of the aforementioned theorem. Before stating the lemma, we outline some notation and state yet another auxiliary lemma.

\begin{definition}\label{ch4:microdef}
	Let for $\na=(a_1, \dots, a_q) \in \Pd_q$\sindex[symbols]{p@$\Pd_q$} (that is, $a_i \geq 0$ for each $i \in [q]$ and $\sum_i a_i=1$) denote
	$$
	\Omega_\na=\left\{ \,\phi \textrm{ fractional $q$-partition of $[0,1]$} \mid \int_0^1 \phi_i(t)\du t=a_i \textrm{ for } i \in [q] \, \right\},
	$$\sindex[symbols]{o@$\Omega_\na$,$\omega_\na$,$\hat \omega_\na$}
	$$
	\omega_\na=\left\{ \, \mathrm{x} \textrm{ fractional $q$-partition of $V(G)$} \mid \frac{1}{|V(G)|}\sum_{u \in V(G)} x_{u,i}=a_i \textrm{ for  $i \in [q]$}  \, \right\},
	$$
	and
	$$
	\hat \omega_\na=\left\{ \,\mathrm{x} \textrm{ integer $q$-partition of $V(G)$} \mid \left|\frac{\sum_{u \in V(G)} x_{u,i}}{|V(G)|}-a_i \right|\leq \frac{1}{|V(G)|} \textrm{ for  $i \in [q]$} \, \right\}.
	$$
	The elements of the above  sets are referred to as integer $\na$-partitions and fractional $\na$-partitions, respectively.
	
	We call the following expressions microcanonical ground state energies\sindex[notions]{ground state energy!microcanonical} with respect to $\na$ for $(\KK,r)$-graphs and graphons and $C(\KK)$-valued $r$-arrays $J$, in the finite case we add the term fractional and integer respectively to the name. Denote
	$$ \EEE_\na(W,J)=\max_{\phi \in \Omega_\na} \EEE_\phi(W,J), \quad \EEE_\na(G,J)=\max_{\mathrm{x} \in \omega_\na} \EEE_\mathrm{x}(G,J),  \quad \hat \EEE_\na(G,J)=\max_{\mathrm{x} \in \hat \omega_\na} \EEE_\mathrm{x}(G,J).$$\sindex[symbols]{e@$\EEE_\na(W,J), \EEE_\na(G,J), \hat \EEE_\na(G,J)$}
	
	The layered versions for a finite layer set $\eE$, and the canonical versions $\EEE_\na(W)$, $\EEE_\na(G)$, and $\hat \EEE_\na(G)$\sindex[symbols]{e@$\EEE_\na(W)$, $\EEE_\na(G)$, $\hat \EEE_\na(G)$} are defined analogously.
	
\end{definition}

The requirements for an $\mathrm{x}$ to be an integer fractional $\na$-partition (that is $\phi \in \Omega_\na$)\sindex[notions]{fractional $\na$-partition} are rather strict and we are not able to guarantee with high probability that if we sample from an fractional $\na$-partition of $[0,1]$, that we will receive an fractional $\na$-partition on the sample, in fact this will not happen with probability $1$. To tackle this problem we need to establish an upper bound on the difference of two microcanonical ground state energies with the same parameters. This was done in the two dimensional case in \cite{BCL2}, we slightly generalize that approach.

\begin{lemma} \label{ch4:encont}
	Let $r\geq 1$, and $q \geq 1$. Then for any $[q]^r$-tuple of naive $r$-kernels $W=(W^z)_{z \in [q]^r}$, and probability distributions $\na, \nb \in \Pd_q$  we have
	$$
	|\EEE_\na (W) -\EEE_\nb (W)| \leq r  \|W\|_\infty \|\na-\nb\|_1.
	$$
	The analogous statement is true for a $[q]^r$-tuple of $([-d,d],r)$-digraphs $G=(G^z)_{z \in [q]^r}$,
	$$
	|\EEE_\na (G) -\EEE_\nb (G)| \leq  r  \|G\|_\infty \|\na-\nb\|_1.
	$$
\end{lemma}
\begin{proof}
	We will find for each fractional $\na$-partition $\phi$ a fractional $\nb$-partition $\phi'$ and vice versa, so that the corresponding energies are as close to each other as in the statement. So let $\phi=(\phi_1,\dots,\phi_q)$ be an arbitrary fractional $\na$-partition, we define $\phi'_i$ so that the following holds: if $a_i \geq b_i$ then $\phi'_i(t) \leq \phi_i(t)$ for every $t \in [0,1]$, otherwise $\phi'_i(t) \geq \phi_i(t)$ for every $t \in [0,1]$. It is easy to see that such a $\phi'=(\phi'_1, \dots,\phi'_q)$ exists. Next we estimate the energy deviation. 
	\begin{align}
	|\EEE_\phi(W)-\EEE_{\phi'}(W)| &\leq \sum_{z \in [q]^r} \left|\int_{[0,1]^r} \phi_{z_1}(x_1)\dots\phi_{z_r}(x_r)-\phi'_{z_1}(x_1)\dots\phi'_{z_r}(x_r)\du \lambda(x) \right| \|W\|_\infty \nonumber \\
	&\leq \sum_{z \in [q]^r} \sum_{m=1}^r \left|\int_{[0,1]^r} (\phi_{z_m}(x_m)-\phi'_{z_m}(x_m))\prod_{j< m}\phi_{z_j}(x_j) \prod_{j>m}\phi'_{z_j}(x_j)\du \lambda(x)\right| \|W\|_\infty \nonumber \\
	&=\sum_{z \in [q]^r} \sum_{m=1}^r \int_{[0,1]} \left| \phi_{z_m}(x_m)-\phi'_{z_m}(x_m) \right|\du x_m\prod_{j< m}a_{z_j} \prod_{j>m}b_{z_j}\|W\|_\infty \nonumber \\
	&= \sum_{m=1}^r \sum_{j=1}^q\int_{[0,1]} \left| \phi_{j}(t)-\phi'_{j}(t) \right| \du t \left(\sum_{j=1}^q a_{j}\right)^{m-1} \left(\sum_{j=1}^q b_{j}\right)^{r-m-1} \|W\|_\infty\nonumber \\
	&= r \|\na-\nb\|_1 \|W\|_\infty. \nonumber
	\end{align}
	The same way we can find for any fractional $\nb$-partition $\phi$ an fractional $\na$-partition $\phi'$ so that their respective energies differ at most by $r \|\na-\nb\|_1\|W\|_\infty$.
	This implies the first statement of the lemma. The finite case is proven in a completely analogous fashion.
	
\end{proof}

We are ready to show that the difference of the fractional and the integer ground state energies is $o(|V(G)|)$ whenever all parameters are fixed, this result is a generalization with respect to the dimension in the non-weighted case of Theorem 5.5 of \cite{BCL2}, the proof proceeds similar to the one concerning the graph case that was dealt with in \cite{BCL2}.

\begin{lemma} \label{ch4:intcont2}
	Let $q,r,k \geq 1$,$\na  \in \Pd_q$, and $G=(G^z)_{z \in [q]^r}$ be a tuple of $([-d,d],r)$-graphs  on $[k]$. Then
	$$
	|\EEE_\na (G) - \hat \EEE_\na (G)| \leq  \frac{1}{k}  \|G\|_\infty 5^r q^{r+1}.
	$$
\end{lemma}
\begin{proof}
	The inequality $\EEE_\na (G) \leq \hat \EEE_\na (G) + \frac{1}{k} \|G\|_\infty 5^r q^{r+1}$ follows from \Cref{ch4:encont}. Indeed, for this bound a somewhat stronger statement it possible,
	$$\hat \EEE_\na (G) \leq \max_{\nb \colon |b_i-a_i|\leq 1/k} \EEE_\nb (G) \leq  \EEE_\na (G) + r\frac{q}{k} \|G\|_\infty .
	$$
	Now we will show that $\hat \EEE_\na (G) \geq  \EEE_\na (G) - \frac{1}{k}\|G\|_\infty 5^r q^{r+1}$. We consider an arbitrary fractional $\na$-partition $\mathrm{x}$. A node $i$ from $[n]$ is called bad in a fractional partition $x$, if at least two elements of $\{x_{i,1},\dots, x_{i,q}\}$ are positive. We will reduce the number of fractional entries of the bad nodes of $\mathrm{x}$ step by step until we have at most $q$ of them, and keep track of the cost of each conversion, at the end we round the corresponding fractional entries of the remaining bad nodes in some certain way. 
	
	We will describe a step of the reduction of fractional entries. For now assume that we have at least $q+1$ bad nodes and select an arbitrary set $S$ of cardinality $q+1$ of them. To each element of $S$ corresponds a $q$-tuple of entries and each of these $q$-tuples has at least two non-$\{0,1\}$ elements. 
	
	We reduce the number of fractional entries corresponding to $S$ while not disrupting any entries corresponding to nodes that lie outside of $S$. To do this we fix for each $i \in [q]$ the sums $\sum_{v \in S} x_{v,i}$ and for each $v \in S$ the sums $\sum_{i=1}^q x_{v,i}$ (these latter are naturally fixed to be $1$), in total $2q+1$ linear equalities. We have at least $2q+2$ fractional entries corresponding to $S$, therefore there exists a subspace of solutions of dimension at least $1$ for the $2q+1$ linear equalities. That is, there is a family of fractional partitions parametrized by $-t_1 \leq t \leq t_2$ for some $t_1,t_2>0$ that obey our $2q+1$ fixed equalities and have the following form. Let $x^t_{i,j}=x_{i,j}+t\beta_{i,j}$, where $\beta_{i,j}=0$ if $i \notin S$ or $x_{i,j} \in\{0,1\}$, and $\beta_{i,j}\neq 0$ else, together these entries define $\mathrm{x}^t$. The boundaries $-t_1$ and $t_2$ are non-zero and finite, because eventually an entry corresponding to $S$ would exceed $1$ or would be less than $0$ with $t$ going to plus, respectively minus infinity. Therefore at these boundary points we still have an fractional $\na$-partition that satisfies our selected equalities, but the number of fractional entries decreases by at least one. We will formalize how the energy behaves when applying this procedure.
	$$
	\EEE_{\mathrm{x}^t}(G)=\EEE_\mathrm{x}(G)+c_1t+ \dots+c_r t^r,
	$$
	where for $l \in [r]$ we have 
	\begin{align}
	c_l=\frac{1}{k^r}\sum_{z \in [q]^r} \sum_{\substack{u_1, \dots, u_l\in S \\ u_{l+1}, \dots, u_r \in  V \setminus S\\ \pi }} \beta_{u_1,z_{\pi(1)}} \dots \beta_{u_l,z_{\pi(l)}} x_{u_{l+1},z_{\pi(l+1)}} \dots x_{u_{r},z_{\pi(r)}} G^z(u_{\pi(1)}, \dots,u_{\pi(r)}), \nonumber
	\end{align}
	where the second sum runs over permutations $\pi$ of $[k]$ that preserves the ordering of the elements of $\{1,\dots,l\}$ and $\{l+1,\dots, r\}$ at the same time.
	We deform the entries corresponding to $S$ through $t$ in the direction so that $c_1t \geq 0$ until we have eliminated at least one fractional entry, that is we set $t=-t_1$, if $c<0$, and $t=t_2$ otherwise. Note, that as $\mathrm{x}^t$ is a fractional partition, therefore $0\leq x_{i,j}+t\beta_{i,j}\leq 1$, which implies that for $t\beta_{i,j}\leq 0$ we have $|t\beta_{i,j}| \leq x_{i,j}$. On the other hand, $\sum_{j} t\beta_{i,j}=0$ for any $t$ and $i$. Therefore $\sum_{j}|t\beta_{i,j}|=2\sum_{j}|t\beta_{i,j}| \I_{\{t\beta_{i,j}\leq 0\}}\leq 2 \sum_{j}x_{i,j}=2$ for any $i \in [k]$. This simple fact enables us to upper bound the absolute value of the terms $c_lt^l$.
	\begin{align}
	|c_l t^l| &\leq \frac{(k-q-1)^{r-l}}{k^r} \|G\|_\infty \sum_{z \in [q]^r}\sum_{\substack{u_1, \dots, u_l\in S \\ \pi }} |t\beta_{u_1,z_{\pi(1)}}| \dots |t\beta_{u_l,z_{\pi(l)}}| \nonumber \\
	&=\frac{(k-q-1)^{r-l}}{k^r} \|G\|_\infty {r \choose l} q^{r-l}\sum_{z \in [q]^l}\sum_{\substack{u_1, \dots, u_l\in S}} |t\beta_{u_1,z_1}| \dots |t\beta_{u_l,z_l}| \nonumber \\
	&\leq \frac{1}{k^l} \|G\|_\infty {r \choose l} q^{r-l} \left(\sum_{u \in S, j \in [q]} |t\beta_{u,j}|\right)^l\leq \frac{1}{k^l} \|G\|_\infty {r \choose l} q^{r-l} (2q+2)^l. \nonumber
	\end{align}
	It follows that in each step  of  elimination of a fractional entry of $\mathrm{x}$ we have to admit a decrease of the energy value of at most
	$$
	\sum_{l=2}^r |c_l t^l| \leq \frac{1}{k^2} \|G\|_\infty (3q+2)^r.
	$$
	There are in total $kq$ entries in $\mathrm{x}$, therefore, since in each step the number of fractional entries is reduced by at least $1$, we can upper bound the number of required steps for reducing the cardinality of bad nodes to at most $q$  by $k(q-1)$, and conclude that we admit an overall energy decrease of at most  $\frac{1}{k} \|G\|_\infty (q-1)(3q+2)^r$ to construct from $\mathrm{x}$ a fractional partition $\mathrm{x}'$ with at most $q$ nodes with fractional entries 
	In the second stage we proceed as follows. Let $B=\{u_1, \dots, u_m\}$ be the set of the remaining bad nodes of $\mathrm{x}'$, with $m\leq q$. For $u_i \in B$ we set $x''_{u_i, j}=\I_i(j)$, for the rest of the nodes we set $\mathrm{x}''=\mathrm{x}'$, obtaining an integer $\na$-partition of $[k]$. Finally, we estimate the cost of this operation. We get that
	$$
	\EEE_{\mathrm{x}''}(G)\geq \EEE_{\mathrm{x}'}(G) - \frac{1}{k^r}\|G\|_\infty  |B| k^{r-1} q^r.
	$$
	The original fractional $\na$-partition was arbitrary, therefore it follows that
	$$
	\EEE_\na (G) - \hat \EEE_\na (G) \leq \frac{1}{k}\|G\|_\infty 5^r q^{r+1}. 
	$$

\end{proof}

We are ready state the adaptation of \Cref{ch4:main} adapted to the microcanonical setting.

\begin{theorem} \label{ch4:micro}
	Let $r\geq 1$, $q \geq 1$,  $\na \in \Pd_q$, and $\varepsilon >0$. Then for any $[q]^r$-tuple of $([-d,d,r])$-graphons $W=(W^z)_{z \in [q]^r}$ and $k \geq \Theta^4  \log(\Theta) q^r$ with $\Theta=\frac{2^{r+7}q^r r}{\varepsilon}$  we have

	$$
	\PPP\left(|\EEE_\na(W)-\hat \EEE_\na(\G (k,W))|>\varepsilon \|W\|_\infty\right)<\varepsilon.
	$$
\end{theorem}
\begin{proof}
	Let $W$ be as in the statement and $k \geq \Theta^4  \log(\Theta) q^r$ with $\Theta=\frac{2^{r+7}q^r r}{\varepsilon}$.
	We start with pointing out that we are allowed to replace the quantity $\hat \EEE_\na(\G (k,W))$ by $\EEE_\na(\G (k,W))$ in the statement of the theorem by \Cref{ch4:intcont2} and only introduce an initial error at most  $ \frac{1}{k} \|G\|_\infty 5^r q^{r+1}\leq \frac{\varepsilon}{2}  \|W\|_\infty$.
	
	The lower bound on $\EEE_\na(\G (k,W))$ is the result of standard sampling argument combined with \Cref{ch4:encont}.
	Let us consider a fixed $\na$-partition $\phi$ of $[0,1]$, and define the random fractional partition of $[k]$ as $y_{n,m}=\phi_m(U_n)$ for every $n \in [k]$ and $m \in [q]$. The partition $y$ is not necessarily an fractional $\na$-partition, but it can not be very far from being one. For $m \in [q]$ it holds that
	$$
	\PPP\left( \left|\frac{\sum_{n=1}^{k}y_{n,m}}{k}-a_m \right|\geq \varepsilon \right) \leq 2 \exp(-\varepsilon^2k/2), 
	$$
	therefore for our choice of $k$ the sizes of the partition classes obey $|\frac{1}{k}\sum_{n=1}^{k}y_{n,m}-a_m|< \frac{\varepsilon}{2(q+1)}$ for every $m \in [q]$ with probability at least $1-\varepsilon/2$. 
	
	We appeal to \Cref{ch4:encont} to conclude
	\begin{align}
	\E \EEE_\na(\G(k,W)) &\geq \E  \EEE_y(\G(k,W))-(\varepsilon/2)\|W\|_\infty \nonumber \\
	&=\E \frac{1}{k^r}\sum_{z \in [q]^r} \sum_{n_1,\dots,n_r=1}^{k} W(U_{n_1}, \dots, U_{n_r}) \prod_{\substack{j=1}}^r y_{n_j,z_j} -(\varepsilon/2)\|W\|_\infty \nonumber \\
	&\geq \frac{k!}{k^r(k-r)!}\sum_{z \in [q]^r}  \int_{[0,1]^r}W(t_1, \dots, t_r)  \prod_{j=1}^r \phi_{z_j}(t_j) \du t - \left(\frac{r^2}{k} +\varepsilon/2\right)\|W\|_\infty \nonumber \\
	&\geq \EEE_\phi (W) - \left(\frac{r^2}{k} +\varepsilon/2\right)\|W\|_\infty. \nonumber
	\end{align}
	The concentration of the random variable $\EEE_\na(\G(k,W))$ can be obtained through martingale arguments identical to the technique used in the proof of the lower bound in \Cref{ch4:main}.
	
	For the upper bound on  $\EEE_\na(\G (k,W))$ we are going to use the cut decomposition and local linearization, the approach to approximate the energy of $\EEE_\phi(W)$ and $\EEE_{\mathrm x}(\G(k,W))$ for certain partitions $\phi$, respectively ${\mathrm x}$ is completely identical to the proof of \Cref{ch4:main}, therefore we borrow all the notation from there, and we do not refer to again in what follows.
	
	Now we consider a $b \in \mathcal A$ and define the event $E_3(b)$ that is occurrence the following implication.
	
	If the linear program
	\begin{align}
	&\textnormal{maximize} && l_0+\sum_{n=1}^{k}\sum_{m=1}^q\frac{1}{k} x_{n,m} l_m(U_n)  \nonumber \\
	&\textnormal{subject to } && {\mathrm x} \in  I'(b, \eta)\cap \omega_\na \nonumber \\
	&&& 0\leq x_{n,m} \leq 1 \quad &&\textrm{for $n=1, \dots,k$ and $m=1, \dots, q$} \nonumber \\
	&&&\sum_{m=1}^q x_{n,m}=1 &&\textrm{ for $n=1, \dots,k$} \nonumber
	\end{align}
	has optimal value $\alpha$, then the continuous linear program
	\begin{align}
	&\textnormal{maximize}&& l_0+\int_0^1 \sum_{m=1}^q l_m(t)\phi_m(t) \du t \nonumber \\
	&\textnormal{subject to}&& \phi \in I(b,\eta)  \cap \left(\bigcup_{\substack{\nc\colon  |a_i-c_i|\leq \eta }}\Omega_\nc\right) \nonumber \\
	&&& 0\leq \phi_m(t) \leq 1 \quad &&\textrm{ for $t \in [0,1]$ and $m=1, \dots, q$} \nonumber \\
	&&&\sum_{m=1}^q \phi_m(t)=1 \quad && \textrm{ for $t \in [0,1]$} \nonumber
	\end{align}
	has optimal value at least $\alpha- \frac{\varepsilon}{2} \|W\|_\infty$.
	
	Recall that $\eta = \frac{\varepsilon}{16  q^r 2^{r}}.$ It follows by applying \Cref{ch4:lpsample3} that $E_3(b)$ has probability at least $1-2\exp \left(-\frac{k\varepsilon^2}{2^{2r+12}q^{2r}r^2}\right)$. When conditioning on $E_1$, the event from the proof of \Cref{ch4:main},  and $E_ 3=\cap_{b \in \mathcal A} E_3(b)$ we conclude that 
	$$
	\EEE_\na(\G(k,W)) \leq \max_{\nc \colon  |a_i-c_i|\leq \eta } \EEE_\nc(W) + \varepsilon/2)\|W\|_\infty \leq \EEE_\na(W)+ (rq\eta+ \varepsilon/2)\|W\|_\infty\leq  \EEE_\na(W)+ \varepsilon\|W\|_\infty.
	$$
	Also, like in \Cref{ch4:main}, the probability of the required events to happen simultaneously is at least $1-\varepsilon/2.$ This concludes the proof.
\end{proof}

\subsection{Quadratic assignment and maximum acyclic subgraph problem}

The two optimization problems that are the subject of this subsection, the quadratic assignment problem\sindex[notions]{optimization problem!quadratic assignment problem} (QAP) and maximum acyclic subgraph problem\sindex[notions]{optimization problem!maximum acyclic subgraph problem} (AC), are known to be NP-hard, similarly to MAX-$r$CSP that was investigated above. The first polynomial time approximation schemes were designed for the QAP by Arora, Frieze and Kaplan ~\cite{AFK}. Dealing with a QAP means informally that one aims to minimize the transportation cost of his enterprise that has $n$ production locations and $n$ types of production facilities. This is to be achieved by an optimal assignment of the facilities to the locations with respect to the distances (dependent on the location) and traffic (dependent on the type of the production). In formal, terms this means that we are given two real quadratic matrices of the same size, $G$ and $J \in  \R^{n \times n}$, and the objective is to calculate
$$
\mathrm{Q}(G,J)=\frac{1}{n^2}\max_\rho \sum_{i,j=1}^n J_{i,j} G_{\rho(i),\rho(j)},
$$
where $\rho$ runs over all permutations of $[n]$. We speak of metric QAP\sindex[notions]{optimization problem!quadratic assignment problem!metric}, if the entries of $J$ are all non-negative with zeros on the diagonal, and obey the triangle inequality, and $d$-dimensional geometric QAP\sindex[notions]{optimization problem!quadratic assignment problem!$d$-dimensional geometric} if the rows and columns of $J$ can be embedded into a $d$-dimensional $L^p$ metric space so that distances of the images are equal to the entries of $J$.

The continuous analog of the problem is the following. Given the measurable functions $W, J\colon  [0,1]^2 \to \R$, we are interested in obtaining
$$
\hat{\mathrm{Q}}_\rho(W,J)=\int_{[0,1]^2} J(x,y)W(\rho(x),\rho(y)) \du x \du y, \qquad \hat{\mathrm{Q}}(W,J)=\max_\rho\hat{\mathrm{Q}}_\rho(W,J),
$$ \sindex[symbols]{q@$\hat{\mathrm{Q}}(W,J)$}
where $\rho$ in the previous formula runs over all measure preserving permutations of $[0,1]$.
In even greater generality we introduce the QAP with respect to fractional permutations of $[0,1]$. A fractional permutation $\mu$ is a probability kernel\sindex[notions]{probability kernel}, that is $\mu\colon [0,1] \times \mathcal L ([0,1]) \to [0,1]$ so that 
\begin{enumerate}[(i)]
	\item for any $A \in  \mathcal L([0,1])$ the function $\mu(.,A)$ is measurable, \item for any $x \in [0,1]$ the function $\mu(x,.)$ is a probability measure on $\mathcal L([0,1])$, and \item for any $A \in  \mathcal L([0,1])$ $\int_0^1 \du \mu(x,A)=\lambda(A)$. 	
\end{enumerate}
Here $\mathcal L([0,1])$ is the $\sigma$-algebra of the Borel sets of $[0,1]$.

Then we define  
$$
\mathrm{Q}_\mu(W,J)=\int_{[0,1]^2} \int_{[0,1]^2} J(\alpha,\beta) W(x,y) \du \mu(\alpha,x) \du \mu(\beta,y) \du \alpha \du \beta, 
$$
and
$$
\mathrm{Q}(W,J)=\max_\mu \mathrm{Q}_\mu(W,J),
$$
where the maximum runs over all fractional permutations. For each measure preserving permutation $\rho$ one can consider the fractional permutation $\mu$ with the probability measure $\mu(\alpha,.)$ is defined as the atomic measure $\delta_{\rho(\alpha)}$ concentrated on $\rho(\alpha)$, for this choice of $\mu$ we have $\mathrm{Q}_\rho(W,J)=\mathrm{Q}_\mu(W,J)$.

An $r$-dimensional generalization of the problem for $J$ and $W\colon  [0,1]^r \to \R$ is
\begin{align}
\mathrm{Q}(W,J) 
=\max_\mu \int_{[0,1]^r} \int_{[0,1]^r} J(\alpha_1,\dots,\alpha_r) W(x_1,\dots,x_r) \du \mu(\alpha_1,x_1) \dots \du \mu(\alpha_r,x_r) \du \alpha_1 \dots \du \alpha_r, \nonumber
\end{align}
where the maximum runs over all fractional permutations $\mu$ of $[0,1]$. The definition of the finitary case in $r$ dimensions is analogous.

A special QAP is the maximum acyclic subgraph problem (AC). Here we are given a weighted directed graph $G$ with vertex set of cardinality $n$, and our aim is to determine the maximum of the total value of edge weights of a subgraph of $G$ that contains no directed cycle. We can formalize this as follows. Let $G \in \R^{n \times n}$ be the input data, then the maximum acyclic subgraph density is
$$
\mathrm{AC}(G)=\frac{1}{n^2} \max_\rho \sum_{i,j=1}^n G_{i,j} \I(\rho(i) < \rho(j)),
$$\sindex[symbols]{a@$\mathrm{AC}(G)$}
where $\rho$ runs over all permutations of $[n]$.

This can be thought of as a QAP with the restriction that $J$ is the upper triangular $n \times n$ matrix with zeros on the diagonal and all nonzero entries being equal to $1$. However in general AC cannot be reformulated as metric QAP. The continuous version of the problem 
\begin{align*}
\hat{\mathrm{AC}}(W)=\sup_{\phi} \int_{[0,1^2]} \I(\phi(x) > \phi(y)) W(x,y) \du x\du y
\end{align*} for a function $W\colon [0,1]^2 \to \R$ is defined analogous to the QAP, where the supremum runs over measure preserving permutations $\phi\colon [0,1] \to [0,1]$, as well as the relaxation $\mathrm{AC}(W)$, where the supremum runs over probability kernels.

Both the QAP and the AC problems resemble the ground state energy problems that were investigated in previous parts of this paper. In fact, if the number of clusters of the distance matrix $J$ in the QAP is bounded from above by an integer that is independent from $n$, then this special QAP is a ground state energy with the number of states $q$ equal to the number of clusters of $J$. By the number of clusters we mean here the smallest number $m$ such that there exists an $m \times m$ matrix $J'$  so that $J$ is a blow-up of $J'$, that is not necessarily equitable. To establish an approximation to the solution of the QAP  we will only need the cluster condition approximately, and this will be shown in what follows.

\begin{definition}
	We call a  measurable function $J\colon [0,1]^r \to \R$ $\nu $-clustered \sindex[notions]{clustered naive graphon}for a non-increasing function $\nu\colon \R^+ \to \R^+$, if for any $\varepsilon>0$ there exists another measurable function $J'\colon [0,1]^r \to \R$ that is a step function with $\nu(\varepsilon)$ steps  and $\|J-J'\|_1 < \varepsilon \|J\|_\infty$.
\end{definition}
Note, that by the Weak Regularity Lemma (\cite{FK}), \Cref{ch3:weakreglemma}, any $J$ can be well approximated by a step function with $\nu(\varepsilon)=2^{\frac{1}{\varepsilon^2}}$ steps in the cut norm. To see why it is likely that this approximation will not be sufficient for our purposes, consider an arbitrary $J\colon [0,1]^r \to \R$. Suppose that we have an approximation in the cut norm of $J$ at hand denoted by $J'$. Define the probability kernel $\mu_0(\alpha,.)=\delta_\alpha$ and the naive $r$-kernel $W_0=J-J'$. In this case
$|\mathrm{Q}_{\mu_0}(W,J)-\mathrm{Q}_{\mu_0}(W,J')|=\|J-J'\|_2^2$. This $2$-norm is not granted to be small compared to $\varepsilon$ by any means.

In some special cases, for example if $J$ is a $d$-dimensional geometric array or the array corresponding to the AC, we are able to require bounds on the number of steps required for the $1$-norm approximation of $J$ that are sub-exponential in $\frac{1}{\varepsilon}$. By the aid of this fact we can achieve good approximation of the optimal value of the QAP via sampling. Next we state an application of \Cref{ch4:main} to the clustered QAP.
\begin{lemma} \label{ch4:qap}
	Let  $\nu: \R^+ \to \R^+$ be nondecreasing, and let $J\colon [0,1]^r \to \R$ be a $\nu$-clustered measurable function. Then  there exists an absolute constant $c>0$ so that for every $\varepsilon >0$, every naive $r$-kernel $W$, and $k \geq c \log(\frac{\nu(\varepsilon)^r}{\varepsilon})(\frac{\nu(\varepsilon)^r}{\varepsilon})^{4}$ we have
	$$
	\PPP(|\mathrm{Q}(W,J)-\mathrm{Q}(\G(k,W),\G'(k,J))| \geq \varepsilon \|W\|_\infty \|J\|_\infty) \leq  \varepsilon,
	$$
	where $\G(k,W)$ and $\G'(k,J)$ are generated by distinct independent samples.
\end{lemma}
\begin{proof}
	Without loss of generality we may assume that $\|J\|_\infty \leq 1$. First we show that under the cluster condition we can introduce a microcanonical ground state energy problem whose optimum is close to $\mathrm{Q}(W,J)$, and the same holds for the sampled problem. Let $\varepsilon>0$ be arbitrary and $J'$ be an approximating step function with $q=\nu(\varepsilon)$ steps. We may assume that $\|J'\|_\infty\leq 1$. We set $\na=(a_1, \dots , a_q)$ to be the vector of the sizes of the steps of $J'$, and construct from $J'$ a real $r$-array of size $q$ in the natural way by 
	associating to each class of the steps of $J'$ an element of $[q]$ (indexes should respect $\na$), and set the entries of the $r$-array corresponding to the value of the respective step of $J'$. We will call  the resulting $r$-array $J''$. From the definitions it follows that
	$$
	\mathrm{Q}(W,J')=\EEE_\na(W,J'')
	$$
	for every $r$-kernel $W$.
	On the other hand we have 
	\begin{align*}
	|\mathrm{Q}(W,J)-&\mathrm{Q}(W,J')| \\ &\leq \max_\mu \left|\mathrm{Q}_\mu(W,J)-\mathrm{Q}_\mu(W,J')\right|  \\
	&=\max_\mu \left|\int_{[0,1]^r} (J-J')(\alpha_1, \dots, \alpha_r) \int_{[0,1]^r} W(x) \du \mu(\alpha_1,x_1) \dots \du \mu(\alpha_r,x_r) \du \alpha_1 \dots \du \alpha_r \right| \\
	&\leq \max_\mu \int_{[0,1]^r} |(J-J')(\alpha_1, \dots, \alpha_r)| \|W\|_\infty \du \alpha_1 \dots \du \alpha_r \\ &=\|J-J'\|_1 \|W\|_\infty \leq \varepsilon \|W\|_\infty.
	\end{align*}
	
	Now we proceed to the sampled version of the optimization problem. First we gain control over  the difference between the QAPs corresponding to $J$ and $J'$. $\G(k,W)$ is induced by the sample $U_1,\dots,U_k$, $\G'(k,J)$ and $\G'(k,J')$ by the distinct independent sample $Y_1, \dots,Y_k$.
	\begin{align}
	&|\mathrm{Q}(\G(k,W),\G(k,J))-\mathrm{Q}(\G(k,W),\G(k,J'))| \nonumber \\ & \qquad \leq \max_\rho |\mathrm{Q}_\rho(\G(k,W),\G(k,J))-\mathrm{Q}_\rho(\G(k,W),\G(k,J'))| \nonumber \\
	&\qquad =\max_\rho \frac{1}{k^r}\left|\sum_{i_1, \dots,i_r=1}^k (J-J')(Y_{i_1},\dots,Y_{i_r})\right| \|W\|_\infty. \label{ch4:qapeq}
	\end{align}
	We analyze the random sum on the right hand side of (\ref{ch4:qapeq}) by first upper bounding its expectation.
	\begin{align*}
	&\frac{1}{k^r}\E_Y \left|\sum_{i_1, \dots,i_r=1}^k (J-J')(Y_{i_1},\dots,Y_{i_r}) \right| \\
	&\qquad \leq \frac{r^2}{k} \left[\|J\|_\infty+\|J'\|_\infty\right] + \E_Y |(J-J')(Y_1,\dots,Y_r)| \\ &\qquad =\frac{2r^2}{k}+ \varepsilon \leq 2\varepsilon.
	\end{align*}
	By the Azuma-Hoeffding inequality the sum is also sufficiently small in probability.
	$$
	\PPP\left(\frac{1}{k^r} \left|\sum_{i_1, \dots,i_r=1}^k (J-J')(Y_{i_1},\dots,Y_{i_r}) \right| \geq 4 \varepsilon \right) \leq 2\exp(-\varepsilon^2k/8)\leq \varepsilon.
	$$
	We obtain that
	$$
	\left|\mathrm{Q}(\G(k,W),\G'(k,J))-\mathrm{Q}(\G(k,W),\G'(k,J'))\right| \leq 4\varepsilon \|W\|_\infty
	$$
	with probability at least $1-\varepsilon$, if $k$ is such as in the statement of the lemma. Set $\nb=(b_1,\dots, b_q)$ to be the probability distribution for that $b_i=\frac{1}{k}\sum_{j=1}^k \I_{S_i}(Y_j)$, where $S_i$ is the $i$th step of $J'$ with $\lambda(S_i)=a_i$. Then we have 
	$$
	\mathrm{Q}(\G(k,W),\G(k,J'))=\hat \EEE_\nb(\G(k,W),J'').
	$$
	It follows again from the Azuma-Hoeffding inequality that we have $\PPP(|a_i-b_i|>\varepsilon/q) \leq 2 \exp(-\frac{\varepsilon^2k}{2q^2})$ for each $i \in [q]$, thus we have 
	$\|\na-\nb\|_1<\varepsilon$ with probability at least $1-\varepsilon$. We can conclude that with probability at least $1-2\varepsilon$ we have
	\begin{align*}
	&|\mathrm{Q}(W,J)-\mathrm{Q}(\G(k,W),\G(k,J))|   \leq |\mathrm{Q}(W,J)-\mathrm{Q}(W,J')|  \\ &\qquad \qquad  + |\EEE_\na(W,J'') -\hat \EEE_\nb(\G(k,W),J'')|  +|\mathrm{Q}(\G(k,W),\G(k,J))-\mathrm{Q}(\G(k,W),\G(k,J'))| \\
	& \qquad \qquad \qquad \qquad \qquad \qquad \qquad \leq (5+2r) \varepsilon \|W\|_\infty + |\EEE_\na(W,J'') -\hat \EEE_\na(\G(k,W),J'')|.
	\end{align*}
	By the application of \Cref{ch4:micro} the claim of the lemma is verified.
	
\end{proof}
Next we present the application of \Cref{ch4:qap} for two special cases of QAP. 
\begin{corollary} \label{ch4:qap2}
	The optimal values of the $d$-dimensional geometric QAP and the maximum acyclic subgraph problem are efficiently testable. That is, let $d\geq 1$, for every $\varepsilon>0$ there exists an integer $k_0=k_0(\varepsilon)$ such that $k_0$ is a polynomial in $1/\varepsilon$, and for every $k \geq  k_0$ and any $d$-dimensional geometric QAP given by the pair $(G,J)$ we have
	\begin{align}
	\PPP(|\mathrm{Q}(G,J)-\mathrm{Q}(\G(k,G),\G'(k,J))| \geq \varepsilon \|G\|_\infty \|J\|_\infty) \leq  \varepsilon,\end{align}
	where $\G(k,W)$ and $\G'(k,J)$ are generated by distinct independent samples. 
	The formulation regarding the testability of the maximum acyclic subgraph problem is analogous.
\end{corollary}
Note that testability here is meant in the sense of the statement of \Cref{ch4:qap}, since the size of $J$ is not fixed and depends on $G$.
\begin{proof}
	In the light of \Cref{ch4:qap} it suffices to show that for both cases any feasible $J$ is  $\nu$-clustered, where $\nu(\varepsilon)$ is polynomial in $1/\varepsilon$. For both settings we have $r=2$.
	
	We start with the continuous version of the $d$-dimensional geometric QAP given by the measurable function $J\colon [0,1]^2 \to \R^+$, and an instance is given by the pair $(W,J)$, where $W$ is a $2$-kernel. Note, that $d$ refers to the dimension corresponding to the embedding of the indices of $J$ into an $L^p$ metric space, not the actual dimension of $J$. We are free to assume that $0 \leq J \leq 1$, simply by rescaling. By definition, there exists a measurable embedding $\rho\colon  [0,1] \to [0,1]^d$, so that $J(i,j)=\|\rho(i)-\rho(j)\|_p$ for every $(i,j) \in [0,1]^2$. Fix $\varepsilon >0$ and consider the partition $\P'=(T_1,\dots, T_\beta)=([0,\frac{1}{\beta}),[\frac{1}{\beta}, \frac{2}{\beta}), \dots,[\frac{\beta-1}{\beta},1])$ of the unit interval into $\beta=\lceil \frac{2 \sqrt[p]{d}}{\varepsilon}\rceil$ classes. Define the partition $\P=(P_1, \dots, P_q)$ of $[0,1]$ consisting of the classes $\rho^{-1}(T_{i_1} \times \dots \times T_{i_d})$ for each $(i_1, \dots, i_d) \in [\beta]^r$, where $|\P|=q=\beta^d=\frac{2^d d^{d/p}}{\varepsilon^d}$. We construct the approximating step function $J'$ of $J$ by averaging $J$ on the steps determined by the partition classes of $\P$. It remains to show that this indeed is a sufficient approximation in the $L^1$-norm.
	$$
	\|J-J'\|_1=\int_{[0,1]^2} |J(x)-J'(x)|\du x= \sum_{i, j=1}^q \int_{P_{i} \times P_{j}} |J(x)-J'(x)|\du x \leq \sum_{i, j}^q \frac{1}{q^2} \varepsilon = \varepsilon.
	$$
	By \Cref{ch4:qap} and \Cref{ch4:micro} it follows that the continuous $d$-dimensional metric QAP is $\mathcal{O}(\log(\frac{1}{\varepsilon})\frac{1}{\varepsilon^{4rd+4}})$-testable, and so is the discrete version of it.
	
	Next we show that the AC is also efficiently testable given by the upper triangular matrix $J$ whose entries above the diagonal are $1$. Note that here we have $r=2$. Fix $\varepsilon>0$ and consider the partition $\P=(P_1, \dots, P_q)$ with $q=\frac{1}{\varepsilon}$, and set $J'$ to $0$ on every step $P_i \times P_j$ whenever $i \geq j$, and to $1$ otherwise. This function is indeed approximating $J$ in the $L^1$-norm.
	$$
	\|J-J'\|_1=\int_{[0,1]^2} |J(x)-J'(x)|\du x= \sum_{i=1}^q \int_{P_{i} \times P_{i}} |J(x)-J'(x)|\du x \leq  \varepsilon.
	$$
	Again, by \Cref{ch4:qap} and \Cref{ch4:micro} it follows that the AC is $\mathcal{O}(\log(\frac{1}{\varepsilon})\frac{1}{\varepsilon^{12}})$-testable.
	
\end{proof}

\section{Further Research}\label{sec:fr}
Our framework based on exchangeability principles allows us to extend the notion of a limit to the case of unbounded hypergraphs and efficient testability of ground state energies in this setting.
The notion of exchangeability is crucial here.
The notion of efficient testability in an unbounded case could be of independent interest, perhaps the results on ground state energy carry through for the setting when the $r$-graphons (induced by $r$-graphs) are in an $L^p$ space for some $p \geq 1$. \\

Another problem is to characterize  more precisely the class of problems which are efficiently parameter testable as opposed to the hard ones. Improving the bounds in $1/\varepsilon$ for the efficiently testable problems is also a worthwhile question.

\section*{Acknowledgement}

We thank Jennifer Chayes, Christian Borgs and Tim Austin for a number of interesting and stimulating discussions and the relevant new ideas in the early stages of this research.

\bibliographystyle{notplainnat}
\bibliography{thesis_refs}

\end{document}